\newcommand{\href}[1]{#1} 
\DeclareFontFamily{OT1}{rsfs}{}
\DeclareFontShape{OT1}{rsfs}{m}{n}{<5> rsfs5 <7> rsfs7 <10> rsfs10}{}
\DeclareSymbolFont{mathrsfs}{OT1}{rsfs}{m}{n}
\DeclareSymbolFontAlphabet{\mathrsfs}{mathrsfs}
\newcommand{\h}{\hat{H}} 
\newcommand{\U}{\hat{U}}
\newcommand{\Hcl}{\mathrsfs{H}} 
\newcommand{\Pcl}{\boldsymbol{\mathrsfs{P}}} 
\renewcommand{\P}{\mathbf{p}}
\newcommand{\R}{\mathbf{r}}
\newcommand{\K}{\mathbf{k}}
\newcommand{\A}{\mathbf{A}}
\newcommand*{\bra}[1]{\left\langle{#1}\right|} 
\newcommand*{\ket}[1]{\left| #1 \right\rangle} 
\renewcommand{\Im}{{\rm Im}\,}
\renewcommand{\Re}{{\rm Re}\,}
\newcommand{\arccosh}{ \mathrm{arccosh}\,}
\newcommand{\arcsinh}{ \mathrm{arcsinh}\,}
\newcommand*{\supp}{{\rm supp}\,}
\newcommand{\kk}{{k_{\parallel}}}
\newcommand{\pp}{{p_{\parallel}}}
\renewcommand{\k}{{k_{\perp}}}
\newcommand{\p}{p_{\perp}}
\newtheorem{theorem}{Theorem}
\newtheorem{corollary}{Corollary}
\let\origdoublepage\cleardoublepage
\newcommand{\clearemptydoublepage}{%
  \clearpage{\pagestyle{empty}\origdoublepage}}
\let\cleardoublepage\clearemptydoublepage
\begin{document}

\pagestyle{empty}
\pagenumbering{roman}

\begin{titlepage}
        \begin{center}
        \vspace*{1.0cm}

        \Huge
        {\bf Applications of Adiabatic Approximation to One- and Two-electron Phenomena in Strong Laser Fields}

        \vspace*{1.0cm}

        \normalsize
        by \\

        \vspace*{1.0cm}

        \Large
        Denys Bondar \\

        \vspace*{3.0cm}

        \normalsize
        A thesis \\
        presented to the University of Waterloo \\ 
        in fulfillment of the \\
        thesis requirement for the degree of \\
        Doctor of Philosophy  \\
        in \\
        Physics \\

        \vspace*{2.0cm}

        Waterloo, Ontario, Canada, 2010 \\

        \vspace*{1.0cm}

        \copyright\ Denys Bondar 2010 \\
        \end{center}
\end{titlepage}

\pagestyle{plain}
\setcounter{page}{2}

\cleardoublepage 

  \noindent
I hereby declare that I am the sole author of this thesis. This is a true copy of the thesis, including any required final revisions, as accepted by my examiners.

  \bigskip
  
  \noindent
I understand that my thesis may be made electronically available to the public.

\cleardoublepage


\begin{center}\textbf{Abstract}\end{center}

The adiabatic approximation is a natural approach for the description of phenomena induced by low frequency laser radiation because the ratio of the laser frequency to the characteristic frequency of an atom or a molecule is a small parameter. Since the main aim of this work is the study of ionization phenomena, the version of the adiabatic approximation that can account for the transition from a bound state to the continuum must be employed. Despite much work in this topic, a universally accepted adiabatic approach of bound-free transitions is lacking. Hence, based on Savichev's modified adiabatic approximation [Sov. Phys. JETP {\bf 73}, 803 (1991)], we first of all derive the most convenient form of the adiabatic approximation for the problems at hand. Connections of the obtained result with the quasiclassical approximation and other previous investigations are discussed. Then, such an adiabatic approximation is applied to single-electron ionization and non-sequential double ionization of atoms in a strong low frequency laser field.

The momentum distribution of photoelectrons induced by single-electron ionization is obtained analytically without any assumptions on the momentum of the electrons. Previous known results are derived as special cases of this general momentum distribution. 

The correlated momentum distribution of two-electrons due to non-sequential double ionization of atoms is calculated semi-analytically. We focus on the deeply quantum regime -- the below intensity threshold regime, where the energy of the active electron driven by the laser field is insufficient to collisionally ionize the parent ion, and the assistance of the laser field is required to create a doubly charged ion. A special attention is paid to the role of Coulomb interactions in the process. The signatures of electron-electron repulsion, electron-core attraction, and electron-laser interaction are identified. The results are compared with available experimental data.

Two-electron correlated spectra of non-sequential double ionization below intensity threshold are known to exhibit back-to-back scattering of the electrons, viz., the anticorrelation of the electrons. Currently, the widely accepted interpretation of the anticorrelation is recollision-induced excitation of the ion plus subsequent field ionization of the second electron. We argue that there exists another mechanism, namely simultaneous electron emission, when the time of return of the rescattered electron is equal to the time of liberation of the bounded electron (the ion has no time for excitation), that can also explain the anticorrelation of the electrons in the deep below intensity threshold regime. 

Finally, we study single-electron molecular ionization. Based on the geometrical approach to tunnelling by P. D. Hislop and I. M. Sigal [Memoir. AMS 78, No. 399 (1989)], we introduce the concept of a leading tunnelling trajectory. It is then proven that leading tunnelling trajectories for single active electron models of molecular tunnelling ionization (i.e., theories where a molecular potential is modelled by a single-electron multi-centre potential) are linear in the case of short range interactions and ``almost'' linear in the case of long range interactions. The results are presented on both the formal and physically intuitive levels. Physical implications of the proven statements are discussed. 

\cleardoublepage


\begin{center}\textbf{Acknowledgements}\end{center}

First and foremost, I am thankful to my Mom, Ganna~V.~Bondar, and my Dad, Ivan~I.~Bondar, for their infinite support, without which nothing could have been done. 

I thank my art mentor, Oleksandr~V.~Sydoruk, who has inculcated me with the 	joy and love of creation of all types: art, music, science, {\it etc.} He also introduced me to the legacy of Leonardo da Vinci, whose immense multifarious talents have been the source of inspiration to seek knowledge. 

I am indebted to my friend, Robert R. Lompay, for being my physics mentor and for encouraging me to pursue doctoral studies.  His ideas regarding what physics is and how it ought to be done have had a tremendous impact on my personality. 

I am grateful to my supervisors, Misha~Yu.~Ivanov and Wing-Ki~Liu, for a wonderful opportunity not only to do research, but also to live in the best country in the world -- Canada! Moreover, they gave me the absolute freedom  to explore and supported me working on many other projects unrelated to this thesis. 

Misha's deep physical intuition has always  fascinated me. A special acknowledgment to Misha for teaching me to think physically as well as elucidating  the difference between physics and the tools employed in physics. He also has given me a cornerstone opportunity to work at the Theory and Computation Group of National Research Council of Canada in Ottawa, where I have met many wonderful people.  

Many thanks to Michael~Spanner for endless discussions of physics, encouragement, and substituting Misha when he left Ottawa; Ryan~Murray for discussions ranging from physics to finance and many valuable ideas; Serguei~Patchkovskii for his peculiar mentorship that has turned out to be very fruitful for me; Gennady~L.~Yudin for important and interesting collaborations and fervent discussions; Olga~Smirnova for masterly teaching the Lipmann-Schwinger equation and the S-matrix technique -- two theoretical pillars of strong field physics; Zi-Jian~Long for always being there whenever I needed help. 

I am also thankful to Michael, Serguei, and Ryan for training me in numerical methods.

Last but not least, I am obliged to my dear wife, Gopika~K.~Sreenilayam, for her motivation and patience when I was working during a countless number of evenings and weekends. 

This research was financially supported by the Ontario Graduate Scholarship program. 

\cleardoublepage


\begin{center}\textbf{Dedication}\end{center}

\begin{center} To my parents and my wife \end{center}

\cleardoublepage

\renewcommand{\contentsname}{Table of Contents}
\renewcommand{\acronymname}{List of Acronyms}

\tableofcontents
\cleardoublepage


\listoffigures
\addcontentsline{toc}{chapter}{List of Figures}
\cleardoublepage

\newacronym{SFA}{SFA}{strong field approximation}
\newacronym{NSDI}{NSDI}{nonsequential double ionization}
\newacronym{PPT}{PPT}{Perelomov-Popov-Terent'ev  \cite{Perelomov_1966, Perelomov_1967_A, Perelomov_1967_B, Perelomov_1968}}
\newacronym{SF-EVA}{SF-EVA}{strong-field eikonal-Volkov
approach \cite{Smirnova_2008}}
\newacronym{BIT}{BIT}{below intensity threshold}
\newacronym{SEE}{SEE}{simultaneous electron emission}
\newacronym{RESI}{RESI}{recollision-induced excitation of the ion plus subsequent field ionization of the second electron}

\printglossaries
\cleardoublepage


\pagenumbering{arabic}

\chapter{Introduction}\label{chapter1}

In this thesis we explore one- and two-electron ionization phenomena in a strong laser field. Since the ratio $\omega/\omega_{at} \sim 0.1$, where $\omega$ is, say, the frequency of Ti:sapphire lasers -- most commonly used tabletop laser systems in laboratories around the world \cite{Paschotta2008}, and  $\omega_{at}$ is the characteristic atomic frequency, the adiabatic approximation ought to be a perfectly suitable theoretical tool for the description of phenomena induced by such low-frequency laser fields.

Roughly speaking, the adiabatic approximation can be introduced as follows. Once the frequency of the external laser field is much lower than the characteristic atomic frequency, $\omega \ll \omega_{at}$, an approximate solution of the Schr\"{o}dinger equation can be found by means of averaging over atomic (internal) degrees of freedom. Therefore, the adiabatic approximation is a method of constructing an asymptotic expansion of the solution of the non-stationary Schr\"{o}dinger equation in terms of the small parameter $\omega/\omega_{at}$.

Born and Fock \cite{Born1928} founded the theory of the adiabatic approximation for a discrete spectrum by formulating the adiabatic theorem. Landau \cite{Landau1932, Landau1932a} estimated the probability of nonadiabatic transitions between discreet states. However, the leading-order asymptotic result for such a quantity was obtained by Dykhne \cite{Dykhne_1962}. Afterwards,  nonadiabatic transitions between discreet states were thoroughly analyzed by many authors \cite{Davis_1976, Solovev1976a, Schilling_2006} (for reviews see References \cite{ENikitin1984, Hiroki2002, Osherov2004}). Summarizing research in this area, one may conclude that nonadiabatic transitions in discreet spectra are quite well studied. 

However, transitions from a discreet state to  a continuum have been the subject of on-going investigations for many decades. Despite much work on this topic, a universally accepted approach is lacking. The direct generalization of the Dykhne method was developed by Chaplik \cite{Chaplik_1964, Chaplik1965a}. Exactly solvable models were reported by Demkov and Oserov \cite{Demkov1966, DEMKOV1968},  
Ostrovskii \cite{Ostrovskii1972}, Ostrovskii {\it et al.} \cite{Demkov2001, Volkov2004}, and Nikitin \cite{Nikitin1962a, Nikitin1962} (for review see, e.g., Reference \cite{Nikitin1970}). The advanced adiabatic approach was introduced and widely employed by Solov'ev \cite{Solovev1976, Solovev1989, Solovev2005}. Finally, Tolstikhin recently developed a promising version of the adiabatic approximation for the transitions to the continuum \cite{Tolstikhin2008a}  by employing the Siegert-state expansion for nonstationary quantum systems \cite{Tolstikhin2006a, Tolstikhin2006, Tolstikhin2008}. Yet, the approach presented in Reference \cite{Tolstikhin2008a} is limited to finite range potentials. 

The adiabatic approximation deserves special attention because it is one of a few known non-perturbative approaches in quantum mechanics. Strong field physics demands for such methods to describe nonlinear phenomena induced by a strong laser field. 

In Chapter \ref{chapter2}, we derive the amplitude of nonadiabatic transitions from a bound state to the continuum within the Savichev modified adiabatic approximation \cite{Savichev1991a}. Then, the properties of this amplitude are scrutinized, and a connection between the adiabatic and quasiclassical approximations is established. The obtained amplitude is the main theoretical formalism of this work, which is applied to strong field one- and two-electron processes in subsequent chapters.

The most common process that occurs when laser radiation is exerted on an atom is single-electron ionization. Although initial theoretical understanding of strong field ionization was put forth by Keldysh \cite{Keldysh_1965} as early as 1964, many questions remain unresolved.  As far as single-electron ionization in the presence of a linearly polarized laser field is concerned, there are two important topics. The first, namely the ionization rate as a function of instantaneous laser phase, was studied in depth by Yudin and Ivanov \cite{Yudin_2001B} (see also References \cite{Kienberger_2007,Uiberacker_2007}).  However, their result assumes zero initial momentum of the liberated electron. Effects due to nonzero initial momentum have yet to be included.  The second topic pertains to the single-electron spectra, that is, the ionization rate as a function of the final momentum of the electron.  Despite much work on this topic (see, for example, References \cite{Krainov_2003, Popov2004, Goreslavskii_2005} and references therein) a universal formula is absent and the discussion is still ongoing.  Among the most accurate results, Goreslavskii {\it et al.} \cite{Goreslavskii_2005} have obtained an expression for the complete single-electron ionization spectrum, but without consideration of the laser phase. In Chapter \ref{chapter3}, we derive a more general formula that includes the dependences on both the instantaneous laser phase and the final electron momentum.

In strong low-frequency laser fields, following one-electron ionization
of an atom or a molecule, the liberated electron can recollide
with the parent ion \cite{Kuchiev_1987, Corkum_1993}. The electron
acts as an ``atomic antenna'' \cite{Kuchiev_1987}, absorbing the
energy from the laser field between ionization and recollision and
depositing it into the parent ion. Inelastic scattering on the
parent ion results in further collisional excitation and/or
ionization. Liberation of the second electron during the
recollision -- the laser-induced e-2e process -- is known as \gls{NSDI}. 

The phenomenon of \gls{NSDI} was experimentally discovered by Suran and Zapesochny \cite{Suran1975} for alkaline-earth atoms
(for further experimental investigations of \gls{NSDI} for alkaline-earth atoms, see, e.g., References  \cite{Bondar1993, Bondar1998, Bondar2000, Liontos2004, Liontos2008, Liontos2010}). In this case, autoionizing double excitations below the second ionization threshold were shown to be extremely important. For a theoretical study of these effects, see, e.g., Reference \cite{Lambropoulos1988}. For noble gas atoms, nonsequential double ionization was first observed by L'Huillier {\it et al.} (see, e.g., References \cite{LHuillier1982, LHuillier1983}). The interest to the phenomenon of \gls{NSDI} grew rapidly after it was rediscovered in 1993-1994 \cite{Walker1993, Walker1994}. Recently,
correlated multiple ionization has also been observed
\cite{Rudenko_2004, Zrost_2006}. The renewed interest in \gls{NSDI} has been enhanced by the availability of new experimental techniques that allow one to perform accurate measurements of the angle- and energy-resolved spectra of the photoelectrons, in coincidence. Such measurements play a crucial role in elucidating the physical mechanisms of the \gls{NSDI}. 

From the theoretical perspective, direct {\it ab initio} simulations of
the photoelectron spectra corresponding to \gls{NSDI} in intense low-frequency laser
fields represent a major challenge. Only now such benchmark
simulations have become possible \cite{Taylor2007} for
the typical experimental conditions (the helium atom, laser
intensity $I\sim 10^{15}$ W/cm$^2$, laser wavelength $\lambda =
800$ nm). In addition to these calculations, tremendous insight into the physics of the problem
has been obtained from classical simulations performed in References
\cite{Panfili_2002, Ho_2005, Phay2005, Ho2006, Haan2006, Haan2008}. These papers have demonstrated a  variety of the
regimes of nonsequential double and triple ionization. Not only
do these simulations reproduce key features observed in the
experiment, they also give a clear view of the (classical)
interplay between the two electrons, the potentials of the laser
field, and of the ionic core. They also show how different types of
the correlated motion of the two electrons contribute to different
parts of the correlated two-electron spectra.

The physics of double ionization is different for
different intensity regimes, separated by the ratio of the energy
of the recolliding electron to the binding (or excitation) energy
of the second electron, bound in the ion.

According to classical considerations, the maximum energy which the recolliding electron can acquire from the laser field is $\sim 3.2 U_p$ \cite{Corkum_1993}, where
$U_p=(F/2\omega)^2$, $F$ is the laser field strength, and $\omega$
is the laser frequency (unless stated otherwise atomic units, $\hbar=m_e=|e|=1$, are used throughout the work). Hence, \gls{NSDI} can be divided into two types: 
if the intensity of the driving laser field is such that the recolliding electron gains enough kinetic energy to collisionally ionize the parent ion -- the above intensity threshold regime, and when such kinetic energy is insufficient to directly ionize the ion -- the \gls{BIT} regime. The former regime is thoroughly studied experimentally as well as theoretically (see, e.g., References \cite{Lein2000, Yudin_2001A, deMorissonFaria2003, deMorissonFaria2004, deMorissonFaria2004B, deMorissonFaria2005, Becker_2005, Liu_2006,  Rudenko_2007,
 Staudte_2007, deMorissonFaria2008b, deMorissonFaria2008a} and references therein).

\gls{NSDI} \gls{BIT}, being a most challenging regime, is currently of an active experimental interest \cite{Eremina2003, Rudenko_2004, Weckenbrock_2004, Zeidler_2005, Zrost_2006, Liu_2008, Liu2010}. In this regime, existing classical and quantum analysis
(see, e.g., References \cite{Haan2006, Ho2006, deMorissonFaria2006, Haan2008a, Emmanouilidou2009, Haan2010, Shaaran2010, Shaaran2010a, Ye2010}) demonstrates two possibilities of electron ejection
after the recollision. First, the two electrons can be ejected
with little time delay compared to the quarter-cycle of the
driving field. Second, the time delay between the ejection of the
first and the second electron can approach or exceed the
quarter-cycle of the driving field. In these two cases, the
electrons appear in different quadrants of the correlated
spectrum. If, following the recollision, the electrons are ejected
nearly simultaneously, their parallel momenta have equal signs,
and both electrons are driven by the laser field in the same
direction toward the detector. If, following the recollision, the
electrons are ejected with a substantial delay (quarter-cycle or
more), they end up going in the opposite directions exhibiting the phenomenon of anticorrelation of the electrons. Thus, these
two types of dynamics leave distinctly different traces in the
correlated spectra.

In Chapters \ref{chapter4}, we develop a fully quantum, analytical treatment of
\gls{NSDI} \gls{BIT}. It is important that our approach takes into account
all relevant interactions -- those with the laser field, the ion,
and between the electrons -- nonperturbatively. The case in which the two electrons are ejected simultaneously, i.e., the process of \gls{SEE},  is considered. We show that in this case the correlated
spectra bear clear signatures of the electron-electron and
electron-ion interactions after ionization,  including the
interplay of these interactions. These signatures are identified. In agreement with previous studies, the mechanism of \gls{SEE} manifests itself in the correlation of the electrons -- the electrons are moving in the same direction after \gls{NSDI}.

However, in Chapters \ref{chapter5}, we demonstrate that if the intensity of the laser field is lowered such that we enter the deep \gls{BIT} regime of \gls{NSDI}, \gls{SEE} can be responsible for the anticorrelation of the electrons. This novel mechanism is alternative to the widely accepted point of view that the anticorrelation of the electrons are caused by \gls{RESI}. Nevertheless, SEE and RESI are by no means mutually exclusive processes; they both contribute to the complex and diverse phenomenon of \gls{NSDI} \gls{BIT}.

Recent advances in experimental investigations of single-electron molecular ionization in a low frequency strong laser field \cite{Talebpour1996, Guo1998, DeWitt2001, Wells2002, Litvinyuk2003, Pavivcic2007, Kumarappan2008, Staudte2009} have created a demand for a theory of this phenomenon \cite{MuthBohm2000, Tong2002, Awasthi2006, Vanne2008, Fabrikant2009, Usachenko2009, Bin2010, Fabrikant2010, Murray2010, Spanner2010, Walters2010}. As far as low frequency laser radiation is concerned, one can ignore the time-dependence of the laser and consider the corresponding static picture, which is obtained as $\omega\to 0$. In this limit, single electron molecular ionization is realized by quantum tunnelling. This approximation is valid from qualitative and quantitative points of view, and it tremendously simplifies the theoretical analysis of the problem at hand. Such single active electron approaches to molecular ionization, where an electron is assumed to interact with multiple centres that model the molecule and a static field that models the laser, are among most popular. Analytical and semi-analytical versions of these methods, which are based on the quasiclassical approximation \cite{Tong2002, Fabrikant2009, Bin2010, Fabrikant2010, Murray2010} are indeed quite successful in interpreting and explaining available experimental data. However, these quasiclassical theories heavily relay on the assumption that the electron tunnels along a straight trajectory. Despite its wide use, the reliability of this conjecture has not been verified.

In Chapter \ref{chapter6}, we study the reliability of this hypothesis. Relying on the geometrical approach to many-dimensional tunnelling by Hislop and Sigal \cite{Sigal1988, Sigal1988a, Hislop1989a, Hislop1996}, which is a mathematically rigorous reformulation of the instanton method, we first introduce the notion of leading tunnelling trajectories. Then, we analyze their shapes in the context of single active electron molecular tunnelling. It will be rigorously proven that the assumption of ``almost'' linearity of leading tunnelling trajectories is satisfied in almost all the situations of practical interest. Such results justify the above mentioned models and  open new ways of further development of quasiclassical approaches to molecular ionization.

\glsresetall
 
\chapter{Adiabatic Approximation}\label{chapter2}

\section{General Discussion}

Our investigations are based on a seminal result that ought to be summarized foremost. Following the Solov'ev advanced adiabatic approach \cite{Solovev1976, Solovev1989, Solovev2005}, Savichev \cite{Savichev1991a} proved the following. If the adiabatic state $\ket{\psi_i(\varphi)}$ and the corresponding adiabatic term $E_i(\varphi)$,
 \begin{eqnarray}\label{Ch2_AdiabaticTermStateFullH}
 \h(\varphi)\ket{\psi_i (\varphi)} = E_i(\varphi)\ket{\psi_i(\varphi)},
 \end{eqnarray}
are known, then the solution $\ket{\Psi(t)}$ of the nonstationary Schr\"{o}dinger equation  
\begin{equation}\label{Ch2_FullSchEq}
 i\partial \ket{\Psi(t)}/\partial t = \h(\varphi)\ket{\Psi(t)},
 \end{equation}
 where $\varphi = \omega t$ is a phase of the laser field,
subjected to the initial condition
\begin{eqnarray}\label{Ch2_Savichev_initial_condition}
\ket{\Psi(t)} \xrightarrow[t\to -\infty]{} \ket{\psi_i(\varphi_i(E_i))}e^{ - i\int^t E_i(\omega\tau)d\tau} [ 1+ O(\omega)],
\end{eqnarray}
has the following form within the adiabatic approximation ($\omega \ll 1$)
\begin{eqnarray}\label{Ch2_SavichevEq}
 \ket{\Psi(t)} &=& \frac 1{2\pi\omega}\iint dEd\varphi'  \, \ket{\psi_i (\varphi_i(E))}\times \nonumber\\
&& \exp\left[ \frac i{\omega}\left(E\varphi' - \int^{\varphi'}E_i(\varphi)d\varphi \right)-iEt\right][ 1 + O(\omega)].
\end{eqnarray}
Here $\varphi_i(E)$ is the analytical continuation of the inverse function of $E_i(\varphi)$. Note that no assumptions\footnote{
Besides some tacit requirements such as the analyticity of both the adiabatic terms and the states.
}
on a form of the Hamiltonian $\h$ are required to derive Equation (\ref{Ch2_SavichevEq}) from the nonstationary Schr\"{o}dinger equation (\ref{Ch2_FullSchEq}).

The integral over $E$ in Equation (\ref{Ch2_SavichevEq}) can be interpreted as a generalization of the Born-Fock expansion to the case of a system with a continuos spectrum. The integral over $\varphi'$ in Equation (\ref{Ch2_SavichevEq}) can be calculated by the saddle point approximation without changing the accuracy of Equation (\ref{Ch2_SavichevEq}); however, the original integral representation is more advantageous and should be left unaltered.  

For the sake of completeness and clarity, we shall present the derivation of Equation (\ref{Ch2_SavichevEq}). Let us seek the solution of the Schr\"{o}dinger equation (\ref{Ch2_FullSchEq}) in the form
\begin{eqnarray}\label{Ch2_Anzatse_for_Psi}
	 \ket{\Psi(t)} = \int dE \, g(E) \exp(-iEt) \ket{\psi_i (\varphi_i(E))},
\end{eqnarray}
where the unknown function $g(E)$ can be represented as
\begin{eqnarray}\label{Ch2_Anzatse_for_g}
	g(E) = \frac 1{2\pi} \int dt \, \exp(iEt) G(t, E), \qquad G(t,E) = \langle \psi_i (\varphi_i(E)) \ket{\Psi(t)},
\end{eqnarray}
where the normalization condition $\langle \psi_i (\varphi) \ket{\psi_i (\varphi)} = 1$ is assumed for almost all $\varphi$.
To derive the equation for $G(t, E)$, we recall that  $\ket{\Psi(t)}$ satisfies the Schr\"{o}dinger equation (\ref{Ch2_FullSchEq}); hence,
\begin{eqnarray}
	 && \bra{\psi_i (\varphi_i(E))}  i\frac{\partial}{\partial t} \ket{\Psi(t)} = \bra{\psi_i (\varphi_i(E))} \hat{H}(\varphi) \ket{\Psi(t)}, \nonumber\\
	  && i\frac{\partial}{\partial t} G(t, E) = \bra{\psi_i (\varphi_i(E))}  i\frac{\partial}{\partial t} \ket{\Psi(t)}. \label{Ch2_Eq_for_G}
\end{eqnarray}
Having substituted Equations (\ref{Ch2_Anzatse_for_Psi}) and (\ref{Ch2_Anzatse_for_g}) into the system of equations (\ref{Ch2_Eq_for_G}), we get
\begin{eqnarray}
	&& \iint dE' dt' \, \exp[ iE'(t'-t) ] G(t', E') E' \langle \psi_i (\varphi_i(E)) \ket{ \psi_i (\varphi_i(E')) }  \nonumber\\
	&& = \iint dE' dt' \, \exp[ iE'(t'-t) ] G(t', E') \bra{\psi_i (\varphi_i(E))} \hat{H}(\varphi) \ket{ \psi_i (\varphi_i(E')) }, \nonumber\\
	&& i \frac{\partial}{\partial t} G(t, E) = \frac{1}{2\pi} \iint dE' dt' \, \exp[iE'(t'-t)] G(t', E') E' \langle \psi_i (\varphi_i(E)) \ket{ \psi_i (\varphi_i(E')) }. \nonumber
\end{eqnarray}
Introducing the variables $\varphi = \omega t$ and $\varphi' = \omega t'$, we have 
\begin{eqnarray}
	&& \iint dE' d\varphi' \, \exp\left[ \frac{i}{\omega} E'(\varphi'-\varphi) \right] G(\varphi', E') E' \langle \psi_i (\varphi_i(E)) \ket{ \psi_i (\varphi_i(E')) }  \nonumber\\
	&& = \iint dE' d\varphi' \, \exp\left[ \frac{i}{\omega} E'(\varphi'-\varphi) \right] G(\varphi', E') \bra{\psi_i (\varphi_i(E))} \hat{H}(\varphi) \ket{ \psi_i (\varphi_i(E')) }, \nonumber\\
	&& i \omega^2 \frac{\partial}{\partial \varphi} G(\varphi, E) 
		= \frac{1}{2\pi} \iint dE' d\varphi' \, \exp\left[\frac{i}{\omega} E'(\varphi'-\varphi)\right] G(\varphi', E') E' \langle \psi_i (\varphi_i(E)) \ket{ \psi_i (\varphi_i(E')) }. \nonumber
\end{eqnarray}
Substituting the following ansatz into the equations above 
\begin{eqnarray}\label{Ch2_saddlepoint_anzatse_for_G}
	G(\varphi, E) = \left[ X_0 (E, \varphi) + \omega X_1(E, \varphi) + \cdots \right] \exp\left[ -\frac{i}{\omega} \int^{\varphi} Q(\varphi') d\varphi' \right],
\end{eqnarray}
then performing integration over $E'$ and $t'$ by means of the saddle point approximation and collecting terms in front of the zeroth power of $\omega$, we obtain
\begin{eqnarray}
	&& \langle \, \psi_i (\varphi_i(E)) \, \ket{ \, \psi_i (\varphi_i(Q(\varphi))) \,} Q(\varphi) = \bra{\, \psi_i (\varphi_i(E)) \,} \hat{H}(\varphi) \ket{ \, \psi_i (\varphi_i(Q(\varphi))) \,}, \nonumber\\
	&& X_0 (E, \varphi) = X_0 (Q(\varphi), \varphi) \langle \, \psi_i (\varphi_i(E)) \, \ket{ \, \psi_i (\varphi_i(Q(\varphi))) \,}.
\end{eqnarray}
Whence, 
\begin{eqnarray}\label{Ch2_expressions_for_Q_and_X0}
	Q(\varphi) = E_i (\varphi), \quad X_0 (E, \varphi) = \langle \, \psi_i (\varphi_i(E)) \, \ket{ \, \psi_i (\varphi_i(E_i(\varphi))) \,} 
		= \langle \psi_i (\varphi_i(E)) \ket{ \psi_i (\varphi) }.
\end{eqnarray}
Substituting Equations (\ref{Ch2_saddlepoint_anzatse_for_G}) and (\ref{Ch2_expressions_for_Q_and_X0}) into Equation (\ref{Ch2_Anzatse_for_g}), we have
\begin{eqnarray}\label{Ch2_almost_final_g}
	g(E) = \frac{1}{2\pi \omega} \int d\varphi \, \langle \psi_i (\varphi_i(E)) \ket{ \psi_i (\varphi) }
			\exp\left[ \frac{i}{\omega} \left( E\varphi - \int^{\varphi} E_i(\varphi') d\varphi' \right) \right] \left[ 1 + O(\omega) \right].
\end{eqnarray}
According to the saddle point approximation, the only neighbourhoods of importance in the integral (\ref{Ch2_almost_final_g}) are those where the derivative of the exponent with respect to $\varphi$ vanishes. In these neighbourhoods the matrix element $\langle \psi_i (\varphi_i(E)) \ket{ \psi_i (\varphi) } = 1$; therefore, the integral (\ref{Ch2_almost_final_g}) is equivalent, up to a term of order of $\omega$, to the expression
\begin{eqnarray}
	g(E) = \frac{1}{2\pi \omega} \int d\varphi \,  \exp\left[ \frac{i}{\omega} \left( E\varphi - \int^{\varphi} E_i(\varphi') d\varphi' \right) \right] \left[ 1 + O(\omega) \right].
\end{eqnarray}
Recalling Equation (\ref{Ch2_Anzatse_for_Psi}), we conclude that Equation (\ref{Ch2_SavichevEq}) is finally derived. Note that a proper choice of complex integration paths for the integrals in Equation (\ref{Ch2_SavichevEq}) will ensure that the wave function (\ref{Ch2_SavichevEq}) indeed satisfies the initial condition (\ref{Ch2_Savichev_initial_condition}). (A more detailed version of the derivation of Equation (\ref{Ch2_SavichevEq}) presented above can be found in Reference \cite{Savichev1991a}.)

However, one must be cautious regarding Equation (\ref{Ch2_SavichevEq}). As far as rigorous asymptotic analysis is concerned, it is incorrect to assume that the wave function (\ref{Ch2_SavichevEq}) obeys the Schr\"{o}dinger equation (\ref{Ch2_FullSchEq}) even though Equation (\ref{Ch2_SavichevEq}) was obtained from Equation (\ref{Ch2_FullSchEq}). On the contrary, the validity of the solution (\ref{Ch2_SavichevEq}) must be verified independently.

Employing the equality 
$$
\h(\varphi_i(E)) \ket{\psi_i (\varphi_i(E))} = E_i(\varphi_i(E)) \ket{\psi_i (\varphi_i(E))} = E\ket{\psi_i (\varphi_i(E))},
$$
and substituting Equation (\ref{Ch2_SavichevEq}) into Equation (\ref{Ch2_FullSchEq}), one readily shows that
\begin{eqnarray}
i\frac{\partial}{\partial t}  \ket{\Psi(t)}  &=& \frac 1{2\pi\omega} \iint dEd\varphi'  \,\, \h(\varphi_i(E)) \,\, \ket{\psi_i (\varphi_i(E))}\times \nonumber\\
&& \exp\left[ \frac i{\omega}\left(E\varphi' - \int^{\varphi'}E_i(\varphi)d\varphi \right)-iEt\right][ 1 + O(\omega)], \\
\h(\varphi) \ket{\Psi(t)}  &=&  \frac 1{2\pi\omega} \iint dEd\varphi'  \,\, \h(\varphi) \,\, \ket{\psi_i (\varphi_i(E))}\times \nonumber\\
&& \exp\left[ \frac i{\omega}\left(E\varphi' - \int^{\varphi'}E_i(\varphi)d\varphi \right)-iEt\right][ 1 + O(\omega)].
\end{eqnarray}
Hence, in order to prove that  the wave function (\ref{Ch2_SavichevEq}) indeed satisfies Equation (\ref{Ch2_FullSchEq}), we ought to demonstrate that $\h(\varphi) \ket{\psi_i(\varphi')} \approx \h(\varphi')\ket{\psi_i(\varphi')}$.

Using the analyticity of $\h(\varphi)$ and $\ket{\psi_i(\varphi)}$, we write
\begin{eqnarray}
\h(\varphi+\varepsilon)\ket{\psi_i(\varphi)} = \sum_{n=0}^{\infty} \frac{\varepsilon^n}{n!} \left( \frac{d^n\h(\varphi)}{d\varphi^n}\right) \ket{\psi_i(\varphi)}. \nonumber
\end{eqnarray}
Since 
\begin{eqnarray}
\left( \frac{d^n\h(\varphi)}{d\varphi^n}\right) \ket{\psi_i(\varphi)} &=& \left[ \frac{d}{d\varphi} \left( \frac{d^{n-1}\h(\varphi)}{d\varphi^{n-1}}\right) \right] \ket{\psi_i(\varphi)} \nonumber\\
&=& \frac{d}{d\varphi} \left[ \left( \frac{d^{n-1}\h(\varphi)}{d\varphi^{n-1}}\right)\ket{\psi_i(\varphi)} \right] - \frac{d^{n-1}\h(\varphi)}{d\varphi^{n-1}} \left( \frac{d}{d\varphi} \ket{\psi_i(\varphi)} \right)\nonumber\\
&=& \left[ \frac{d}{d\varphi}, \frac{d^{n-1}\h(\varphi)}{d\varphi^{n-1}} \right]\ket{\psi_i(\varphi)}, \nonumber
\end{eqnarray}
we obtain 
\begin{eqnarray}
\h(\varphi+\varepsilon)\ket{\psi_i(\varphi)} &=& \sum_{n=0}^{\infty} \frac{1}{n!} 
\underbrace{\left. \left[ \varepsilon\frac{d}{d\varphi}, \left[\varepsilon\frac{d}{d\varphi}, \ldots, \right[  \right. \right. }_\text{n times}
\left. \left. \left. \varepsilon\frac{d}{d\varphi} , \h(\varphi)\right] \ldots \right]\right] \ket{\psi_i(\varphi)} \nonumber\\
&=& \exp\left( \varepsilon\frac{d}{d\varphi} \right) \h(\varphi) \exp\left( -\varepsilon\frac{d}{d\varphi} \right) \ket{\psi_i(\varphi)}.
\end{eqnarray}
Whence, we lamentably observe that the wave function (\ref{Ch2_SavichevEq}) does not obey the Schr\"{o}dinger equation (\ref{Ch2_FullSchEq}) in a general case. Nevertheless, if
\begin{eqnarray}\label{Ch2_CondValiditySavichev}
\left[ \frac{d}{d\varphi},  \h(\varphi) \right] = o(1),
\end{eqnarray}
then 
\begin{eqnarray}
\h(\varphi + \varepsilon) \ket{\psi_i(\varphi)} = \h(\varphi) \ket{\psi_i(\varphi)} + o(1),
\end{eqnarray}
i.e., Equation (\ref{Ch2_CondValiditySavichev}) is the assumption on the form of the Hamiltonian  that guarantees that the wave function
\begin{eqnarray}\label{Ch2_SavichevEq_Corrected}
 \ket{\Psi(t)} &=& \frac 1{2\pi\omega}\iint dEd\varphi'  \, \ket{\psi_i (\varphi_i(E))}\times \nonumber\\
&& \exp\left[ \frac i{\omega}\left(E\varphi' - \int^{\varphi'}E_i(\varphi)d\varphi \right)-iEt\right][ 1 + o(1)]
\end{eqnarray}
is indeed a solution of the Schr\"{o}dinger equation (\ref{Ch2_FullSchEq}). [Note a minor difference between Equations (\ref{Ch2_SavichevEq}) and (\ref{Ch2_SavichevEq_Corrected}).]

What are the implications of the condition (\ref{Ch2_CondValiditySavichev}) for strong field physics? Let us specify the Hamiltonian. A typical hamiltonian of a system interacting with an external laser field reads (in the length gauge) 
$ 
\h(\varphi) = \hat{\P}^2/2 + V(\R) + \R\cdot{\bf F} (\omega, \varphi)
$, 
where ${\bf F} (\omega, \varphi)$ denotes the laser pulse. Then, the following 
\begin{eqnarray}\label{Ch2_LimitConditioOnLaserField}
\lim_{\omega\to 0} {\bf F} (\omega, \varphi) = \lim_{\omega\to 0} d{\bf F} (\omega, \varphi)/d\varphi = {\bf 0},  
\end{eqnarray}
would satisfy Equation (\ref{Ch2_CondValiditySavichev}). A strictly periodic laser pulse, such as $ {\bf F} (\omega, \varphi) = {\bf F}_L \sin\varphi$, does not satisfy the condition (\ref{Ch2_LimitConditioOnLaserField}). However, a pulse with a Gaussian envelope, e.g., $ {\bf F} (\omega, \varphi) = {\bf F}_L \exp\left[-(\frac{\varphi}{\omega\tau})^2\right] \sin\varphi$, obeys it. Therefore, the condition (\ref{Ch2_CondValiditySavichev}) demands that the laser field used has to have an envelope, which is a realistic and, perhaps, even tacit requirement. 

Due to the connection (see References \cite{Savichev1991a}) between the Savichev adiabatic approach and the Solov'ev advanced adiabatic method, the condition (\ref{Ch2_LimitConditioOnLaserField}) also applies to the latter method. 

Finally, it is important to note that results obtained within different versions of the adiabatic approximation are in fact the same. This follows from the uniqueness of the asymptotic expansion (see, e.g., Reference  \cite{Olver1974}). Furthermore, the results are also gauge independent. Hence, the choice of the gauge as well as the choice of the specific adiabatic method is merely the issue of convenience.

\section{The Derivation of the Amplitude of Non-adiabatic Transitions}

Let $\ket{i}$ and $\ket{f}$ be stationary states (for specification see Equation (\ref{Ch2_IniConditions}) and the comment after), and we shall assume that the quantum system with the Hamiltonian $\h$ is in the state $\ket{i}$ at $t=-\infty$. The main aim of this section is to obtain the general form of the transition amplitude $\mathfrak{M}_{i\to f}$ that the given quantum system will be found in the state $\ket{f}$ at $t=+\infty$.

Before going further, we are to introduce notations. First, we arbitrarily partition the Hamiltonian $\h$:
\begin{equation}\label{Ch2_Partition}
\h(\varphi) \equiv \h_0(\varphi) + \hat{V}(\varphi).
\end{equation}
Second, we denote by $\ket{\Psi_{i,f}(t)}$  the solutions of Equation (\ref{Ch2_FullSchEq}) such that 
$\ket{\Psi_i(-\infty)} = \ket{i}$, $\ket{\Psi_f(+\infty)} = \ket{f}$; similarly, $\ket{\Phi_{i,f}(t)}$ are the solutions of  the nonstationary Schr\"{o}dinger equation
$$
 i\partial \ket{\Phi(t)}/\partial t = \h_0(\varphi)\ket{\Phi(t)},
 $$
 with the initial conditions: $\ket{\Phi_i(-\infty)} = \ket{i}$ and $\ket{\Phi_f(+\infty)} = \ket{f}$, correspondingly.
 
Having defined all necessary functions, we introduce two equivalent forms of the transition amplitude $\mathfrak{M}_{i\to f}$ by employing the corresponding version of the $S$-matrix (see, e.g., References \cite{Becker_2005, Smirnova2007a, Bauer2008a}):
the reversed time form (sometimes called the ``prior'' form) 
\begin{equation}\label{Ch2_M_PriorForm}
\mathfrak{M}_{i\to f}^{(r)} = -i \int_{-\infty}^{\infty}  \bra{ \Psi_f (t)}\hat{V}(\omega t)\ket{\Phi_i (t)}dt 
\end{equation}
and the direct time form (the ``post'' form)
\begin{equation}\label{Ch2_M_PostForm}
\mathfrak{M}_{i\to f}^{(d)} = -i \int_{-\infty}^{\infty}  \bra{ \Phi_f (t)}\hat{V}(\omega t)\ket{\Psi_i (t)}dt.
\end{equation}
It is noteworthy to recall the physical interpretation of Equations (\ref{Ch2_M_PriorForm}) and (\ref{Ch2_M_PostForm}). The terms $\bra{ \Psi_f (t)}\hat{V}(\omega t)\ket{\Phi_i (t)}$ and $\bra{ \Phi_f (t)}\hat{V}(\omega t)\ket{\Psi_i (t)}$ can be regarded as the amplitudes of quantum ``jumps,'' which occur at the time moment $t$. The integrals over $t$ convey that these jumps take place at {\it any} time.

Introducing the adiabatic state $\ket{\phi_f(\varphi)}$ and term $E_f(\varphi)$ of the Hamiltonian $\h_0$,
\begin{equation}\label{Ch2_H0AdiabaticState}
\h_0(\varphi) \ket{\phi_f(\varphi)} = E_f (\varphi) \ket{\phi_f(\varphi)}, 
\end{equation}
the wave function $\ket{\Phi_f(t)}$ can be readily presented in the form of Equation (\ref{Ch2_SavichevEq_Corrected}). In further investigations, we employ the post form [Equation (\ref{Ch2_M_PostForm})], and thus we shall assume that 
\begin{equation}\label{Ch2_IniConditions}
\ket{\psi_i(-\infty)} \equiv \ket{i}, \quad \ket{\phi_f(+\infty)} \equiv \ket{f}.
\end{equation}
In the case of the prior form [Equation (\ref{Ch2_M_PriorForm})], condition (\ref{Ch2_IniConditions}) has to be substituted by 
$\ket{\phi_i(-\infty)} \equiv \ket{i}$ and $\quad \ket{\psi_f(+\infty)} \equiv \ket{f}$, where $\ket{\phi_i(\varphi)}$ and $\ket{\psi_f(\varphi)}$ are adiabatic states of the Hamiltonians $\h_0$ and $\h$, correspondingly. 

Substituting the asymptotic representations [Equation (\ref{Ch2_SavichevEq_Corrected})] of the wave functions $\ket{ \Phi_f (t)}$ and $\ket{\Psi_i (t)}$ into Equation (\ref{Ch2_M_PostForm}), we obtain 
\begin{eqnarray}\label{Ch2_M_Before_SaddlePointInt}
\mathfrak{M}_{i\to f}^{(d)} = \frac{-i}{(2\pi)^2\omega^3} 
\int f({\bf z}) e^{iS({\bf z})/\omega} d^5 {\bf z}\left[ 1+ o(1)\right],
\end{eqnarray}
where ${\bf z} = (\mathcal{E}, \eta, \mathcal{E}', \eta', \varphi)$ is a five-dimensional vector, $d^5 {\bf z} = d\mathcal{E} d\eta d\mathcal{E}' d\eta' d\varphi$, 
$f({\bf z}) = \bra{\phi_f\left(\varphi_f(\mathcal{E}')\right)}\hat{V}(\varphi)\ket{\psi_i\left(\varphi_i(\mathcal{E})\right)}$, and
$S({\bf z}) = \mathcal{E}\left(\eta-\varphi\right) + \mathcal{E}'\left(\varphi-\eta'\right) - \int^{\eta} E_i(\xi)d\xi + \int^{\eta'} E_f(\xi)d\xi$. Bearing in mind that $1/\omega$ is a large parameter, the five-dimensional integral in Equation (\ref{Ch2_M_Before_SaddlePointInt}) can be calculated by means of the saddle-point approximation. Finally,   {\it the post form of the transition amplitude} within the adiabatic approximation reads
 \begin{eqnarray}\label{Ch2_AmplitudeFinalExpr}
\mathfrak{M}_{i\to f}^{(d)} &=& \sqrt{\frac{2\pi}{\omega}}\sum_{\varphi_{0}} 
\frac{ \bra{\phi_f(\varphi_{0})} \hat{V} (\varphi_{0}) \ket{\psi_i(\varphi_{0})}}
{  \left. \sqrt{ \frac{d}{d\varphi} \left[ E_f(\varphi)-E_i(\varphi)\right] } \right|_{\varphi = \varphi_{0}} } \nonumber\\
&& \times\exp\left\{\frac i{\omega}\int^{\varphi_{0}} \left[ E_f(\varphi) - E_i(\varphi) \right]d\varphi \right\} [ 1 + o(1)],
 \end{eqnarray}
 where $\sum_{\varphi_{0}}$ denotes the summation over simple saddle points $\varphi_{0}$, i.e., solutions of the equation 
 \begin{eqnarray}\label{Ch2_SaddlePointEq}
  E_f(\varphi_{0}) &=& E_i(\varphi_{0}), \\
  \frac{d}{d\varphi} E_f(\varphi_{0}) &\neq& \frac{d}{d\varphi}E_i(\varphi_{0}). \label{Ch2_SimpleSaddlePointEq}
 \end{eqnarray} 
 
The physical interpretation of the sum over $\varphi_{0}$ is as follows: quantum jumps occur only at {\it isolated} time moments  $t_{0}= \varphi_{0}/\omega$, when the jumps are most probable; hence, $t_{0}$ are called ``transition times.'' Note that the given interpretation deviates from the physical meaning of the time integral in Equation (\ref{Ch2_M_PostForm}).
 
 Some general remarks on Equation (\ref{Ch2_AmplitudeFinalExpr}) are to be made:
 
\begin{itemize}
\item[i.] $\varphi_{0}$ is usually a complex solution of Equation (\ref{Ch2_SaddlePointEq}); therefore, saddle points $\varphi_{0}$ with negative imaginary parts should be ignored because such points make exponentially large contributions to the amplitude, which leads to unphysical probabilities.
\item[ii.] If $\varphi_{0}^1, \, \varphi_{0}^2, \dots, \varphi_{0}^n$ are solutions of Equation (\ref{Ch2_SaddlePointEq}), such that 
$\Im\left(\varphi_{0}^1\right)>\Im\left(\varphi_{0}^2\right)> \ldots > \Im\left(\varphi_{0}^n\right)>0$, then all but the single term that corresponds to the saddle point $\varphi_{0}^n$ may be neglected in the sum over $\varphi_{0}$ in Equation (\ref{Ch2_AmplitudeFinalExpr}). One can do so since this saddle point has the largest contribution to the transition  amplitude.
\item[iii.] On the one hand, the explicit form of $E_f(\varphi)$ is solely determined by partitioning [Equation (\ref{Ch2_Partition})]; on the other hand, $E_i(\varphi)$ is unique for a given quantum system.
\item[iv.] The exponential factor of Equation (\ref{Ch2_AmplitudeFinalExpr}) is similar to the exponential factor in the Dykhne approach  \cite{Dykhne_1962, Chaplik_1964, Chaplik1965a, Davis_1976} (see also References \cite{Landau_1977, Delone_1985}) -- the methods for calculating the amplitude of bound-bound transitions within the adiabatic approximation. Hence, Equation (\ref{Ch2_AmplitudeFinalExpr}) may be considered as a generalization of the Dykhne formula for bound-free transitions.
\item[v.] By employing an appropriate version of the saddle-point method, one can in principle generalize Equation (\ref{Ch2_AmplitudeFinalExpr}) for the case when the condition (\ref{Ch2_SimpleSaddlePointEq}) is violated.
\end{itemize}

\section{A Connection between  the Quasiclassical and Adiabatic Approximations}

Now, the connection between the amplitude [Equation (\ref{Ch2_AmplitudeFinalExpr})] and the method of complex trajectories is to be manifested. According to the method of complex quantum trajectories (see, e.g., References \cite{Landau1932, Landau1932a, Landau_1977, Nikitin1993}, and the imaginary time method \cite{Popov2005}), to calculate the probability of the transition from the initial state to the final, one should first solve the corresponding  classical equations of motion and find the ``path'' of such a transition. However, this path is complex; in particular, the transition point $\R_{0}$ and transition time $t_{0}$ at which the transition occurs are complex. Parameters $\R_{0}$ and $t_{0}$ are determined by the classical conservation laws as shown below in this section. Next, one has to obtain the classical action $S_f(\R_f, t_f; \R_{0}, t_{0}) + S_i(\R_{0}, t_{0}; \R_i, t_i)$ for the motion of the system in the initial state from the initial position $\R_i$ at time $t_i$ to the transition point $\R_{0}$ at time $t_{0}$ and then in the final state from $\R_{0}$ at $t_{0}$ to the final position $\R_f$ at time $t_f$. Finally, the probability of the transition is given by 
\begin{equation}\label{Ch2_ImTime}
\Gamma \propto \exp\left\{ -2\Im\left[S_f(\R_f, t_f; \R_{0}, t_{0}) + S_i(\R_{0}, t_{0}; \R_i, t_i)\right]\right\}.
\end{equation}
Equations (\ref{Ch2_AmplitudeFinalExpr}) and (\ref{Ch2_ImTime}) must coincide in some region of parameters. The method of complex trajectories can be derived as the quasiclassical  approximation of the transition amplitude [Equation (\ref{Ch2_M_PriorForm}) or Equation (\ref{Ch2_M_PostForm})]; we outline this derivation below. Therefore, it would be of methodological interest to establish an explicit connection between Equations (\ref{Ch2_AmplitudeFinalExpr}) and (\ref{Ch2_ImTime}). 

Without loss of generality, assuming that $\hat{V}(\omega t)$ is a non-differential operator, we obtain the quasiclassical approximation to Equation (\ref{Ch2_M_PostForm}) 
\begin{eqnarray}\label{Ch2_M_WKB}
 \mathfrak{M}_{i\to f}^{(d)} &\approx& -i\int dt \int d^3 \R d^3 \R_f d^3 \R_i \,\bra{f} \R_f \rangle \hat{V}(\omega t, \R) \bra{\R_i} i \rangle \nonumber\\
&&\times F_f^* F_i \exp\left\{ i\left[S_f(\R_f, t_f; \R, t) + S_i(\R, t; \R_i, t_i) \right]\right\},
\end{eqnarray}
where $t_{f,i} = \pm\infty$, $F_f\exp(iS_f)$, and $F_i\exp(iS_i)$ are the quasiclassical versions of the propagators with the Hamiltonian $\h_0$ and $\h$, correspondingly.  We recall that the general form of the quasiclassical propagator is given by
\begin{eqnarray}\label{Ch2_GenaralQCPropagator}
\sum_{\alpha} F^{(\alpha)}(\R, t; \R', t')\exp\left[iS^{(\alpha)}(\R, t; \R', t')\right],
\end{eqnarray}
where the sum denotes the summation over classical paths that connect the initial $(\R', t')$ and final $(\R,t)$ points. Therefore, usage of this form of the quasiclassical propagator, $F\exp(iS)$, is justified if we assume that there is only one such path; indeed, this is the case in the majority of practical calculations, and thus we shall accept this assumption hereinafter.

In order to reach Equation (\ref{Ch2_ImTime}) from  Equation (\ref{Ch2_M_WKB}), one has to calculate the integrals over $\R$ and $t$ in Equation (\ref{Ch2_M_WKB}) by means of the saddle-point approximation. The equations for the saddle points $\R_{0}$ and $t_{0}$, i.e., the transition points, read
\begin{eqnarray}
\left. \partial \left[ S_i(\R, t; \R_i, t_i) -S_f(\R, t; \R_f, t_f)  \right]/\partial t\right|_{t=t_{0},\, \R=\R_{0}} = 0, \label{Ch2_Saddle_PointT}\\
\left. \nabla_{\R}\left[ S_i(\R, t; \R_i, t_i) -S_f(\R, t; \R_f, t_f)  \right]\right|_{t=t_{0},\, \R=\R_{0}} = {\bf 0}. \label{Ch2_Saddle_PointR}
\end{eqnarray}
Recalling the Hamilton-Jacobi equation
\begin{equation}\label{Ch2_HJEq}
\partial S_{i,f} (\R, t; \R_{i,f}, t_{i,f})/\partial t = 
-\Hcl_{i,f}(\R,\Pcl_{i,f}, t), 
\end{equation}
where $\Hcl_{i,f}$ are classical Hamiltonians and $\Pcl_{i,f}$ are classical canonical momenta
$$
\Pcl_{i,f}(\R, t)  = \nabla_{\R} S_{i,f}(\R, t; \R_{i,f}, t_{i,f}),
$$
we rewrite Equations (\ref{Ch2_Saddle_PointT}) and (\ref{Ch2_Saddle_PointR}) as the law of conservation of canonical momentum and the law of conservation of energy:
\begin{eqnarray}
 \Pcl_f(\R_{0}, t_{0}) &=& \Pcl_i(\R_{0}, t_{0}), \\
 \Hcl_{f}(\R_{0},\Pcl_{f}(\R_{0}, t_{0}), t_{0}) &=& \Hcl_{i}(\R_{0},\Pcl_{i}(\R_{0}, t_{0}), t_{0}).
\end{eqnarray}

Having introduced all the necessary quantities, we demonstrate the correspondence between Equations (\ref{Ch2_AmplitudeFinalExpr}) and (\ref{Ch2_ImTime}) within an exponential accuracy. Performing a simple transformation and using Equation (\ref{Ch2_HJEq}), we reach
\begin{eqnarray}
&& S_f(\R_f, t_f; \R_{0}, t_{0}) + S_i(\R_{0}, t_{0}; \R_i, t_i) \nonumber\\
&=& S_f(\R_f, t_f; \R_{0}, t_i)+S_i(\R_{0}, t_i; \R_i, t_i)  + \int_{t_i}^{t_{0}} \left[ \frac{\partial}{\partial\tau} S_i(\R_{0}, \tau; \R_i, t_i) -
\frac{\partial}{\partial\tau} S_f(\R_{0}, \tau; \R_f, t_f) \right]d\tau  \nonumber\\
&=& S_f(\R_f, t_f; \R_{0}, t_i)+S_i(\R_{0}, t_i; \R_i, t_i)  \nonumber\\
&&+ \int_{t_i}^{t_{0}} \left[ \Hcl_{f}(\R_{0},\Pcl_{f}(\R_{0}, \tau), \tau) - \Hcl_{i}(\R_{0},\Pcl_{i}(\R_{0}, \tau), \tau)\right] d\tau. \label{Ch2_SplusS_as_int_ofEnerg}
\end{eqnarray}
Usually in the case of multiphoton ionization, $t_{0}$ is complex and $\R_{0}$ is real; moreover, energies along the trajectories are always real -- even under the barrier. This suggests that the first two terms of the right hand side of Equation (\ref{Ch2_SplusS_as_int_ofEnerg})  affect only the phase and do not contribute to the probability.

We point out that sometimes it is useful to employ a mixed representation, such as
\begin{eqnarray}
&& S_f(\R_f, t_f; \R_{0}, t_{0}) + S_i(\R_{0}, t_{0}; \R_i, t_i) \nonumber\\
&=& S_f(\R_f, t_f; \R_{0}, t_i)  + \int_{t_i}^{t_{0}} \left[  \mathrsfs{L}_i (\tau, \R, \dot{\R})-
\frac{\partial}{\partial\tau} S_f(\R_{0}, \tau; \R_f, t_f) \right]d\tau  \nonumber\\
&=& S_f(\R_f, t_f; \R_{0}, t_i)  + \int_{t_i}^{t_{0}} \left[ \mathrsfs{L}_i (\tau, \R, \dot{\R}) + \Hcl_{f}(\R_{0},\Pcl_{f}(\R_{0}, \tau), \tau)\right]d\tau,
\end{eqnarray}
where $\mathrsfs{L}_i$ is the classical Lagrangian, $S_i = \int \mathrsfs{L}_i d\tau$.

Finally, since $\Hcl_f$ and $\Hcl_i$ are  the quasiclassical limits of $E_f$ and $E_i$ (this will be demonstrated below), we conclude that the exponential factors of Equations (\ref{Ch2_AmplitudeFinalExpr}) and (\ref{Ch2_ImTime}) indeed coincide within the quasiclassical approximation.

The wave function 
\begin{eqnarray}\label{Ch2_PsiQC}
\Psi_{qc}(\R, t) = \int  F_i(\R, t; \R', t_i) e^{\frac{i}{\hbar}S_i(\R, t; \R', t_i)} \phi_{in}(\R') d^3\R',
\end{eqnarray}
is the (leading-order term) quasiclassical solution of Equation (\ref{Ch2_FullSchEq}) with the initial condition $\Psi_{qc}(\R, t_i) = \phi_{in}(\R)$. Employing Equation (\ref{Ch2_HJEq}) and  bearing in mind that $\Hcl_{i}=\Hcl_{i}(\R,\Pcl_{i}, t)$ does not depend on $\R'$, we obtain
\begin{eqnarray}\label{Ch2_WKBAdiabaticTermState}
\h \Psi_{qc}  = i\hbar \partial_t \Psi_{qc}\left[ 1 + O(\hbar)\right] = \Hcl_{i} \Psi_{qc} \left[ 1 + O(\hbar)\right].
\end{eqnarray} 
Since we have freedom of choosing the initial condition $\phi_{in}(\R)$, there are in general infinitely many wave functions [Equation (\ref{Ch2_PsiQC})] that satisfy Equation (\ref{Ch2_WKBAdiabaticTermState}). Comparing Equations (\ref{Ch2_WKBAdiabaticTermState}) and (\ref{Ch2_AdiabaticTermStateFullH}), and  taking into account the latter, we formulate the following property of the adiabatic term and state of a given quantum system {\it within the quasiclassical limit: there exists only one adiabatic term, which is equal to the classical Hamiltonian, and any solution of the corresponding nonstationary Schr\"{o}dinger equation is also an adiabatic state that corresponds to this adiabatic term} (i.e., the adiabatic term is infinitely degenerate). Note that this property is completely ruled out once the general form of the quasiclassical propagator (\ref{Ch2_GenaralQCPropagator}) is considered.

The property stated above, nevertheless, merely accentuates the fundamental difference between the quasiclassical and adiabatic approximations.   As mentioned in Chapter \ref{chapter1}, the adiabatic approximation allows us to obtain the solution of the nonstationary Schr\"{o}dinger equation as an  asymptotic series in terms of the small parameter $\omega/\omega_{at}$; however, the quasiclassical approximation is a method of obtaining an asymptotic expansion of the solution with respect to the small parameter $\hbar$. These two series may be dissimilar in a general case. 

\section{Summary}

Since the adiabatic approximation shall be used in subsequent chapters. It is  convenient to conclude this section with a short summary of the main equations. 

Let the Hamiltonian of a system $\h(\omega t)$ be a slowly varying function of time $t$, i.e., $\omega\ll 1$, then the rate of a non-adiabatic transition is given by
\begin{eqnarray}\label{Ch2_PropNonAdiabat_Summary}
\Gamma \propto \exp\left( -\frac2{\omega} \Im\int^{\varphi_0} [E_f (\varphi)- E_i(\varphi)] d\varphi \right),
\end{eqnarray}
where $\varphi=\omega t$, $\varphi_0$ is the complex solution of the equation 
\begin{eqnarray}\label{Ch2_TransitionPoint_Summary}
E_i(\varphi_0) = E_f(\varphi_0)
\end{eqnarray} 
with the smallest positive imaginary part, and $E_{i,f}(\varphi)$ being the adiabatic terms of the system ``before'' and ``after'' the non-adiabatic transition. If the quasiclassical approximation to $\Gamma$ is sufficient, then instead of using the adiabatic terms, one can employ the total energies of the corresponding classical system.

\glsresetall

\chapter{Single-electron Ionization}\label{chapter3}

\section{Main Results}

The Keldysh theory \cite{Keldysh_1965} was reformulated in terms of the Dykhne adiabatic approximation in Reference \cite{Delone_1985}, for the first time. Later, this approach was employed in References \cite{Delone_1991, Yudin_2001B, Krainov_2003, Rastunkov_2007, Bondar2008}.

Similarly, we shall apply the adiabatic approach [Equation (\ref{Ch2_PropNonAdiabat_Summary})] to the problem of ionization of a single electron under the influence of a linearly polarized laser field with the frequency $\omega$ and the strength $\mathbf{F}$. The initial and final classical energies for such a process are given by 
\begin{equation}\label{Ch3_Ei_Ef}
E_i (\varphi) = -I_p, \qquad E_f (\varphi) = \frac 12 [\K + \A(\varphi)]^2,
\end{equation}
where $I_p$ is the ionization potential, $\K$ is the canonical momentum (measured on the detector), and $\A(t) = - (\mathbf{F}/\omega) \sin\varphi$.

According to Equation (\ref{Ch2_PropNonAdiabat_Summary}), the probability of one-electron ionization $\Gamma$ can be written as 
\begin{eqnarray}
\Gamma &\propto& \exp\left[ -2\Im S \left( \kk, \k, I_p\right)\right], \label{Ch3_SPAEq1}\\
S \left( \kk, \k, I_p\right) &=& \int^{t_0}\left( \frac 12 [\kk + A(t)]^2 + \frac 12 \k^2 + I_p \right) dt, \label{Ch3_SingleElectronIonizAction}
\end{eqnarray}
where $S$ is the action. Equation (\ref{Ch2_TransitionPoint_Summary}) can be rewritten in terms of $S$ as
\begin{equation}\label{Ch3_SPAEq2}
\frac{\partial}{\partial t_0} S \left( \kk, \k, I_p\right) = 0.
\end{equation}
Note that the analogy between the saddle point S-matrix calculations \cite{Ivanov_2005}, where transitions are calculated using stationary points of the action, and the adiabatic approach can be seen from Equations (\ref{Ch3_SPAEq1}) and (\ref{Ch3_SPAEq2}). The transition point is given by
\begin{equation}\label{Ch3_ExSolutT0}
\omega t_0 =   \mathrm{Arcsin}\left[ \gamma\left( \frac{ \kk}{\sqrt{2I_p}} + i\sqrt{1 +\frac{\k^2}{2I_p}} \right)\right],
\end{equation}
where $\gamma$ is the Keldysh parameter
$$
\gamma = \frac{\omega}F\sqrt{2I_p} = \sqrt{\frac{I_p}{2U_p}}
$$
and $U_p = \left( F/2\omega \right)^2$ is the ponderomotive potential. In order to extract the imaginary and real parts of this solution, the following equation \cite{Abramowitz_1972} can be used 
\begin{equation}\label{Ch3_AbramArcsin}
 \mathrm{Arcsin}\, (x+iy) = 2K\pi + \arcsin\beta + i\ln\left[\alpha + \sqrt{\alpha^2 -1}\right],
\end{equation}
where $K$ is an integer and
\begin{eqnarray}
\left\{ \alpha\atop \beta \right\} = \frac 12 \sqrt{(x+1)^2 + y^2} \pm \frac 12\sqrt{(x-1)^2 + y^2}.    \nonumber
\end{eqnarray}
Using Equation (\ref{Ch3_AbramArcsin}) in Equation (\ref{Ch3_SPAEq1}), we obtain
\begin{equation}\label{Ch3_GeneralRate}
\Gamma (\gamma, \kk, \k)  \propto \exp\left[- \frac{2I_p}{\omega}f(\gamma, \kk, \k) \right],
\end{equation}
where
\begin{eqnarray}
f(\gamma, \kk, \k)  &=& \left(1+\frac 1{2\gamma^2}+ \frac{k^2}{2I_p} \right)\arccosh\alpha - \sqrt{\alpha^2 -1}\left( \frac{\beta}{\gamma}\sqrt{\frac 2{I_p}}\kk  +
\frac{\alpha\left[1-2\beta^2\right]}{2\gamma^2} \right),  \nonumber\\
\left\{ \alpha\atop \beta \right\} &=& \frac{\gamma}2 \left( 
\sqrt{\frac{k^2}{2I_p} + \frac 2{\gamma}\frac{\kk}{\sqrt{2I_p}} 
+ \frac 1{\gamma^2} +1 }   \pm
\sqrt{\frac{k^2}{2I_p} - \frac 2{\gamma}\frac{\kk}{\sqrt{2I_p}} 
+ \frac 1{\gamma^2} +1 }
\right), \nonumber\\
k^2 &=& \kk^2 + \k^2. \nonumber
\end{eqnarray}
Note that $\alpha > 1$. It must be stressed here that {\it no assumptions on the momentum of the electron have been made.} However, Equation (\ref{Ch3_GeneralRate}) has an exponential accuracy because the influence of the Coulomb field of a nucleus cannot be accounted for by the \gls{SFA}. The correct exponential prefactor has been obtained within the \gls{PPT} approach.

Similarly to the Yudin-Ivanov formula \cite{Yudin_2001B}, Equation  (\ref{Ch3_GeneralRate}) is valid if the strength of the laser field $F$ depends on time, $F\to E_0 g(t)$, where the envelope  $g(t)$ of the pulse is assumed to be nearly constant during one-half of a laser cycle.

\section{Connections with Previous Results}

In this section, Equation (\ref{Ch3_GeneralRate}) is applied to  some special cases in order to establish connections with previously known results. 

In the case of zero final momentum ($k=0$), we have that $\alpha = \sqrt{1+ \gamma^2}$ and $\beta = 0$. 
In this limit we recover the original Keldysh formula \cite{Keldysh_1965}
$$
f(\gamma,0,0) = \left( 1 + \frac 1{2\gamma^2} \right) \arcsinh \gamma -
\frac {\sqrt{1+\gamma^2}}{2\gamma}.
$$

In the tunneling limit ($\gamma\ll 1$) the following formulas can be obtained. Expanding the function $f(\gamma, \kk, \k) $ in a Taylor series up to third order with respect to $\gamma$ and setting $\k=0$, we obtain
\begin{equation}\label{Ch3_CorkumTunnelEq}
\Gamma (\gamma, \kk, 0) \approx  \Gamma (\gamma, 0,0)
\exp\left[ -\frac{ \kk^2}{3\omega}\gamma^3 \right]. 
\end{equation}
Equation (\ref{Ch3_CorkumTunnelEq}) has been derived by a classical approach in Reference \cite{Corkum_1989} (see also Reference \cite{Delone_1991}). Discussions regarding the physical origin of Equation (\ref{Ch3_CorkumTunnelEq}) are presented in Reference \cite{Ivanov_2005}. Performing the same expansion and setting $\kk=0$, we obtain
\begin{equation}\label{Ch3_DeloneTunnelEq}
\Gamma (\gamma, 0,\k) \propto \exp\left[ -\frac{2\left(\k^2 + 2I_p\right)^{3/2}}{3F}\right].
\end{equation} 
This equation has been derived in Reference \cite{Delone_1991}. 
For small values of $\k$, Equation (\ref{Ch3_DeloneTunnelEq}) can be approximated by
\begin{equation}\label{Ch3_IvanovTunnelEq}
\Gamma (\gamma, 0,\k) \approx  \Gamma (\gamma, 0,0)
\exp\left[ -\frac{\sqrt{2I_p}\k^2}F \right].
\end{equation}

Let us fix $\k=0$ and continue working in the tunneling regime. For the case of high kinetic energy, such as $\kk^2/2 > 2U_p$ and $\sqrt{\kk^2/(4U_p)}-1 \gg \gamma$, we obtain 
$$
\alpha \approx \sqrt{ \frac{\kk^2}{4U_p}}, \qquad \beta \approx 1,
$$
and the ionization rate $\Gamma$ is given by 
\begin{equation}\label{Ch3_KrainovEq7}
\Gamma \propto \exp\left\{ -\frac{2U_p}{\omega}\left[ \left( \frac{\kk^2}{2U_p} + 1\right) \arccosh\sqrt{\frac{\kk^2}{4U_p}} - 
3\sqrt{\frac{\kk^2}{4U_p}\left(\frac{\kk^2}{4U_p} -1 \right)}
\right] \right\}.
\end{equation}
Equation (\ref{Ch3_KrainovEq7}) has been obtained in Reference \cite{Krainov_2003}.

Calculating the asymptotic expansion of the function $f(\gamma, \kk, 0) $ for $\kk^2/2 \gg 2U_p$, we obtain
\begin{equation}\label{Ch3_KrainovTunnelEq}
\Gamma \propto \left( \frac{F^2 \exp(3)}{4\omega^2 \kk^2} \right)^{\frac{\kk^2}{2\omega}} \qquad (\kk^2/2 \gg 2U_p, \,\k=0).
\end{equation}
Equation (\ref{Ch3_KrainovTunnelEq}) has been obtained for tunneling ionization in Reference \cite{Krainov_2003}. Here, we have proved that Equation (\ref{Ch3_KrainovTunnelEq}) is valid for arbitrary  $\gamma$. A similar formula can be derived for $\k\neq 0$
\begin{equation}\label{Ch3_PPerpAsympt}
\Gamma \propto \left( \frac{F^2 \exp(1)}{4\omega^2\k^2} \right)^{\frac{\k^2}{2\omega}} \qquad (\k^2/2 \gg 2U_p, \, \kk=0),
\end{equation}
which is also valid for arbitrary values of the Keldysh parameter $\gamma$.

Consider the asymptotic expansion of Equation (\ref{Ch3_GeneralRate}) for large values of $I_p$ ($I_p \gg k^2/2$). In this case $\alpha$ and $\beta$ can be approximated by
$$
\alpha \approx \sqrt{1+\gamma^2}, \qquad 
\beta \approx \frac {\gamma}{\sqrt{1+\gamma^2}} \frac{\kk}{\sqrt{2I_p}}.
$$
Using these equations, we obtain
\begin{equation}\label{Ch3_PPTEq}
f(\gamma, \kk, \k) \approx f(\gamma, 0,0) + 
\frac{k^2}{2I_p}\arcsinh\gamma - 
\frac{\gamma}{\sqrt{1+\gamma^2}}\frac{\kk^2}{2I_p}.
\end{equation}
Equation (\ref{Ch3_PPTEq}) has been reached within the \gls{PPT} approach (see also Reference \cite{Popov2004}). 

Goreslavskii {\it et al.} \cite{Goreslavskii_2005} have derived an expression for the spectral-angular distribution of single-electron ionization without any assumptions on the momentum of the electron. However, they have summed over saddlepoints, i.e., the contribution from previous laser cycles has been taken into account. On the contrary, we have not performed any summation because we are interested in the most recent contribution to ionization. Therefore, our result does account for the phase dependence of the ionization rate, unlike that of
Reference \cite{Goreslavskii_2005}.

To make the phase dependance explicit in Equation (\ref{Ch3_GeneralRate}), we apply the  substitution  
\begin{equation}\label{Ch3_PhaseDep}
\kk \to \kk - A(t).
\end{equation}
The analytical expression for the ionization rate as a function of a laser phase when $k = 0$ has been achieved by Yudin and Ivanov \cite{Yudin_2001B}. Thus, Equation~(\ref{Ch3_GeneralRate}) is seen to be a generalization of the Yudin-Ivanov formula.

Note that generally speaking, there is no unique and consistent way of defining the instantaneous ionization rates within quantum mechanics, and such a definition is a topic of an ongoing discussion (see, e.g., References \cite{Saenz_2007, Smirnova2008a} and references therein). However,
the instantaneous ionization rates are indeed rigorously defined within the quasiclassical approximation (e.g., the Yudin-Ivanov formula), and we have employed this approach. Alternatively, one can approximate the instantaneous ionization rates by the static ionization rates at each point in time using the instantaneous value of the laser field.

Lastly, we illustrate Equation (\ref{Ch3_GeneralRate}) for the case of a hydrogen atom. The single-electron ionization spectra in the multiphoton regime ($\gamma \gg 1$) and in the tunneling regime ($\gamma \ll 1$) are plotted in Figure \ref{Ch3_Fig1}(a) and Figure \ref{Ch3_Fig1}(b) respectively. One concludes that the smaller $\gamma$, the more elongated the single-electron spectrum. We can notice that the maxima of both spectra are at the origin. Nevertheless, a dip at the origin has been observed experimentally \cite{Rudenko2004a, Alnaser2006, Smeenk2009a} in the parallel-momentum
distribution for the nobel gases within the tunneling regime, and afterwards it has been investigated theoretically in paper \cite{Guo2008} and references therein. However, such a phenomenon is beyond Equation (\ref{Ch3_GeneralRate}). Recently, Formula (\ref{Ch3_GeneralRate}) was employed to interpreted experimental measurements of perpendicular momentum distributions of photoelectrons \cite{Arissian2010} -- substantial, but not total, agreement was observed (see Figure \ref{Ch3_Fig4}).

The phase dependence of ionization for different initial momenta, recovered by means of Equation (\ref{Ch3_PhaseDep}), is illustrated in Figure \ref{Ch3_Fig2} for selected positive momenta. The curves for negative momenta are mirror reflections (through the axis $\phi = 0$) of the corresponding positive curves. Figure \ref{Ch3_Fig3} shows that the cutoff of the single-electron spectrum in the tunneling regime (the dashed line) corresponds exactly to the kinetic energy $2U_p$, which is the maximum kinetic energy of a classical electron oscillating under the influence of a linearly polarized laser field.

\begin{figure}
\begin{center}
\includegraphics[scale=0.75]{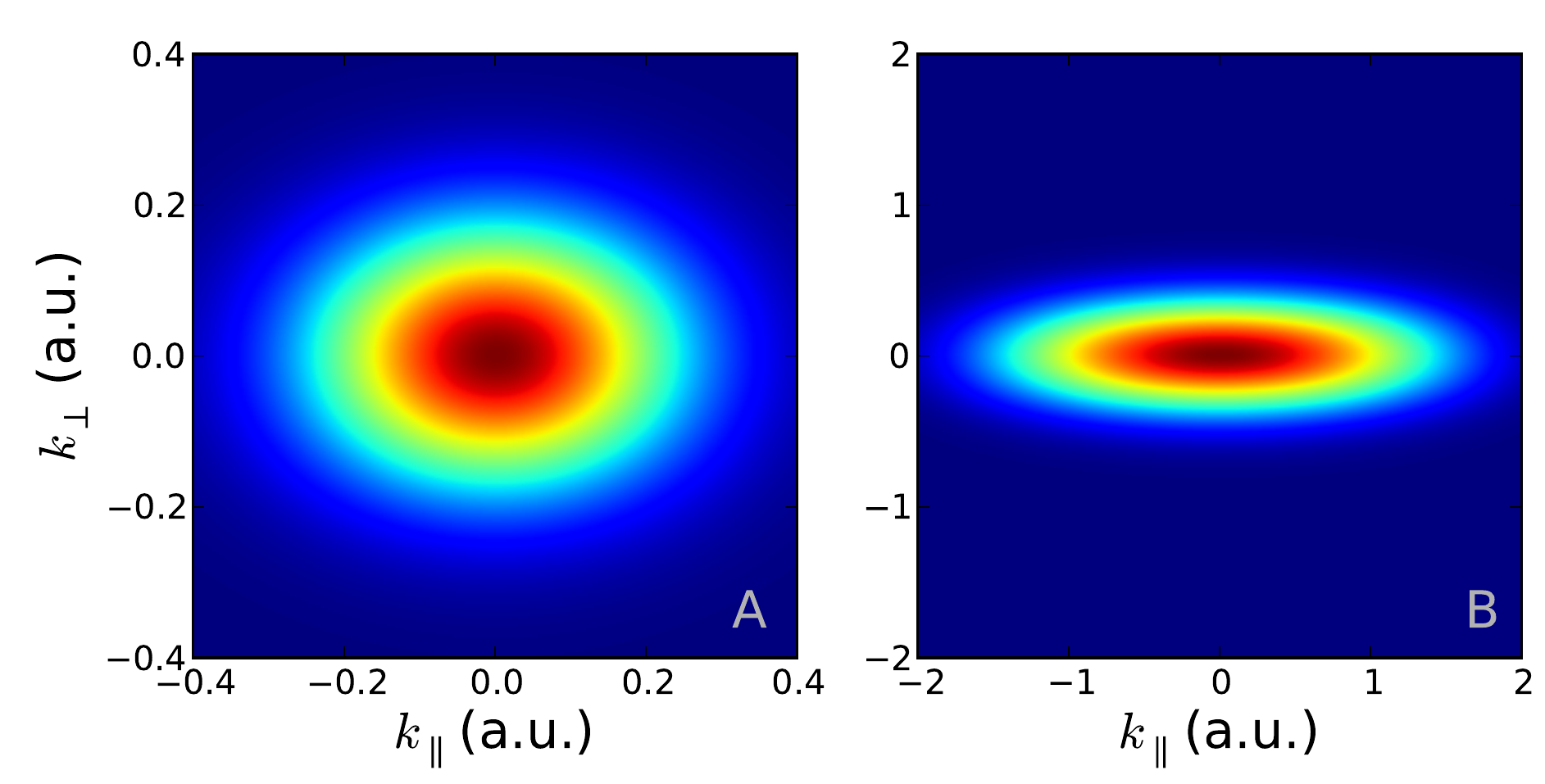}
\end{center}
\caption[The photoelectron spectrum of a hydrogen atom]{The photoelectron spectrum [Equation (\ref{Ch3_GeneralRate})] of a hydrogen atom at $800$ nm; (a) at $1\times 10^{13}$ W/cm$^2$ ($\gamma = 3.4$); (b) at $6\times 10^{14}$ W/cm$^2$ ($\gamma=0.4$).}\label{Ch3_Fig1}

\begin{center}
\includegraphics[scale=0.35]{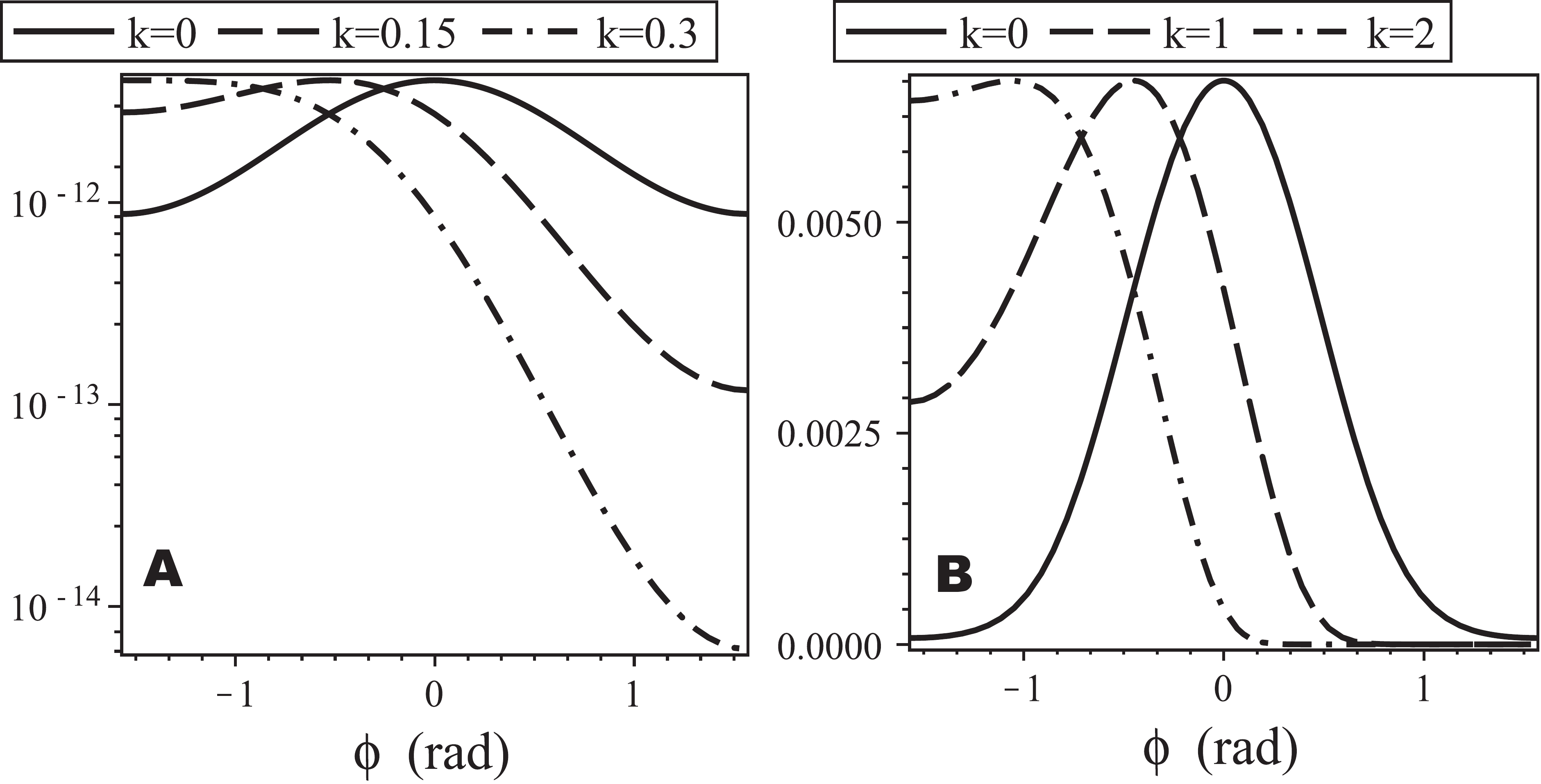}
\end{center}
\caption[The dependence of the photoelectron spectrum of a hydrogen atom on the instantaneous phase of a laser field]{(Reprinted from Reference \cite{Bondar2008}. Copyright (2008) by the American Physical Society.) The dependence of the photoelectron spectrum of a hydrogen atom on the instantaneous phase of a laser field: the plot of $\Gamma(\gamma, k+\frac F{\omega}\sin\phi, 0)$ [Equation (\ref{Ch3_GeneralRate})] for a hydrogen atom at $800$ nm; (a) at $1\times 10^{13}$ W/cm$^2$ ($\gamma = 3.4$); (b) at $6\times 10^{14}$ W/cm$^2$ ($\gamma=0.4$).}\label{Ch3_Fig2}
\end{figure} 

\begin{figure}
\begin{center}
\includegraphics[scale=0.37]{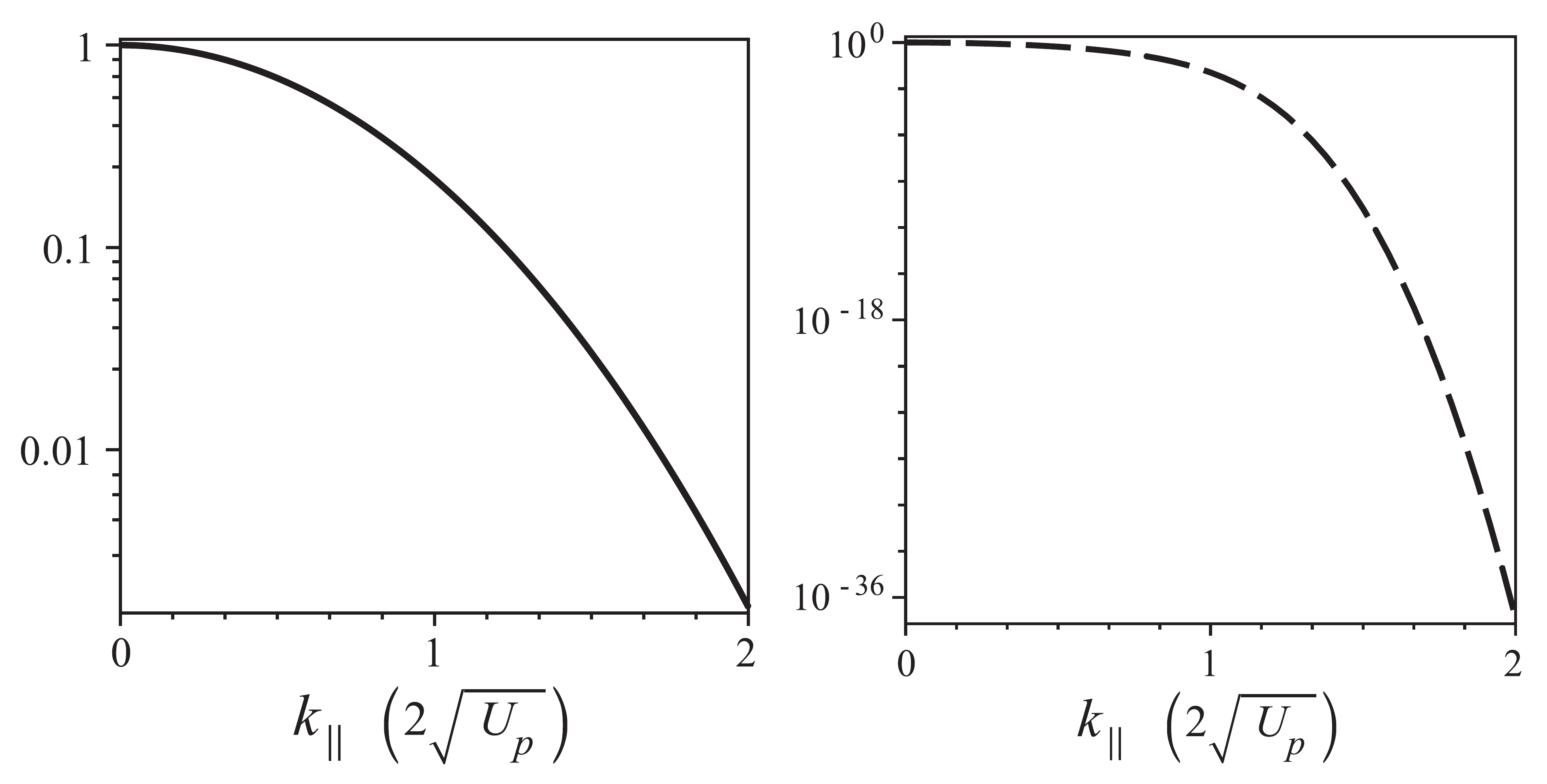}
\end{center}
\caption[The longitudinal momentum distribution of photoelectrons of a hydrogen atom]{(Reprinted from Reference \cite{Bondar2008}. Copyright (2008) by the American Physical Society.) The longitudinal momentum distribution of photoelectrons of a hydrogen atom: the plot of $\Gamma(\gamma, \kk, 0)/\Gamma(\gamma,0,0)$ [Equation (\ref{Ch3_GeneralRate})] for a hydrogen atom at $800$ nm; the solid line: $1\times 10^{13}$ W/cm$^2$ ($\gamma = 3.4$); the dashed line: $6\times 10^{14}$ W/cm$^2$ ($\gamma=0.4$).}\label{Ch3_Fig3}

\begin{center}
\includegraphics[scale=0.7]{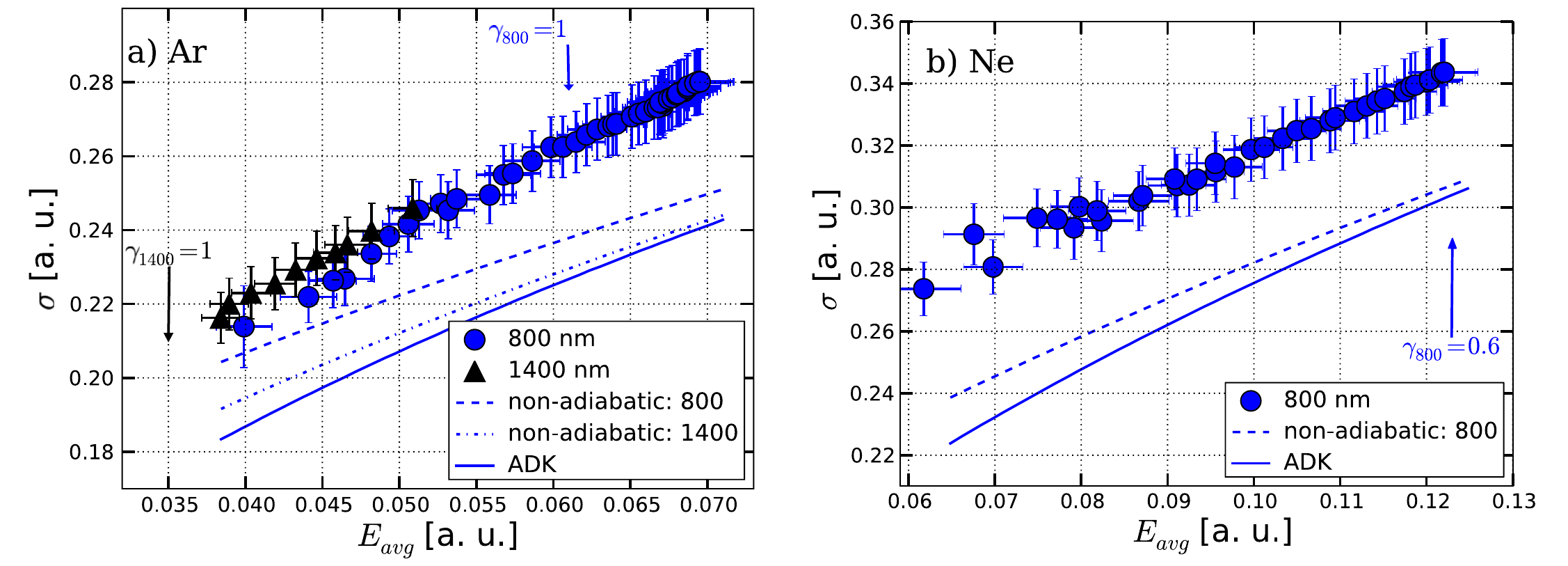}
\end{center}
\caption[The comparison of the theoretical momentum distribution with experimental data for Ar and Ne]{(Reprinted from Reference \cite{Arissian2010} with kind permission of C. Smeenk. Copyright (2010) by the American Physical Society.) (a) Width of the experimentally measured momentum distribution in Reference \cite{Arissian2010} as a function of the strength of the laser field for Ar at 800 and 1400 nm. The experimental data are compared with predictions of Equations (\ref{Ch3_GeneralRate}) and (\ref{Ch3_IvanovTunnelEq}). Results obtained by means of these equations are marked as  {\it ``non-adiabatic''} and {\it ``ADK''}, correspondingly. (b) Width of the experimentally measured momentum distribution as a function of the field for Ne.}\label{Ch3_Fig4}
\end{figure}
\glsresetall

\chapter{Nonsequential Double Ionization below Intensity Threshold: Contributions of the Electron-Electron and Electron-Ion Interactions}\label{chapter4}

\section{Formulation of the Problem}

Our model of \gls{NSDI} complements earlier theoretical
work  on calculating correlated two-electron spectra using
the strong-field S-matrix approach \cite{Becker_2005}. The key
theoretical advance of this work is the ability to include
nonperturbatively all relevant interactions for both active
electrons: with each other, with the ion, and with the laser
field. Electron-electron and electron-ion interactions are
included on an equal footing. Our model ignores multiple
recollisions and multiple excitations developing over several
laser cycles, such as those seen in the classical simulations
\cite{Ho_2005}. This simplification is particularly adequate for
the few-cycle laser pulses, as demonstrated in the experiment
\cite{Bhardwaj_2001}. According to this experiment, multiple
recollisions are noticeably suppressed already for 12 fs pulses
at $\lambda=800$ nm. For 6--7 fsec pulses at $\lambda=800$ nm, such
simplification is justified.

The process of \gls{NSDI} is shown by the Feynman diagram in Figure \ref{Ch4_Fig4}. The
system begins in the ground state $\ket{gg}$ at time $t_i$. At an
instant $t_b$, intense laser field promotes the first electron to
the continuum state $\ket{\K}$; the second electron remains in the
ground state of the ion $\ket{g^+}$. Recollison at $t_r$ frees
both electrons. The symmetric diagram where electrons 1 and 2
change their roles is not shown, but is included in the calculated
spectrum.

\begin{figure}
\begin{center}
\includegraphics[scale=0.3]{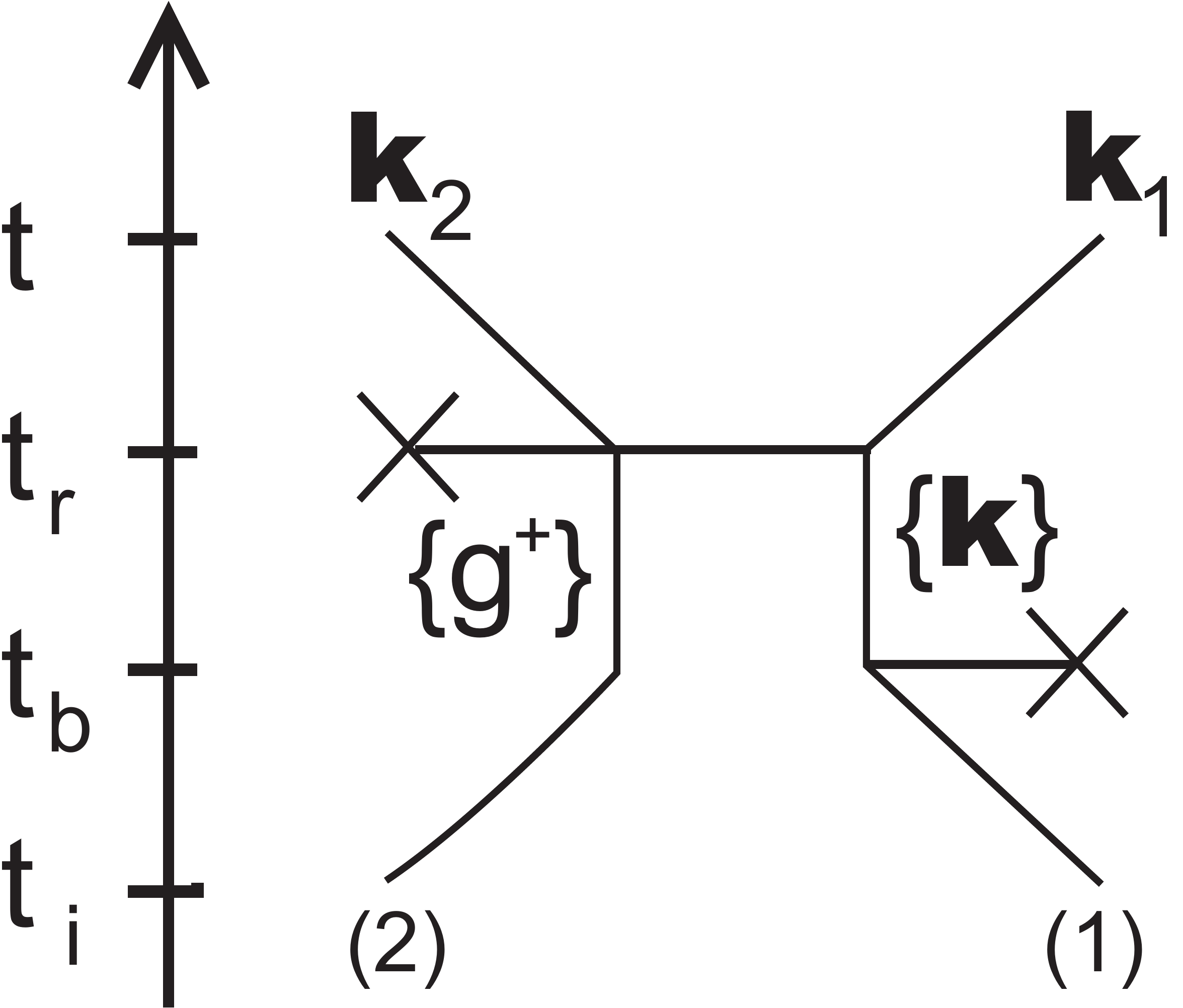}
\end{center}
\caption[The diagram of \glsentryname{NSDI} within the \glsentryname{SEE} mechanism.]{(Reprinted from Reference \cite{Bondar2009}. Copyright (2009) by the American Physical Society.) The diagram of \glsentryname{NSDI} within the \glsentryname{SEE} mechanism.}
\label{Ch4_Fig4}
\end{figure}

Let us apply the adiabatic approach [Equation (\ref{Ch2_PropNonAdiabat_Summary})] to the
two-electron process under consideration. The \gls{NSDI} within \gls{SEE} has two stages, namely
ionization of the first electron and the recollision. Hence,
strictly speaking, the adiabatic approximation has to be
applied to each of the two stages, since the total amplitude of
the process is the product of the ionization amplitude and the
recollision amplitude. However, it is the second (recollisioin)
amplitude that is responsible for the shape of the correlated spectra.
The first amplitude only gives the overall height of the spectra,
as it determines the overall probability of the recollision. Since
at this stage we are only interested in the shape of the
correlated spectra, we omit the ionization amplitude from this
discussion.

As a zero approximation, we define $E_{i}(\varphi)$ and $E_{f}(\varphi)$ for
the second part of \gls{NSDI} without the Coulomb interaction. Before
the recollision at the moment $\varphi_r = \omega t_r$ (Figure \ref{Ch4_Fig4}), one electron
is bound and another is free. The classical energy of the system
before $\varphi_r$ is
\begin{equation}\label{Ch4_EiCMI}
E_i (\varphi)= \frac 12 \left[ -\A(\varphi_b(\varphi))+\A(\varphi) \right]^2 -I_p^{(2)},
\end{equation}
where $I_p^{(2)}$ denotes the ionization potential of the ion and
$
 \A(t) = - ({\bf F}/\omega)\sin(\omega t)
$
 is the vector potential of a linearly polarized laser field.
The time of ``birth'' (ionization) for the first electron
$\varphi_b = \omega t_b(\varphi)$ is the standard function of the instant of recollision $\varphi_r$,
which is obtained from the saddle-point S-matrix calculations in Appendix \ref{Appendix_1}. In Equation (\ref{Ch4_EiCMI}), we have assumed that
the recolliding electron has been born at $\varphi_b(\varphi_r)$ with zero
velocity. After the recollision, both the electrons are free and the
energy of the system is
\begin{equation}\label{Ch4_EfCMI_}
E_f (\varphi) = \frac 12  [\K_1 + \A(\varphi)]^2 +  \frac 12  [\K_2 + \A(\varphi)]^2,
\end{equation}
where $\K_{1,2}$ are the asymptotic kinetic momenta at
$\varphi\to\infty$ of  the first and second electrons, correspondingly.

Now, substituting Equations (\ref{Ch4_EiCMI}) and (\ref{Ch4_EfCMI_}) into Equation
(\ref{Ch2_PropNonAdiabat_Summary}), we obtain the correlated spectrum standard for
the \gls{SFA},
\begin{eqnarray}\label{Ch4_SFASpectra}
\Gamma_{SFA}(\K_1, \K_2) &\propto& \exp\left(-\frac 2{\omega}\Im S_{SFA}(\K_1, \K_2)\right), \\
S_{SFA}(\K_1, \K_2)&=&  \int_{\Re\varphi_r^0}^{\varphi_r^0}
\left\{  \frac 12 \left[\K_1 + \A(\varphi)\right]^2 +  \frac 12 \left[\K_2 + \A(\varphi)\right]^2 \right.\nonumber\\
&& \left. \qquad -\frac 12 \left[ \A(\varphi)-\A(\Phi(\gamma; \varphi)) \right]^2 + I_p^{(2)} \right\} d\varphi,
\nonumber
\end{eqnarray}
where the phase of ``birth'' (ionization) $\Phi(\gamma; \varphi)$ corresponding to the recollison phase $\varphi$ and the transition point $\varphi_r^0$ are defined by Equations (\ref{Ch4_QuantPhaseBirth}) and (\ref{Ch4_DeltaEGamma}) in Appendix \ref{Appendix_1} and $\gamma$ is the Keldysh parameter for the ion [see Equation (\ref{Ch4_KeldyshParameter})].

The major stumbling block is to
account for the electron-electron and the electron-ion
interactions on the same footing, nonperturbatively. To include
these crucial corrections, we have to include the corresponding
Coulomb interactions into $E_{i,f}(\varphi)$. With the nucleus located
at the origin, the electron-electron and the electron-core
interaction energies are
\begin{equation}\label{Ch4_PotentialEnergies}
V_{ee} = 1/|\R_{12}(t)|, \qquad V_{en}^{(1,2)} = -2/|\R_{1,2}(t)|,
\end{equation}
correspondingly. Here $\R_{12}(t) = \R_1(t) - \R_2(t)$ and
$\R_{1,2}(t)$ are the trajectories of the two electrons.

However, we immediately see problems. The corrections depend on
the specific trajectory, and one needs to somehow decide what
this trajectory should be. Note that the classical trajectories
$\R_{1,2}(t)$ in the presence of the laser field and the Coulomb
field of the nucleus may even be chaotic. The solution to this
problem has already been discussed in the \gls{PPT} approach for single-electron ionization. In the
spirit of the eikonal approximation, these trajectories can be
taken in the laser field only \cite{Perelomov_1967_B, Perelomov_1968, Smirnova_2008}, so that they correspond to the saddle points of the
standard \gls{SFA} analysis. Not surprisingly, in the \gls{SFA} these
trajectories start at the origin,
$$
\R_{1,2}(t) = \int_{t_0}^{t} [ \K_{1,2} + \A(\tau)] d\tau.
$$

However, here we run into the second problem: the potentials
$V_{ee}$ and $V_{en}^{(1,2)}$ are singular. Consequently,  the
integral in Equation (\ref{Ch2_PropNonAdiabat_Summary}) is divergent and the result is
unphysical. Therefore, such implementation of the Coulomb
corrections requires additional care.

Appendix \ref{Appendix_2} describes an approach that deals with
these two problems, both defining the relevant trajectories and
removing the divergences of the integrals. Summarizing the results of Appendix \ref{Appendix_2}, we conclude that these problems are overcome by simply using the \gls{SFA} trajectories and soften versions of the Coulomb potentials (\ref{Ch4_PotentialEnergies}). Evidently, there are many ways to soften (i.e., to remove the singularity) of the Coulomb potentials; nevertheless, final results qualitatively are not affected by such freedom.

\section{The Correlated Two-electron Ionization within the \glsentryname{SFA}}\label{Ch4_s3}

In this section, we analyze the correlated spectrum of the \gls{NSDI} by
using the \gls{SFA}. In the next sections,
we will improve the \gls{SFA} result by employing the perturbation
theory in action with the \gls{SFA} result as the zero-order
approximation.

Ignoring the Coulomb corrections in Equation (\ref{Ch4_GenGamma}) and performing the saddle-point calculations described in Appendix \ref{Appendix_1},  we reach the usual \gls{SFA} expression for the correlated \gls{NSDI} spectra -- Equation (\ref{Ch4_SFASpectra}).

\begin{figure}
\begin{center}
\includegraphics[scale=0.6]{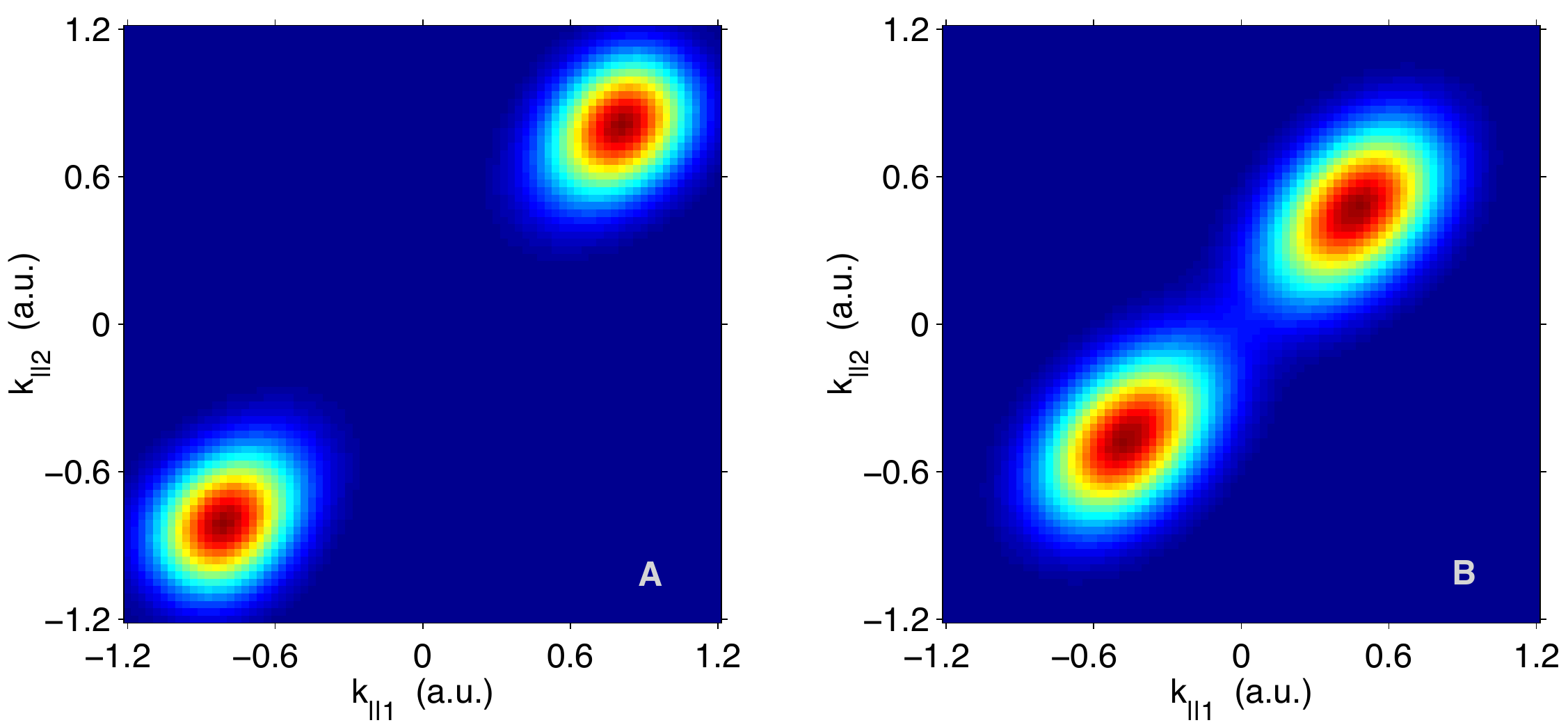}
\end{center}
\caption[Correlated two-electron spectra of Ar within the
\glsentryname{SFA}]{(Reprinted from Reference \cite{Bondar2009}. Copyright (2009) by the American Physical Society.) Correlated two-electron spectra (\ref{Ch4_SFASpectra}) of Ar (linear scale) within the
\glsentryname{SFA} at $7\times 10^{13}$ W/cm$^2$, 800 nm ($\k_1 = \k_2 = 0$)
(a) $\gamma=0$; (b) $\gamma = 1.373$. Maxima of figures correspond to probability densities: (a) $1.7\times10^{-6}$, (b) $2.9\times10^{-15}$.
}\label{Ch4_Fig1}
\end{figure}

To illustrate the \gls{SFA} results, we plot the two-electron correlated
spectrum for a system with $I_p$ of Ar in Figure \ref{Ch4_Fig1}. In Figure
\ref{Ch4_Fig1}(a) we set $\gamma=0$. Such an \gls{SFA} spectrum has a peak at
$\kk_1 = \kk_2 \approx -0.78$ a.u., which is the maximum of the
vector potential  $ -F/\omega \approx -0.78$ a.u. The last fact
has the following  interpretation: \gls{NSDI} is most efficient when the
velocity of the incident electron is maximal. This is achieved
near the zero of the laser field, ${\bf E}(\varphi) = {\bf
F}\cos\varphi$, and the maximum of $\A(\varphi)$. An electron
liberated at this time could acquire the final drift velocity
$\approx -F/\omega$. However, including the correct value of the
Keldysh parameter $\gamma$ not only substantially shifts the peak
position [Figure \ref{Ch4_Fig1}(b)], but also lowers the maximum by nine
orders of magnitude.

\section{Electron-Electron Interaction}\label{Ch4_s4}

In this section, we demonstrate the changes in the correlated spectrum due to
 the electron-electron repulsion.

Coulomb corrections to the single-electron \gls{SFA} theory were first
introduced by \gls{PPT} using the quasiclassical
(imaginary time) method (for reviews, see References \cite{Popov2004, Popov2005}). More recently, further improvements to this method
have been considered in References \cite{Popruzhenko2008a, Popruzhenko2008b, Popruzhenko2009}.  These improvements considered not only
subbarrier motion in imaginary time, but also the effects of the
Coulomb potential on the phase of the outgoing wave packet in
the classically allowed region. These improvements allowed the authors
of References \cite{Popruzhenko2008a, Popruzhenko2008b, Popruzhenko2009} to obtain
quantitatively accurate results not only for ionization yields,
but also for the above threshold ionization spectra of direct
electrons (i.e., not including recollision). An alternative, but
conceptually similar, approach is the \gls{SF-EVA}.
Unlike the two previous methods, the \gls{SF-EVA} allows a simple
treatment of the electron-electron and electron-ion interaction in
the two-electron continuum states.

 According to the \gls{SF-EVA}, the contribution of the interaction potentials is calculated along the \gls{SFA} trajectories,
$$
\R_{1,2}(\varphi) = \frac 1{\omega}\int_{\varphi_r^0}^{\varphi} \left[ \K_{1,2} + \A(\phi)\right] d\phi.
$$
Note that at the moment of recollision $\varphi_r^0$,  the
electrons are assumed to be at the origin, $ \R_{1,2}(\varphi_r^0)
={\bf 0}$. However, this does not cause any divergence since
according to Equation (\ref{Ch4_GenGamma}) we have to use the regularized
potential $V_{ee,lng}$.

From Equations (\ref{Ch4_PotentialsIntro}) and (\ref{Ch4_Potentials}), the
potential energy of electron-electron repulsion along these
trajectories is given by
\begin{eqnarray}\label{Ch4_Vee2}
V_{ee,lng}(\varphi) &=&  \frac 1{r_{12}(\varphi)} \left(1- \exp\left[ - \frac{r_{12}(\varphi)}{r_{12}^{(0)}}\right] \right), \nonumber\\
r_{12}(\varphi)  &=& |\R_1(\varphi) - \R_2 (\varphi) |
 = \sqrt{ \left[ (\kk_1-\kk_2)\frac{\varphi-\varphi_r^0}{\omega} \right]^2 + \left[ (\k_1-\k_2)\frac{\varphi-\varphi_r^0}{\omega}
\right]^2}.
\end{eqnarray}
As discussed in Appendix \ref{Appendix_2}, the parameter $r_{12}^{(0)}$ is set to the ionic radius, $r_{12}^{(0)}=1/I_p^{(2)}$.

The correlated spectrum, which accounts for the electro-electron interaction, has the form
\begin{eqnarray}\label{Ch4_SpectrSFAEE}
\Gamma_{ee}(\K_1, \K_2) &\propto&  \exp\left(-\frac 2{\omega} \Im\left[S_{SFA}(\K_1, \K_2) + S_{ee}(\K_1, \K_2) \right]\right), \\
S_{ee}(\K_1, \K_2) &=&  \int_{\Re\varphi_r^0}^{\varphi_r^0} V_{ee,lng}(\varphi)d\varphi. \nonumber
\end{eqnarray}

\begin{figure}
\begin{center}
\includegraphics[scale=0.6]{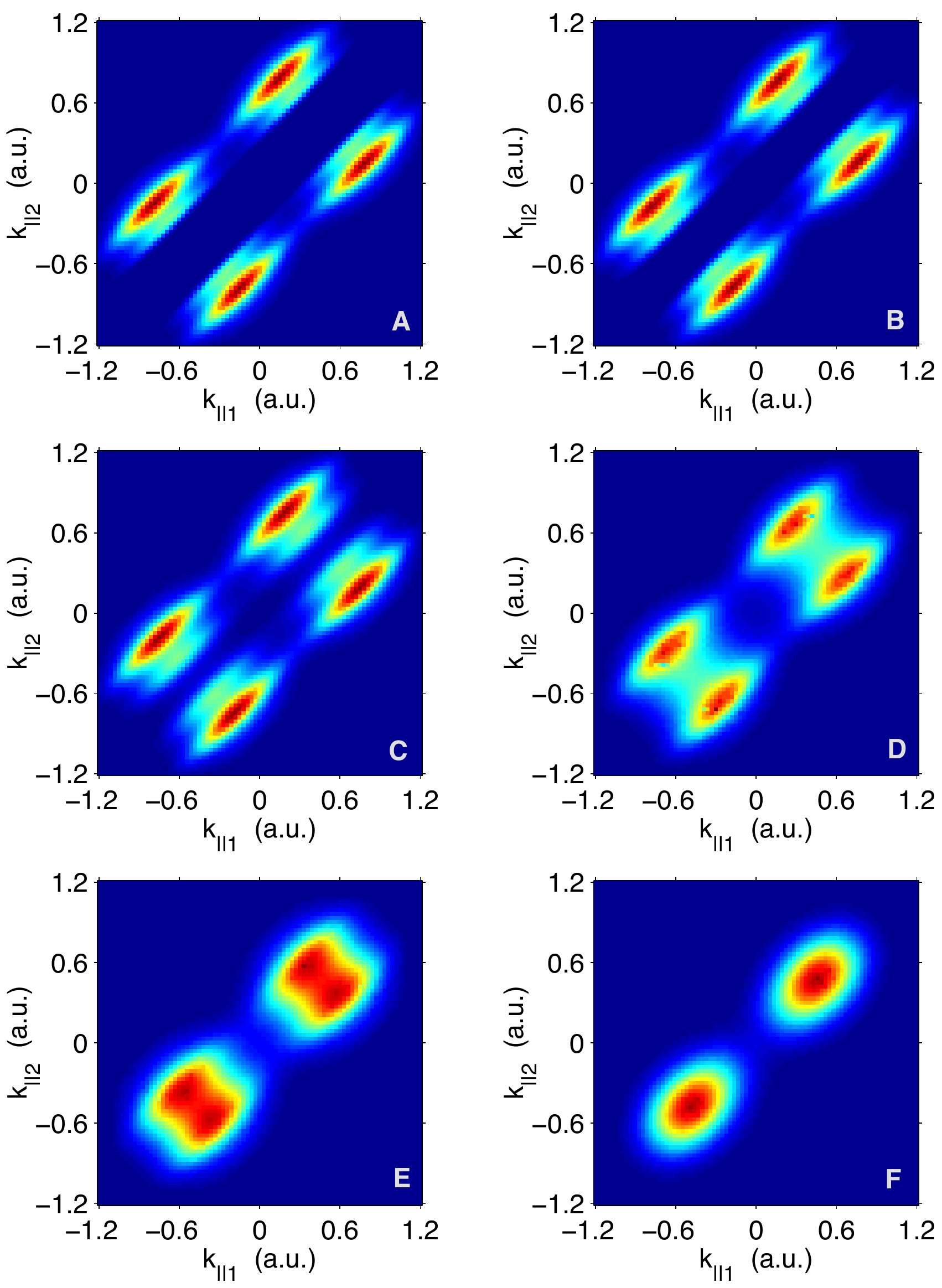}
\end{center}
\caption[Role of the electron-electron interaction in correlated two-electron
spectra of Ar]{(Reprinted from Reference \cite{Bondar2009}. Copyright (2009) by the American Physical Society.) Role of electron-electron interaction. Correlated
spectra  of Ar (linear scale) at $7 \times 10^{13}$ W/cm$^2$, 800
nm are calculated using Equation (\ref{Ch4_SpectrSFAEE}) with $r_{12}^{(0)}
= 0.985$ a.u. ($\gamma=1.373$). Electron-core interaction is not
included.  Spectra are shown for different values of $\k$ (in
a.u.) for both electrons: (a) $\k_1 = \k_2 = 0$; (b) $\k_1 = 0,
\k_2 = 0.2$; (c) $\k_1 = -0.1, \k_2 = 0.2$; (d) $\k_1=-0.2, \k_2 =
0.3$; (e) $\k_1 = -0.3, \k_2 = 0.3$; (f) $\k_1 = -0.5, \k_2 =
0.5$. Maxima of figures correspond to probability densities: (a)
$9.1\times 10^{-19}$, (b) $6.5\times 10^{-19}$, (c)  $8.36\times
10^{-19}$, (d) $8.4\times 10^{-19}$, (e) $9.1\times 10^{-19}$, (f)
$2.8\times 10^{-20}$. } \label{Ch4_Fig2}
\end{figure}

Figure \ref{Ch4_Fig2} shows the contribution of electron-electron
repulsion to the spectra of \gls{NSDI} for an atom with $I_p$ of
Ar (for experimental data see Reference \cite{Liu_2008} and Figure \ref{Ch4_Fig5}).

Comparing Figures \ref{Ch4_Fig1} and \ref{Ch4_Fig2}, we readily notice a
dramatic  influence of electron-electron interaction on the
correlated spectra. Electron-electron repulsion splits each \gls{SFA}
peak into two peaks because, due to the Coulomb interaction, two
electrons cannot occupy the same volume. Note that the larger the
difference between the perpendicular momenta of both the
electrons, the closer is the location of the peaks.

\section{Electron-Ion Interaction}\label{Ch4_s5}

Now we include the electron-ion attraction. The potential energy of  electron-ion
interaction for the case of two electrons and a single core, after partitioning (\ref{Ch4_PotentialsIntro}) and (\ref{Ch4_Potentials}), is
\begin{eqnarray}\label{Ch4_Vei2}
V_{en,lng}(\varphi) &=& -2\sum_{i=1}^2  \frac 1{r_{i}(\varphi)} \left(1- \exp\left[ - \frac{r_{i}(\varphi)}{r_0}\right] \right), \nonumber\\
r_{1,2}(\varphi) &=&
\sqrt{\left( \kk_{1,2}
\frac{\varphi-\varphi_r^0}{\omega} + \frac F{\omega^2}(
\cos\varphi - \cos\varphi_r^0 )\right)^2 + \left(
\k_{1,2}\frac{\varphi-\varphi_r^0}{\omega} \right)^2 }.
\end{eqnarray}
As far as the parameter $r_0$ is concerned, we set it equal to $r_{12}^{(0)}=1/I_p^{(2)}$.

Finally, the correlated spectrum of \gls{NSDI}, which takes into account
both the electron-electron and electron-ion interactions, reads
\begin{eqnarray}\label{Ch4_SpectrSFAEEEI}
\Gamma_{ee+en}(\K_1, \K_2) &\propto&
\exp\left(-\frac 2{\omega} \Im\left[S_{SFA}(\K_1, \K_2) + S_{ee}(\K_1, \K_2) + S_{en}(\K_1, \K_2)\right]\right), \\
S_{en}(\K_1, \K_2) &=& \int_{\Re\varphi_r^0}^{\varphi_r^0} V_{en,lng}(\varphi)d\varphi. \nonumber
\end{eqnarray}

\begin{figure}
\begin{center}
\includegraphics[scale=0.6]{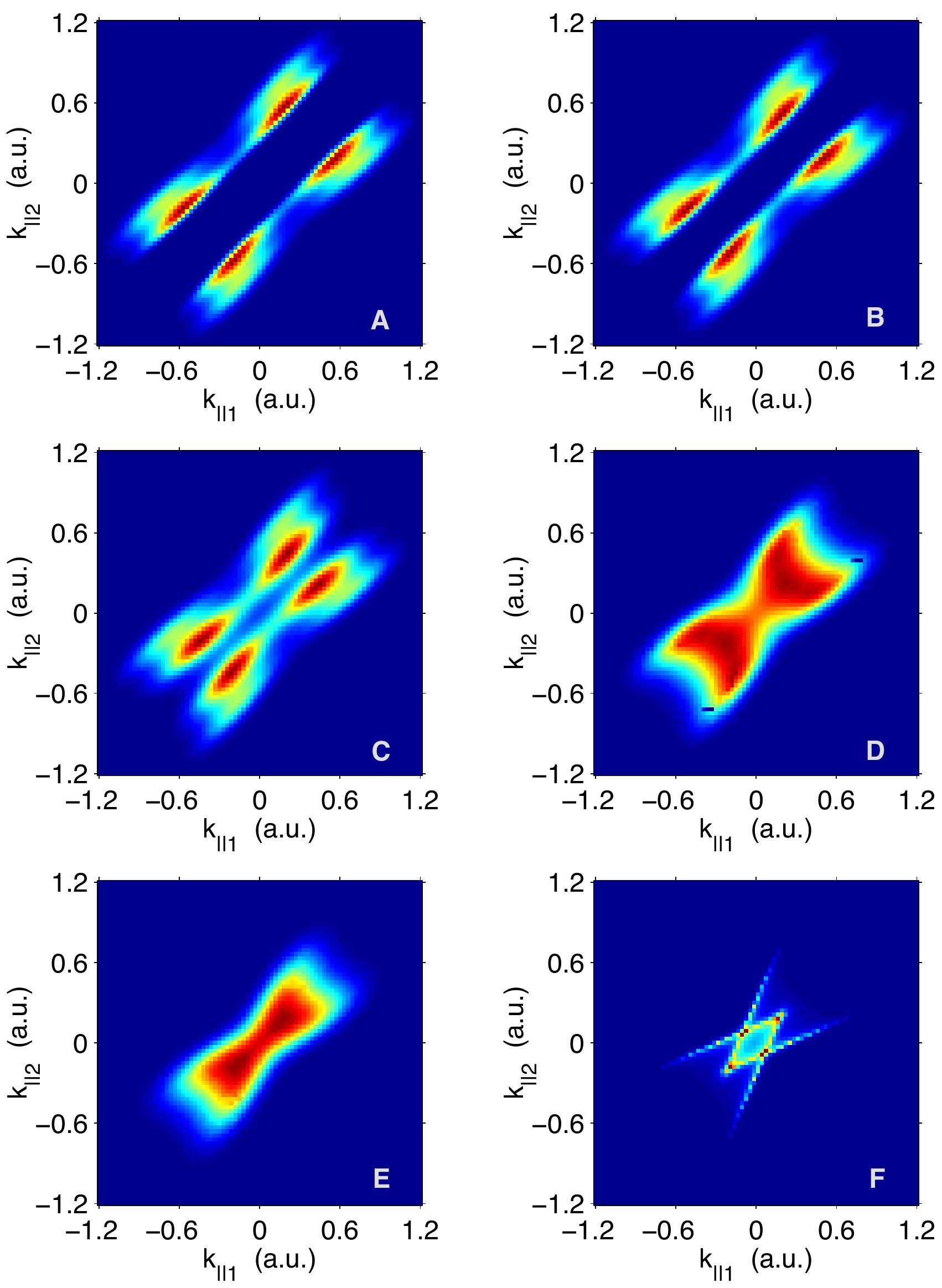}
\end{center}
\caption[Role of the electron-core and electron-electron interactions in correlated two-electron spectra
of Ar]{(Reprinted from Reference \cite{Bondar2009}. Copyright (2009) by the American Physical Society.) Role of the electron-core and electron-electron interactions. Correlated spectra
of Ar (linear scale) at $7 \times 10^{13}$ W/cm$^2$, 800 nm are
calculated using Equation (\ref{Ch4_SpectrSFAEEEI}) with $r_{12}^{(0)}=r_0
= 0.985$ a.u. ($\gamma=1.373$). Both electron-elecron and
electron-core interactions are  included. Spectra are shown for
different values of $\k$ (in a.u.) for both electrons: (a) $\k_1 =
\k_2 = 0$; (b) $\k_1 = 0, \k_2 = 0.2$; (c) $\k_1 = -0.1, \k_2 =
0.2$; (d) $\k_1=-0.2, \k_2 = 0.3$; (e) $\k_1 = -0.3, \k_2 = 0.3$;
(f) $\k_1 = -0.5, \k_2 = 0.5$. Maxima of figures correspond to
probability densities: (a) $2.8\times10^{-5}$, (b)
$3.3\times10^{-5}$, (c) $6.1\times10^{-5}$, (d)
$1.5\times10^{-4}$, (e) $4.4\times10^{-4}$, (f)
$7.7\times10^{-4}$. } \label{Ch4_Fig3}
\end{figure}

\begin{figure}
\begin{center}
\includegraphics[scale=0.6]{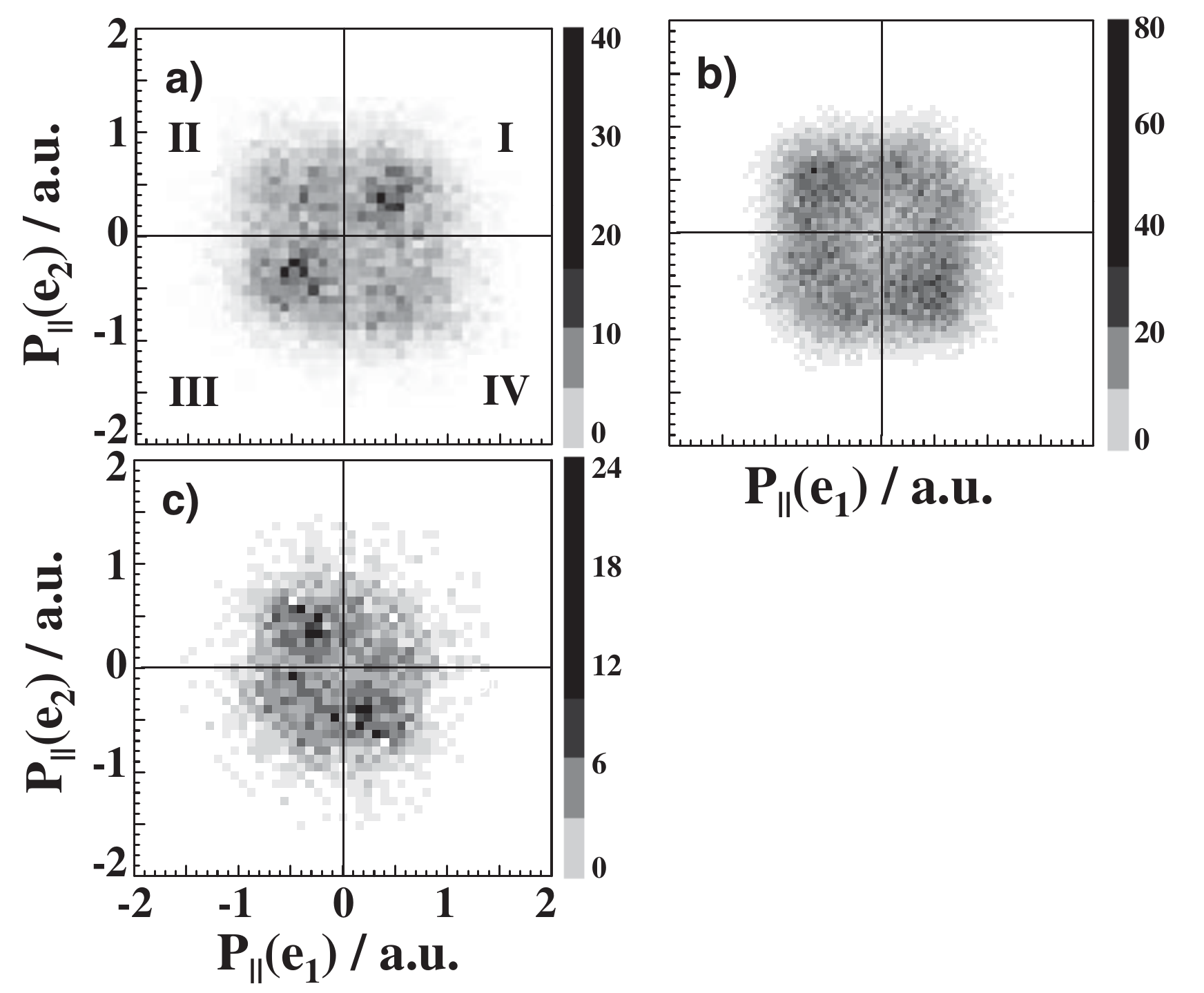}
\caption[Experimentally measured two-electron correlated spectra of Ar]{(Reprinted from Reference \cite{Liu_2008} with kind permission of Y. Liu. Copyright (2008) by the American Physical Society.) Experimentally measured correlated spectra of Ar at 800 nm (a) $9\times 10^{13}$ W/cm$^2$ \cite{Eremina2003}, (b) $7\times 10^{13}$ W/cm$^2$ \cite{Liu_2008}, and (c) $4 \times 10^{13}$ W/cm$^2$ \cite{Liu_2008}.}\label{Ch4_Fig5}
\end{center}

\begin{center}
\includegraphics[scale=0.6]{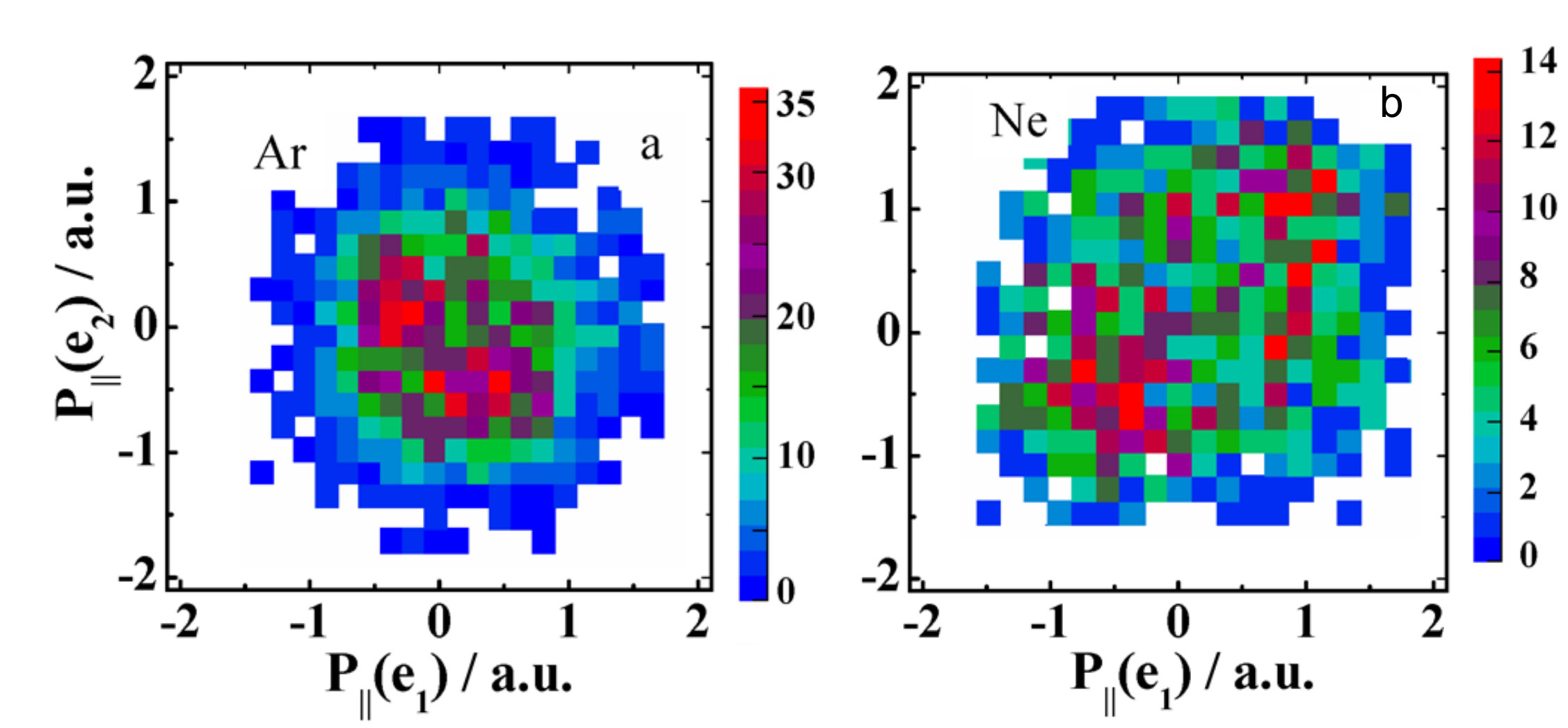}
\end{center}
\caption[Experimentally measured two-electron correlated spectra of Ar and Ne]{(Reprinted from Reference \cite{Liu2010} with kind permission of Y. Liu. Copyright (2010) by the American Physical Society.) Experimentally measured correlated spectra of (a) Ar at $3\times 10^{13}$ W/cm$^2$, 800 nm and (b) Ne at $1.5 \times 10^{14}$ W/cm$^2$, 800 nm in Reference \cite{Liu2010}.}\label{Ch4_Fig6}
\end{figure}

To illustrate the influence of the electron-ion attraction, we
have plotted the correlated spectra of Ar in Figure \ref{Ch4_Fig3} for
different perpendicular momenta. Comparing Figures \ref{Ch4_Fig2} and
\ref{Ch4_Fig3}, we conclude that the  larger the difference between
the perpendicular momenta of the two electrons, the larger is the
contribution of the electron-ion interaction. Furthermore,
accounting for electron-ion attraction increases the probability of
\gls{NSDI} by 15 orders of magnitude. This occurs because, as in
the case of single-electron ionization, electron-core interaction
significantly lowers an effective potential barrier. We can also
conclude that correlated spectra pictured in Figures \ref{Ch4_Fig3}(c),
\ref{Ch4_Fig3}(d), and \ref{Ch4_Fig3}(e) have the biggest contribution to
the total probability of \gls{NSDI}, which is an integral of the
probability density over momenta of both the electrons. Note that,
on the one hand, the maximum of the probability density shown in
Figure \ref{Ch4_Fig3}(f) is the largest among those presented in Figure
\ref{Ch4_Fig3}, and on the other hand, this maximum is localized in a few
pixels; therefore, the integral contribution of Figure \ref{Ch4_Fig3}(f)
to the total probability is smaller than Figure \ref{Ch4_Fig3}(e).
Additionally, as one would expect, further increasing of $\k$ leads
to a decrease of probability density.  The correlated spectra in Figure
\ref{Ch4_Fig3}  agree with the experimental data \cite{Liu_2008, Liu2010} in
quadrants one and three (see Figures \ref{Ch4_Fig5} and \ref{Ch4_Fig6}). The considered diagram (Figure \ref{Ch4_Fig4})
does not contribute to signals in quadrants two and four. Note
that taking into account a nonzero value of  $\gamma$ is vital to
achieve agreement with the experimental data.

From Equations (\ref{Ch4_Vee2}) and (\ref{Ch4_Vei2}), we can notice that  if
$r_{12}^{(0)} \to \infty$ and $r_0 \to \infty$, the Coulomb
corrections $V_{ee,lng}$ and $V_{en,lng}$ vanish, and the \gls{SFA}
result is recovered. Therefore, we conclude that the radii
$r_{12}^{(0)}$ and $r_0$ contain the information about the initial
position of electrons after they emerged in the continuum.
Obviously, the intra-electron distance should be on the order of
an ion radius. 

\section{Conclusions}\label{Ch4_s6}

The analytical quantum-mechanical theory
of \gls{NSDI} within the deeply quantum regime, when the energy of the active electron driven by the
laser field is insufficient to collisionally ionize the parent
ion, has been formulated based on the adiabatic approach. On the whole, the presented model
qualitatively agrees with available experimental data \cite{Eremina2003, Liu_2008, Liu2010}. We have defined the
quantum-mechanical phase of birth of the active electron
(\ref{Ch4_QuantPhaseBirth}), which accurately accounts for tunnelling
of the recolliding electron in the regime where both the phases
$\varphi_r$ and $\varphi_b$ are complex.  Moreover, it has been
demonstrated that ignoring such a contribution of tunnelling of the
active electron fails to agree with the experimental data.

Furthermore, our results show that any attempt to interpret \gls{NSDI}
spectra  in this regime in terms of a simple \gls{SFA}-based streaking
model would lead to wrong conclusions on the relative dynamics of
the two electrons.

The contributions of the electron-electron and electron-ion
interactions  have been analyzed. Both play an important and
distinct role in forming the shape of the correlated spectra.

The presented model is not able to reproduce the correlated spectra obtained experimentally \cite{Liu_2008, Liu2010} in
quadrants two and four (see Figures \ref{Ch4_Fig5} and \ref{Ch4_Fig6}). It is because the considered process of \gls{SEE}, when
two electrons detach {\it simultaneously} from the atom, does not
 contribute to that area. However, it is widely accepted that the anticorrelation of the electrons in those parts of the correlated spectra is formed due to
\gls{RESI} \cite{Feuerstein_2001, Jesus_2004, Haan2008a, Emmanouilidou2009, Shaaran2010, Shaaran2010a}, and it should be
noted  that this mechanism has been also observed in classical
simulations \cite{Haan2006, Haan2010, Ye2010}.

\glsresetall

\chapter{Nonsequential Double Ionization below Intensity Threshold: Anticorrelation of Electrons without Excitation of the Parent Ion}\label{chapter5}

Is excitation of the parent ion indeed necessary to explain the electron anticorrelation in \gls{NSDI} \gls{BIT}?

We show that this is not always the case. Our conclusions are based, first, on model {\it ab-initio} calculations showing that the anticorrelation of the electrons exists even if the ion has only a single bound state. Second, we present a simple analytical model based on the assumption that both the electrons are ejected simultaneously (the time of return of the first electron coincides with the time of liberation of the second electron, i.e., the ion has no time to be excited). An advantage of this model is that it allows for a simple analytical solution in closed form. In a certain range of parameters, the correlated two-electron spectrum obtained within this model exhibits the anticorrelation of the electrons. This novel mechanism of simultaneous electron emission can produce the anticorrelation of the electrons in the deep \gls{BIT} regime.

\section{An Ab Inito Evidence of the New Mechanism} 

Consider the model Hamiltonian for a system of two one-dimensional electrons 
\begin{eqnarray}\label{Ch5_Model_Hamiltonian}
\h(t) &=& -\frac 12 \left( \frac{\partial^2}{\partial x_1^2} + \frac{\partial^2}{\partial x_2^2} \right) + V(x_1)  + V(x_2)  - V(x_2 - x_1) + (x_1 + x_2) F_L(t),
\end{eqnarray}
where $V(x) =  -4\exp(-3x^2)$ is the prototype for the potential of the electron-core attraction, $-V(x_2 - x_1)$ is the prototype of the electron-electron repulsion, and $F_L(t) = F f(t) \sin (\omega t)$ ($F=0.05$ and $\omega = 0.6$) -- the laser pulse, where $f(t)$ is a trapezoid with one-cycle turn-on, six-cycle full strength, and one-cycle turn-off. The potential $V(x)$ is chosen such that the one-particle Hamiltonian, $\hat{H}_{ion} = -\partial^2/(2 \partial x^2) + V(x)$, supports only one bound state with the ionization potential $I_p^{(2)} = 2.11$. 

To make analysis more transparent, we substitute the original problem (\gls{NSDI}) by the corresponding problem of laser-assisted scattering, i.e., we simply discard the first step of \gls{NSDI} -- liberation of the first electron. In other words, instead of assuming that the two-electron system initially is in the ground state, we assume that the first electron is an incident wave packet and the second electron is in the single bound state of the ion. This modification is very useful, as it allows us to eliminate all other possible interactions and processes except the three major components -- the electron-electron repulsion, the electron-core attraction, and the electron-laser field interaction. 

We solve numerically, by means of the split-operator method, the time-dependent Schr\"{o}dinger equation,
\begin{eqnarray}\label{Ch5_SchrodingerEq_ModelWaveFunc}
i\partial \Psi (x_1, x_2; t) /\partial t = \h(t)\Psi (x_1, x_2; t). 
\end{eqnarray}
The initial unsymmetrized wave function takes the form
\begin{eqnarray}
\Psi (x_1, x_2; 0) = \exp\left[ { -(x_1- \mu)^2/(2\sigma_x)^2 } \right]\psi ( x_2 ),
\end{eqnarray}
where $\sigma_x = 2$, $\mu = 7\sigma_x$, and $\hat{H}_{ion} \psi ( x ) = -I_p^{(2)} \psi ( x )$. The parameters are selected such that,
$
 I_p^{(2)} - (F/\omega)^2/2 \approx 4 \omega
$,
i.e., the bound electron needs to absorb at least four photons to be liberated. The wave function then is symmetrized, i.e., we assume that the spins of the electrons are antiparallel.

The wave function of the model system in the coordinate and momentum representations is pictured in Figure \ref{Ch5_wavefunctions_Fig}.  The wave function in the absence of the laser produces only maxima on the axes $x_{1,2}$ corresponding to one electron bound and the other one free; hence, this part of the wave function, which also shows up when the laser file is turned on, should be ignored because it does not correspond to \gls{NSDI}. From Figure \ref{Ch5_wavefunctions_Fig}, we observe that the two electrons ``prefer'' to be anticorrelated rather than correlated even though we effectively soften the repulsion between them by selecting antiparallel spins.

\begin{figure}
\begin{center}
\includegraphics[scale=0.7]{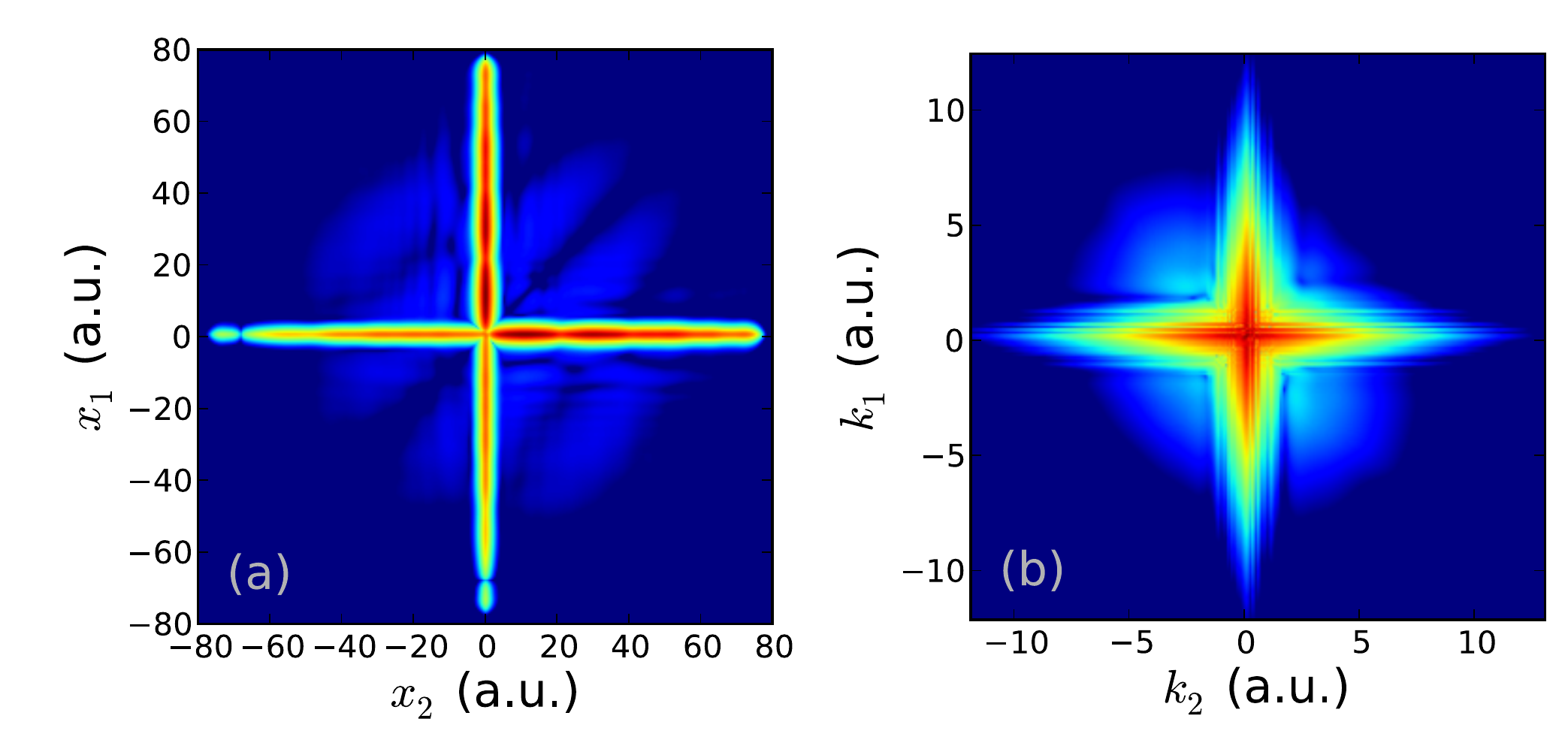}
\end{center}
\caption[Plots of the wave functions of the model system]{Plots of the wave functions of the model system [Equations (\ref{Ch5_Model_Hamiltonian}) and (\ref{Ch5_SchrodingerEq_ModelWaveFunc})]. (a) $\log \left |\Psi (x_1, x_2; t) \right|^2$ and (b) $\log \left |\widetilde{\Psi} (k_1, k_2; t) \right|^2$. Here, $t=73$ and $\widetilde{\Psi}$ is the Fourier transform of $\Psi$.}\label{Ch5_wavefunctions_Fig}
\end{figure}

\section{An Analytical Quasiclassical Expression of Correlated Spectra for the \glsentryname{SEE} Mechanism}

We shall calculate the correlated spectrum within the adiabatic approximation.

As in the previous section, we replace the problem of \gls{NSDI} \gls{BIT} by the problem of laser-assisted scattering of an electron by an ion. We study the \gls{SEE} process, when the moment of collision of the incident electron coincides with the moment of ionization of the ion. 

To include electron-electron repulsion, we follow Chapter \ref{chapter4} and apply the standard exponential perturbation theory -- \gls{SF-EVA}. Namely, first we find electron  energies and trajectories without the electron-electron repulsion. Then, we correct electron action and energies by adding the effect of the electron-electron repulsion, which are calculated along the zero-order trajectories.

Thus, in zero order, we define $E_{i,f}(\varphi)$ without the electron-electron repulsion. Since before ionization, one electron is free and the other is bound, the total classical energy of the system before the collision is
\begin{eqnarray}\label{Ch5_InitialEnergy}
E_i(\varphi) &=& \frac 12 \left[ \P + \A(\varphi) \right]^2 - I_p^{(2)}, 
\end{eqnarray}
where $\P$ is the canonical momentum (i.e., the kinetic momenta at $\varphi = \pm\infty$) of the incident electron, $I_p^{(2)}$ is the ionization potential of the ion, and 
$
\A(\varphi) = -({\bf F}/\omega) \sin\varphi
$
is the vector potential of a linearly polarized laser field. After collision both the electrons are free, and the classical energy of the system reads
\begin{eqnarray}\label{Ch5_FinalEnergy}
E_f(\varphi) &=& \frac 12 \left[ \K_1 + \A(\varphi) \right]^2 + \frac 12 \left[ \K_2 + \A(\varphi) \right]^2,
\end{eqnarray}
where $\K_{1,2}$ are canonical momenta of the first and second electrons.

Such two electron process is formally equivalent to single-electron strong field ionization of a quasiatom (within a pre-exponential accuracy). This statement is manifested by the following identity 
\begin{eqnarray}\label{Ch5_SingleElectronIonizationAndSEE}
 \frac 1{\omega} \int^{\varphi_0} \left[ E_f(\varphi) - E_i(\varphi) \right]d\varphi  &=& \frac 1{\omega} \int^{\varphi_0} \left\{ \frac 12 \left[ \mathrsfs{K} + A(\varphi) \right]^2 + \mathrsfs{I}_p \right\}d\varphi,
\end{eqnarray}
where the right hand side of Equation (\ref{Ch5_SingleElectronIonizationAndSEE}) is the action of single-electron ionization within the \gls{SFA} [Equation (\ref{Ch3_SingleElectronIonizAction})], $\mathrsfs{K} = \kk_1 + \kk_2 - \pp$ is the effective longitudinal momentum, and  $\mathrsfs{I}_p = I_p^{(2)} + \left(  \K_1^2 + \K_2^2 - \P^2 - \mathrsfs{K}^2 \right)/2$ 
is the effective ionization potential of the quasiatom. Therefore, derivation of the correlated spectra for \gls{SEE} is reduced to calculation of the momentum distribution of photoelectrons after single-electron ionization without any restrictions on $\mathrsfs{K}$. The most suitable solution of the last problem for our current discussion is given by Equation (\ref{Ch3_GeneralRate}). However, this equation is valid only for positive $\mathrsfs{I}_p$; hence, the case of $\mathrsfs{I}_p < 0$ being of interest for \gls{SEE}, must also be considered.

Substituting Equations (\ref{Ch5_InitialEnergy}) and (\ref{Ch5_FinalEnergy}) into Equation (\ref{Ch2_PropNonAdiabat_Summary}) and taking into account the previous comments, we obtain
\begin{eqnarray}
&& \Gamma (\K_1, \K_2) \propto \exp\left( -2 \left|\mathrsfs{I}_p\right| G /\omega + V_{ee} \right), \label{Ch5_QuasiClassicalCorrSpectra}
\end{eqnarray}
where
\begin{eqnarray}
\varphi_0 &=& \left\{
\begin{array}{ccc}
\arcsin x_- + i\arccosh x_+ &\mbox{if}& \mathrsfs{I}_p < 0 \\
 \arcsin y_- + i\arccosh y_+ &\mbox{if}&  \mathrsfs{I}_p > 0,
\end{array}
\right. \\
G &=& \left\{
\begin{array}{r}
\left( \eta^2 + \frac 1{2g^2} - 1 \right) \arccosh x_+ - \sqrt{x_+^2 - 1} \left( \frac{2\eta x_-}g + x_+ \frac{1-2x_-^2}{2g^2} \right)  \mbox{ if } \mathrsfs{I}_p < 0, \\
\left( \eta^2 + \frac 1{2g^2}  +1 \right) \arccosh y_+ - \sqrt{y_+^2 - 1}\left( \frac{2\eta y_-}g + y_+ \frac{1-2y_-^2}{2g^2}\right)  \mbox{ if }  \mathrsfs{I}_p > 0, 
\end{array}
\right. \\
x_{\pm} &=& \left| g(\eta-1) + 1\right|/2 \pm \left| g(\eta-1) - 1 \right|/2, \nonumber\\
y_{\pm} &=& \sqrt{ (g\eta+1)^2 + g^2}/2 \pm \sqrt{ (g\eta -1)^2 + g^2 }/2 ,\nonumber\\
g &=& \omega \sqrt{2\left|\mathrsfs{I}_p\right|} / F, \qquad \eta = \mathrsfs{K} / \sqrt{2\left|\mathrsfs{I}_p\right|}.\nonumber
\end{eqnarray}
In Equation (\ref{Ch5_QuasiClassicalCorrSpectra}), $V_{ee}$ denotes a crucial correction due to the electron-electron repulsion, 
\begin{eqnarray}
V_{ee} &=& -\frac 2{\omega} \Im \int_{\Re\varphi_0}^{\varphi_0} \frac{d\varphi}{\left| \R_1(\varphi) - \R_2 (\varphi) \right|}= \lim_{\varepsilon\to 0} \frac{-2}{\left| \K_1 - \K_2 \right|} \Im\int_{\Re\varphi_0}^{\varphi_0} \frac{d\varphi}{\sqrt{(\varphi-\varphi_0)^2 + \varepsilon^2}} \nonumber\\
&=& -\pi / \left| \K_1 - \K_2 \right| .
\end{eqnarray}

\begin{figure}
\begin{center}
\includegraphics[scale=0.8]{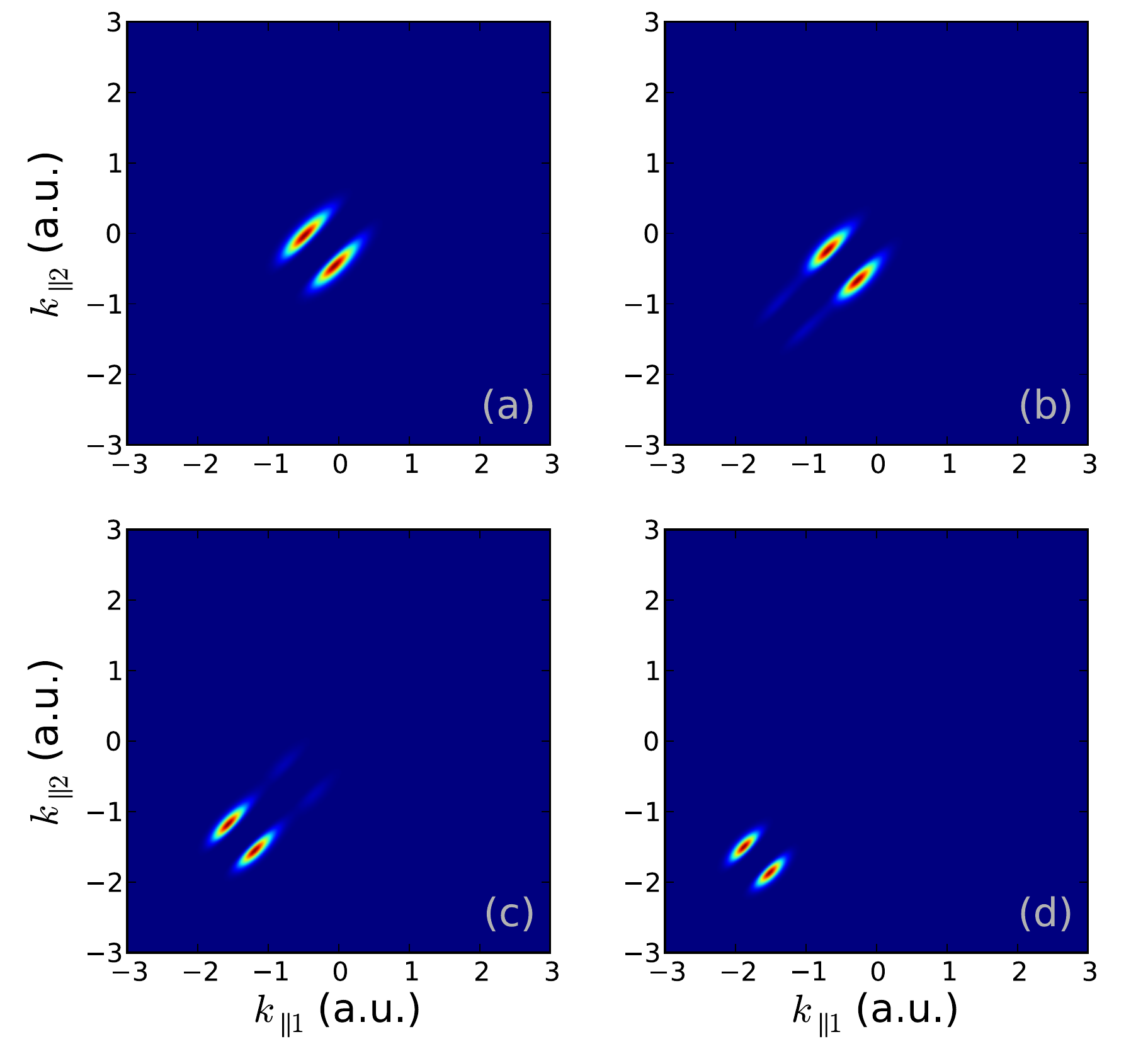}
\end{center}
\caption[Correlated two-electron spectra of Ar for different momenta of the incident electron]{Correlated two-electron spectra of Ar (linear scale) given by Equation (\ref{Ch5_QuasiClassicalCorrSpectra}) at  800 nm and $1\times 10^{13}$ W/cm$^2$ ($\p = \k_1 = \k_2 = 0$) for different momenta of the incident electron: (a) $\pp = 0.1$, (b) $\pp = 0.2$, (c) $\pp=0.25$, (d) $\pp=0.4$.}\label{Ch5_IncidentMomentumDependence}
\end{figure}

\begin{figure}
\begin{center}
\includegraphics[scale=0.7]{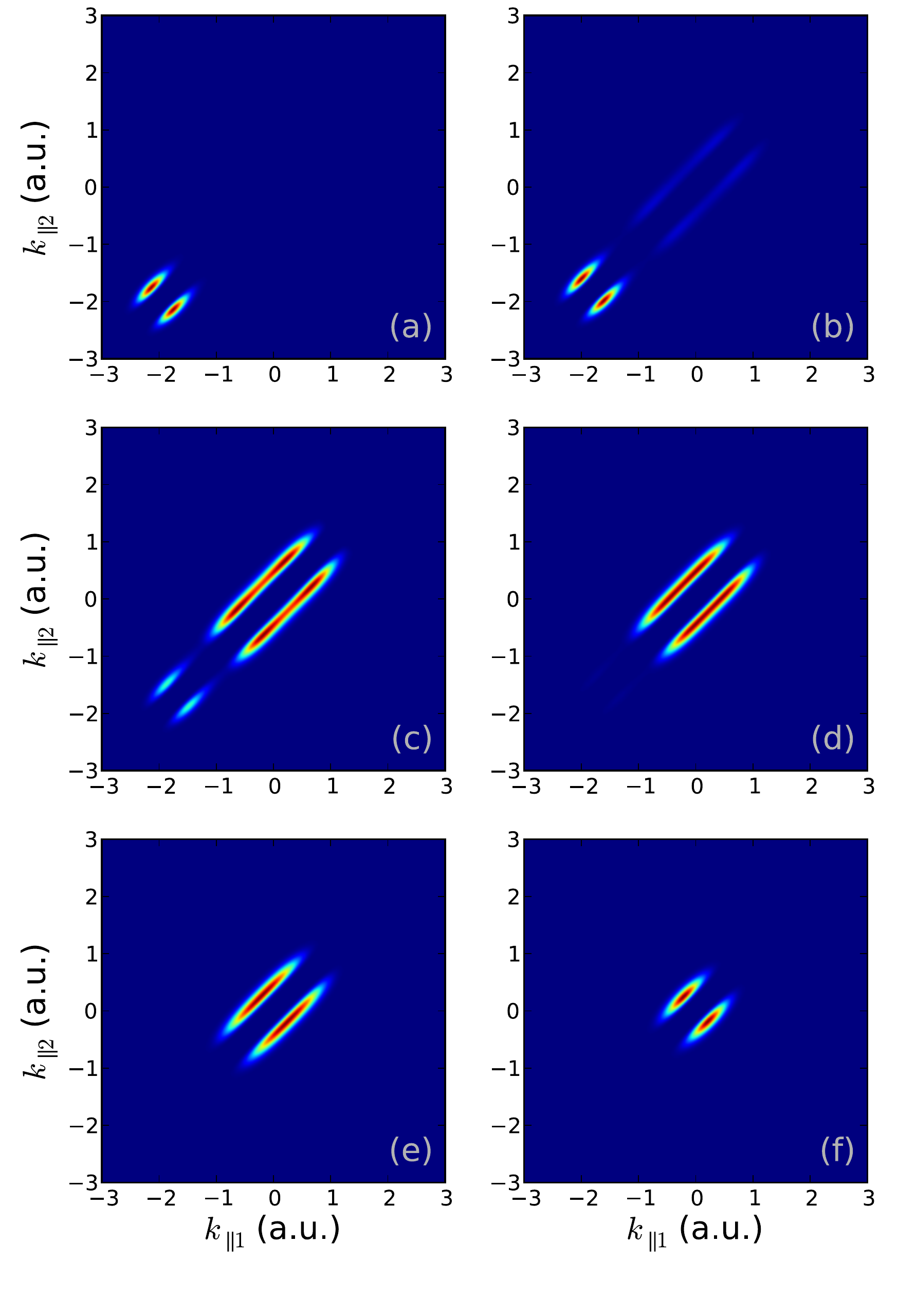}
\end{center}
\caption[Correlated two-electron spectra of Ar for different intensities of the laser field]{Correlated two-electron spectra of Ar (linear scale) given by Equation (\ref{Ch5_QuasiClassicalCorrSpectra}) at  800 nm ($\P = {\bf 0}$ and $\k_1 = \k_2 = 0$) for different intensities of the laser field: (a) $3\times 10^{13}$ W/cm$^2$, (b) $2.7\times 10^{13}$ W/cm$^2$, (c) $2.5\times 10^{13}$ W/cm$^2$, (d) $2.3\times 10^{13}$ W/cm$^2$, (e) $2\times 10^{13}$ W/cm$^2$, and (f) $1\times 10^{13}$ W/cm$^2$.}\label{Ch5_IntensityDependence}
\end{figure}

The model of \gls{NSDI} \gls{BIT} presented here [Equation (\ref{Ch5_QuasiClassicalCorrSpectra})] is similar to the one developed in Chapter \ref{chapter4}. However, there is an important difference. In the correlated spectra (\ref{Ch5_QuasiClassicalCorrSpectra}), the momentum of the incident electron ${\bf p}$ is a free parameter, whereas the correlated spectrum given by Equation (\ref{Ch4_SpectrSFAEEEI}) does not have this freedom -- the canonical momentum of the recolliding electron is fixed as a function of the phase of recollision [see Eq. (\ref{Ch4_QuantPhaseBirth})]. Therefore, Equation (\ref{Ch5_QuasiClassicalCorrSpectra}) allows us to establish the link between values of the canonical momentum of the rescattered (incident) electron and specific portions of the correlated spectra. The dependance of the correlated spectra (\ref{Ch5_IncidentMomentumDependence}) on the momentum of the incident electron is presented in Figure \ref{Ch5_IncidentMomentumDependence}.

However, more interesting question is the dependence of the correlated spectrum (\ref{Ch5_QuasiClassicalCorrSpectra}) on the intensity of the laser field, which is pictured in Figure \ref{Ch5_IntensityDependence}. The parameters used to plot Figure \ref{Ch5_IntensityDependence}(a) coincide with the parameters employed in the recent experiment \cite{Liu2010} (see also Figure \ref{Ch4_Fig6}). Yet, the experimentally measured correlated spectrum exhibits peaks in the second and fourth quadrants, which contradicts Figure \ref{Ch5_IntensityDependence}(a). The reason of such a disagreement is that the anticorrelation in these experimental data is due to the \gls{RESI} mechanism. As the intensity lowers, the peaks in the correlated spectra of \gls{SEE} shift to the second and fourth quadrants [see Figure \ref{Ch5_IntensityDependence}(f)], i.e., the \gls{SEE} process leads to the anticorrelation of the electrons in the deep \gls{BIT} regime. This can be explain intuitively in the following way: The lower the intensity, the smaller the canonical momentum of the returning (incident) electron. Since the canonical momentum of the system is approximately conserved, we obtain ${\bf 0} \approx \K_1 + \K_2$; hence, $\K_1 \approx -\K_2$. Having absorbed a necessary number of photons, both the electrons emerge in the continuum where they experience strong electron-electron repulsion that pushes them apart. Therefore, during \gls{SEE}, the electrons gain momenta because of the electron-electron repulsion. Indeed, if the electron-electron interaction is ``turned off''  in Equation (\ref{Ch5_QuasiClassicalCorrSpectra}) by setting $V_{ee} = 0$, then we obtain a single peak centred at the origin instead of the two peaks visible in Figure \ref{Ch5_IntensityDependence}(f). Recall that contrary to \gls{SEE}, the electrons gain kinetic energy by oscillating in the laser field in the course of \gls{RESI}.

\section{Conclusions} 

We have demonstrated that the mechanism of simultaneous electron emission, when the time of return of the rescattered electron is equal to the time of liberation of the bounded electron, can be responsible for the anticorrelation of the electrons during \gls{NSDI} in the deep \gls{BIT} regime [see Figure \ref{Ch5_IntensityDependence}(f)]. The \gls{SEE} process significantly differs from \gls{RESI} because it does not require the excitation of the ion to explain the anticorrelation of the electrons observed in the two-electron correlated spectra. \gls{SEE} and \gls{RESI} are by no means mutually exclusive; they both contribute. It will be interesting to study quantitatively the relative contribution of \gls{SEE} and \gls{RESI} to \gls{NSDI} in the deep \gls{BIT} regime.
\glsresetall

\chapter{Shapes of Leading Tunnelling Trajectories for Single-electron Molecular Ionization}\label{chapter6}

If the frequency of the laser field is very low, then it is a good approximation to reduce the original time dependent problem to the time independent one where the laser field is substituted by a static field. In such a picture, ionization processes are realized by quantum tunnelling. From the interpretational point of view, it is advantageous to use the language of quasiclassical trajectories to calculate tunnelling rates. Since the quasiclassical approximation in the original form cannot be readily utilized in the three-dimensional space, the assumption that all relevant quasiclassical trajectories are close to linear is very important, because it reduces the original three-dimensional problem to the effective one-dimensional problem. Therefore, it is not surprising that such an approximation became almost a tacit assumption in strong filed physics, especially in the case of single-electron molecular ionization. In fact, we have already used a very similar assumption in the time dependent cases when we calculated the Coulomb corrections -- The trajectories of electrons in the combined field of the core and the laser have been assumed to be merely \gls{SFA} trajectories perturbatively corrected by the Coulomb field.

How good is the hypothesis of linearity of tunnelling trajectories in strong fields?  To answer this question, we first need to pose it clearly.

\section{Mathematical Background}\label{Ch6_Sec2}

The instanton approach is one of the methods for description of tunnelling \cite{Holstein1996}. It can be introduced as a result of application of the saddle point approximation to the modification of the Feynman integral obtained after performing the transformation of time $t\to -i\tau$ to ``imaginary time'' $\tau$ (i.e., the Wick rotation). This technique has turned out to be tremendously fruitful in many branches of physics and chemistry (see, e.g., References \cite{Vainshtein1982, Leggett1984, Benderskii1994, Razavy2003a}). 

We shall reiterate the main steps in deriving the instanton approach. Let us consider a quantum system with the Hamiltonian  
\begin{eqnarray}\label{Ch6_InstSec_Hamiltonian}
\hat{H} = -\Delta/(2m) + U({\bf x}),
\end{eqnarray}
where $\Delta$ is the $n$-dimensional Laplacian and ${\bf x}$ is an $n$-dimensional vector. The Feynman integral representation of the propagator reads \cite{Feynman1965} 
\begin{eqnarray}
&& \bra{ {\bf x}_f } e^{-i\hat{H}t_0} \ket{{\bf x}_i} = N \int \mathrsfs{D}[{\bf x}(t)] e^{ i S[{\bf x}(t)] }, \label{Ch6_FeynmanRepPropag}\\
&& S[{\bf x}] = \int_0^{t_0} \mathrsfs{L}({\bf x}, \dot{\bf x}) dt, 
\quad \mathrsfs{L}({\bf x}, \dot{\bf x}) = \frac{\dot{\bf x}^2}{2m} - U({\bf x}), \nonumber
\end{eqnarray}
where the path integral sums up all the paths that obey boundary conditions ${\bf x}(0) = {\bf x}_i$ and ${\bf x}(t_0) = {\bf x}_f$, and $\dot{\bf x}(t) \equiv d {\bf x}(t) /dt$. After performing the Wick rotation, Equation (\ref{Ch6_FeynmanRepPropag}) becomes 
\begin{eqnarray}\label{Ch6_EuclidianFeynmanRepProp}
&& \bra{ {\bf x}_f } e^{-\hat{H}\tau_0} \ket{{\bf x}_i} = N \int \mathrsfs{D}[{\bf x}(\tau)] e^{ -\tilde{S}[{\bf x}(\tau)] }, \\
 && \tilde{S}[{\bf x}] = \int_0^{\tau_0} \left[ \frac{1}{2m} \left(\frac{d{\bf x}(\tau)}{d\tau}\right)^2 + U({\bf x}(\tau))  \right]d\tau, \nonumber
\end{eqnarray}
where $\tau_0 = it_0$ and $\tilde{S}$ is called {\it the Euclidian action}.  
Hence, one can say that the transition from Equation (\ref{Ch6_FeynmanRepPropag}) to Equation (\ref{Ch6_EuclidianFeynmanRepProp}) is achieved by the following formal substitutions
\begin{eqnarray}
 t \to -i\tau, \quad {\bf x}(t) \to {\bf x}(\tau), \quad \dot{\bf x}(t) \to id{\bf x}(\tau)/d\tau. \label{Ch6_WickRotationETC}
\end{eqnarray}
Comparing the actions $S$ and $\tilde{S}$, one concludes that the motion in imaginary time is equivalent to the motion in the inverted potential. In other words, the actions $S$ and $\tilde{S}$ are connected by the substitution
\begin{eqnarray}
U \to -U, \qquad (E \to -E). \label{Ch6_FlippingPotentialEnergy}
\end{eqnarray}
The final step in the instanton approach is the application of the saddle point approximation to the Euclidian Feynman integral in Equation (\ref{Ch6_EuclidianFeynmanRepProp}) assuming that $\tau_0 \to \infty$.

However, there is a long ongoing discussion \cite{Eldad1977, VanBaal1986, Dunne2000, Mueller-Kirsten2001} whether the instanton approach agrees with the quasiclassical approximation for tunnelling; some observations have been made that these two methods may disagree up to a pre-exponential factor. Furthermore,  as it has been pointed out in Reference \cite{Leggett1984}, the instanton approach in the formulation presented so far [substitutions (\ref{Ch6_WickRotationETC})] not only looks like a ``highly dubious manoeuvre,'' but also gives no prescription for getting a correct pre-exponential factor. Consequently,  a natural question aries how this method can be safely used and what the meaning of substitutions (\ref{Ch6_WickRotationETC}) and (\ref{Ch6_FlippingPotentialEnergy}) is.

The mathematical physics community has reinterpreted the instanton approach rigorously (see, e.g., References \cite{Carmona1981, Agmon1982, Helffer1988, Sigal1988a, Sigal1988, Hislop1989a, Herbst1993, Hislop1996} and references therein), and the corresponding rigorous analysis answers both questions. Moreover, this rigorous interpretation is extremely useful because it can be implemented as an effective numerical method, which will lead to a clear physical picture applicable to a broad class of problems. We shall review briefly the cited above works since on the one hand, they are unknown for physicists, and on the other hand, they may be challenging to read for non-specialists in mathematical physics. 

Historically, the first problem considered within such a framework was ``how fast does a bound state decay at infinity?'' \cite{Carmona1981, Agmon1982} (see also Section 3 of Reference \cite{Hislop1996}). Let us clearly pose the question. Consider the Hamiltonian (\ref{Ch6_InstSec_Hamiltonian}) as a self-adjoint operator on $\mathrsfs{L}_2(\mathbb{R}^n)$ -- the space of square-integrable functions. A bound state $\psi \in \mathrsfs{L}_2(\mathbb{R}^n)$ is a normalizable eigenfunction of such a Hamiltonian, $\hat{H}\psi = E\psi$. Since the normalization integral converges, the bound state $\psi = \psi({\bf x})$ must vanish as  $\| {\bf x} \| \to \infty$. Therefore, we want to determine how this decay is affected by the potential $V$. This question can be answered very elegantly if we confine ourself to {\it an upper bound} on the rate of decay. 

To obtain this upper bound, we need to introduce first some geometrical notions. Let $M$ be a real $n$-dimensional manifold (intuitively, $M$ is some $n$-dimensional surface). The tangent space at a point ${\bf x}\in M$, denoted by $T_{\bf x}(M)$, is a real linear vector space $\mathbb{R}^n$ that intuitively contains all the possible ``directions'' in which one can tangentially pass through ${\bf x}$. A metric is an assignment of an inner (scalar) product to $T_{\bf x}(M)$ for every ${\bf x} \in M$. 

Let ${\bf x} \in \mathbb{R}^n$ and ${\bm \xi}, {\bm \eta} \in T_{\bf x}(M)$. We define a (degenerate) metric by
\begin{eqnarray}\label{Ch6_AgmonMetricDeff}
\langle {\bm \xi}, {\bm \eta} \rangle_{\bm x} \equiv 2m(U({\bf x}) - E)_+ \langle {\bm \xi}, {\bm \eta} \rangle,
\end{eqnarray}
where $\langle {\bm \xi}, {\bm \eta} \rangle \equiv {\bm \xi}\cdot {\bm \eta} = \xi_1 \eta_1 + \ldots + \xi_n \eta_n$ is the Euclidean inner product and $f({\bf x})_+ \equiv \max\{ f({\bf x}), 0\}$. Following the convention used in mathematical literature, we shall call metric (\ref{Ch6_AgmonMetricDeff}) as {\it the Agmon metric}.

Having introduced the metric, we can equip the manifold $M$  with many geometrical notions such as distance, angle, volume, etc. The length of a differentiable path ${\bm \gamma} : [0,1] \to \mathbb{R}^n$ in the Agmon metric is defined by
\begin{eqnarray}\label{Ch6_AgmonLengthOfCurve}
L_A({\bm \gamma}) = \int_0^1 \| \dot{\bm \gamma}(t) \|_{{\bm \gamma}(t)} dt 
= \sqrt{2m} \int_0^1 [U({\bm \gamma}(t)) - E]_+^{1/2} \| \dot{\bm \gamma}(t) \| dt,
\end{eqnarray}
where $\| {\bm \xi} \| = \sqrt{\langle {\bm \xi}, {\bm \xi} \rangle}$ is the Euclidian (norm) length, and $\| {\bm \xi} \|_{\bf x} = \sqrt{\langle {\bm \xi}, {\bm \xi} \rangle_{\bf x}}$. The path of a minimal length is called a geodesic. Finally,  {\it the Agmon distance between points ${\bf x}, {\bf y} \in \mathbb{R}^n$}, denoted by $\rho_E({\bf x}, {\bf y})$, is the length of the shortest geodesic in the Agmon metric connecting ${\bf x}$ to ${\bf y}$. 

Before going further, we would like to clarify the physical meaning of the Agmon metric. Let us recall {\it the Jacobi theorem} from classical mechanics (see page 150 of Reference \cite{Gustafson2003} and page 247 of Reference \cite{Arnold1989}): The classical trajectories of the system with the potential $U({\bf x})$ and  a total energy $E$ are geodesics in {\it the Jacobi metric}
\begin{eqnarray}\label{Ch6_JacobiMetricDeff}
\langle\langle {\bm \xi}, {\bm \eta} \rangle\rangle_{\bf x} = 2m(E - U({\bf x}))_+ \langle {\bm \xi}, {\bm \eta} \rangle,
\end{eqnarray} 
on the set $\{ {\bf x} \in \mathbb{R}^n | U({\bf x}) \leqslant E\}$ --  the classical allowed region. The Agmon metric [Equation (\ref{Ch6_AgmonMetricDeff})] and the Jacobi metric [Equation (\ref{Ch6_JacobiMetricDeff})] are indeed connected through the substitution (\ref{Ch6_FlippingPotentialEnergy}). By virtue of this analogy, we conclude that the Agmon distance has to satisfy a time-independent Hamilton-Jacobi equation, also known as an eikonal equation,
\begin{eqnarray}\label{Ch6_Hamiljacobi_AgmonDistance}
| \nabla_{\bf x} \rho_E({\bf x}, {\bf y}) |^2 = 2m( U({\bf x}) - E )_+,
\end{eqnarray}
where $\nabla_{\bf x} f({\bf x}) \equiv (\partial f /\partial x_1, \ldots, \partial f/\partial x_n)$. In fact, the Agmon distance is the Euclidean version of the reduced action [here, the adjective ``Euclidian'' means the same as in Equation (\ref{Ch6_EuclidianFeynmanRepProp})]. In other words, the Agmon distance is the action of an instanton. 

Now we are in position to recall upper bounds on a bound eigenstate of the Hamiltonian (\ref{Ch6_InstSec_Hamiltonian}). First, 
under very mild assumptions on $U$ (merely, continuity, compactness of the classically allowed region, and absence of tunnelling, i.e., the spectrum of the Hamiltonian being only real), it has been proven \cite{Agmon1982} that for an arbitrary small $\epsilon > 0$, there exists a constant $0<c_{\epsilon}<\infty$, such that
\begin{eqnarray}\label{Ch6_MildUpperBoundOnBoundState}
\int e^{2(1-\epsilon)\rho_E({\bf x})} |\psi({\bf x})|^2 d^n {\bf x} \leqslant c_{\epsilon},
\end{eqnarray}
where $\rho_E({\bf x}) \equiv \rho_E({\bf x}, {\bf 0})$. Roughly speaking, result (\ref{Ch6_MildUpperBoundOnBoundState}) means that $\psi({\bf x}) = O\left( e^{-(1-\epsilon)\rho_E({\bf x})} \right)$. However, this result can be improved.  For any small $\epsilon > 0$,  there exists a constant $0<c_{\epsilon}<\infty$, such that the following inequality is valid under additional conditions of regularity of the potential $U$
\begin{eqnarray}\label{Ch6_UpperBoundOnBoundState}
| \psi({\bf x}) | \leqslant c_{\epsilon} e^{-(1-\epsilon)\rho_E({\bf x})}.
\end{eqnarray}

Analyzing Equation (\ref{Ch6_MildUpperBoundOnBoundState}) and Equation (\ref{Ch6_UpperBoundOnBoundState}), we conclude that the Agmon distance from the origin describes the exponential factor of the wave function. Further information can be found in References \cite{Agmon1982,  Herbst1993, Hislop1996} and references therein. We note that lower bounds on ground states can also be obtained by utilizing the Agmon approach \cite{Carmona1981}.  

We illustrate the power and utility of upper bound (\ref{Ch6_UpperBoundOnBoundState}) by deriving upper bounds for matrix elements and transition amplitudes in Appendix \ref{Appendix_3}. The former result is an estimate of the modulo square of the matrix element 
$ 
\bra{\psi_p} V \ket{\psi_q}, 
$ 
where $\psi_p$ and $\psi_q$ are bound eigenstates of the Hamiltonian (\ref{Ch6_InstSec_Hamiltonian}) that correspond to eigenvalues $E_p$ and $E_q$. It is demonstrated in Appendix \ref{Appendix_3} that for an arbitrary small $\epsilon > 0$, there exists a constant $0<c_{\epsilon}<\infty$, such that
\begin{eqnarray}\label{Ch6_Inequality_LandauQuasiclassicalMatrixElem}
\left| \bra{\psi_p} V \ket{\psi_q} \right|^2 \leqslant c_{\epsilon}\int V^2 ({\bf x}) e^{-2(1-\epsilon)\left[ \rho_{E_p}({\bf x}) + \rho_{E_q}({\bf x}) \right]} d^n {\bf x},
\end{eqnarray}
which could be interpreted as, 
\begin{eqnarray}\label{Ch6_LandauQuasiclassicalMatrixElemGenral}
 \left| \bra{\psi_p} V \ket{\psi_q} \right|^2   = O\left( \int V^2({\bf x}) e^{-2(1-\epsilon)\left[ \rho_{E_p}({\bf x}) + \rho_{E_q}({\bf x}) \right]} d^n {\bf x} \right).
\end{eqnarray}

Simplicity of the derivation of Equation (\ref{Ch6_LandauQuasiclassicalMatrixElemGenral}) does not imply its insignificance. On the contrary, Equation (\ref{Ch6_LandauQuasiclassicalMatrixElemGenral}) is a multidimensional generalization of the Landau method of calculating quasiclassical matrix elements \cite{Landau1932} (see also page 185 of Reference \cite{Landau_1977} and References \cite{Nikitin1991, Nikitin1993}). To the best of my knowledge, such a generalization has not been reported before. To prove the one-dimensional version of the Landau method using analytical techniques (as it is usually done), one deals with the Stokes phenomenon (see, e.g., Reference \cite{Meyer1989}); thus, the generalization to the multidimensional case without too restrictive assumptions is not obvious. The Agmon upper bounds lead not only to quite a trivial derivation, but also to an intuitive physical and geometrical picture. 

Now we explain briefly how these geometrical ideas are generalized to the problem of tunnelling (an interested reader should consult References \cite{Sigal1988, Sigal1988a, Hislop1989a, Hislop1996} and references therein for details and further developments). Let $E$ be an energy of a tunnelling particle. We denote the boundary of the classically  forbidden region by $S_E$. It is assumed that $S_E$ consists of two disjoint pieces  $S_E^-$ and $S_E^+$ (i.e., $S_E = S_E^- \cup S_E^+$ and $S_E^- \cap S_E^+ = \emptyset$) -- inside and outside turning surfaces, which are merely multidimensional analogs of turning points. Having introduced the concept of the Agmon distance, we naturally introduce two related notions: First, {\it the Agmon distance from the surface $S_E^-$ to a point ${\bf x}$, $\rho_E({\bf x}, S_E^-)$},  as the minimal Agmon distance between the point ${\bf x}$ and an arbitrary point ${\bf y} \in S_E^-$ [more rigorously, $
\rho_E({\bf x}, S_E^-) =  \inf_{{\bf y} \in S_E^-}  \rho_E({\bf x}, {\bf y})$]; second, {\it the Agmon distance between the turning surfaces $S_E^-$ and $S_E^+$, $\rho_E(S_E^-, S_E^+)$,} as the minimal Agmon distance between arbitrary two points ${\bf x} \in S_E^+$ and ${\bf y} \in S_E^-$ [ $\rho_E(S_E^-, S_E^+) = \inf_{{\bf x} \in S_E^+} \rho_E({\bf x}, S_E^-)$]. 

In a nutshell, and thus a bit abusing the formulation of the original result \cite{Hislop1989a}, we say that for an arbitrary small $\epsilon>0$, there exists a constant $c>0$, such that the tunnelling rate, $\Gamma$, (viz., the width of  a resonance) in the quasiclassical limit ($\hbar\to 0$) obeys 
\begin{eqnarray}\label{Ch6_HislopSigal_UpperBound}
\Gamma \leqslant c\exp[ - 2\beta_E (\tilde{\rho}_E - \epsilon) ],
\end{eqnarray}
where $0 < \tilde{\rho}_E < \infty$ and $\beta_E\tilde{\rho}_E$ being the leading asymptotic of $\rho_E(S_E^-, S_E^+)$ when $\hbar\to 0$. However, the following interpretation of upper bound (\ref{Ch6_HislopSigal_UpperBound}) is sufficient for our further applications
\begin{eqnarray}\label{Ch6_AgmonTunnellingRate}
\Gamma = O\left( e^{-2\rho_E(S_E^-, S_E^+) } \right),
\end{eqnarray}
i.e., twice the Agmon distance between the turning surfaces gives the leading exponential factor of the tunnelling rate within the quasiclassical approximation. 

Equation (\ref{Ch6_AgmonTunnellingRate}) is not only of analytical interest, but also is a starting point of an efficient numerical method for estimating tunnelling probabilities. The Agmon distance between two points, $\rho_E({\bf x}, {\bf y})$, can be computed by solving numerically Equation (\ref{Ch6_Hamiljacobi_AgmonDistance}) with the boundary condition 
\begin{eqnarray}
\rho_E( {\bf y}, {\bf y}) = 0
\end{eqnarray}
by means of the fast marching method \cite{Barth1998a, Kimmel1998, Sethian1999, Kimmel2004}. Moreover, having computed the solution, one can readily extract the minimal geodesic from a given initial point  ${\bf x}$ by back propagating along $\rho_E({\bf x}, {\bf y})$, where ${\bf y}$ is regarded as a fixed parameter; more explicitly, the minimal geodesic, ${\bf g} \equiv {\bf g}(t)$, is obtained as the solution of the following Cauchy problem \cite{Kimmel2004, Kimmel1998}
\begin{eqnarray}\label{Ch6_MinGeodesic}
\dot{{\bf g}} = -\nabla_{\bm \xi} \rho_E({\bm \xi}, {\bf y}), \qquad {\bf g}(0) = {\bf x}.
\end{eqnarray}
Such a geodesic can be interpreted as a ``tunnelling trajectory.''

A brief remark on types of the solutions of Equation (\ref{Ch6_Hamiljacobi_AgmonDistance}) ought to be made. Generally speaking, an eikonal equation admits a local solution under reasonable assumptions, but a global solution is not possible in a general case owing to the possibility of development of caustics (see, e.g., Reference \cite{Evans1998}). Nonetheless, when we talk about a solution of Equation (\ref{Ch6_Hamiljacobi_AgmonDistance}), we actually refer to a viscosity solution because not only it is a global solution, but also it has the meaning of distance \cite{Sethian1999, Kimmel2004} which we originally assigned to the function $\rho_E$.

In fact, the fast marching method is an ``upwind'' finite difference method that efficiently computes the viscosity solution of an eikonal equation. Note, hence, that the fast marching method as well as the other ideas presented and developed in the current work cannot be employed to study the influence of chaotic tunnelling trajectories (see Reference \cite{Levkov2009} and references therein). Some implementations of the fast marching method as well as the minimal geodesic tracing can be downloaded from References \cite{ChuLSMLIB, KroonAccurateFM, PeyreToolboxFM}.
  
The Agmon distance from the surface to a point, $\rho_E({\bf x}, S_E^-)$, must satisfy Equation (\ref{Ch6_Hamiljacobi_AgmonDistance}). Indeed, $\rho_E({\bf x}, S_E^-)$ is the solution of the boundary problem
\begin{eqnarray}\label{Ch6_AgmonDistanceFromSurf_HJ}
 |\nabla_{\bf x} \rho_E({\bf x}, S_E^-) |^2 = 2m(U({\bf x}) - E)_+, \qquad
 \rho_{E} ({\bf y}, S_E^-) = 0, \quad \forall {\bf y} \in S_E^-,
\end{eqnarray}
which can be solved by the fast marching method as well. Finally, the Agmon distance between the turning surfaces is computed as $\min_{{\bf x} \in S_E^+} \rho_E({\bf x}, S_E^-)$ after solving Equation (\ref{Ch6_AgmonDistanceFromSurf_HJ}). 

The points ${\bf b} \in S_E^-$ and ${\bf e} \in S_E^+$ such that
\begin{eqnarray}
\rho_E (S_E^-, S_E^+) = \rho_E ({\bf b}, {\bf e}),
\end{eqnarray}
 are of physical importance because they represent the points where the particle ``begins'' its motion under the barrier (${\bf b}$) and ``emerges'' from the barrier (${\bf e}$), correspondingly. Moreover, the minimal geodesic (\ref{Ch6_MinGeodesic}) that connects these points (${\bf g}(0) = {\bf b}$ and ${\bf g}(1) = {\bf e}$) is  a tunnelling trajectory which gives the largest tunnelling rates -- the {\it leading tunnelling trajectory}. Note, however, that these points as well as the trajectories may not be unique in a general case.

It is noteworthy that a power of the fast marching method in applications to tunnelling has already been recognized in chemistry within the context of the reaction path theory \cite{Dey2004, Dey2006a, Dey2006b, Dey2007}. Similarly to the current work, the main object of interest of those studies is the reaction path, which is the leading tunnelling trajectory in our terminology. Nevertheless, the motivation for the usage of the fast marching method, presented in References  \cite{Dey2004, Dey2006a, Dey2006b, Dey2007}, is tremendously  different from our geometrical point of view. 

\section{Main Results}\label{Ch6_Sec3}

In this section, we shall follow a two step program. First, we consider tunnelling in multiple finite range potentials, where we prove that leading tunnelling trajectories are linear (Theorem \ref{Ch6_theorem1}). Then, we reduce the case of multiple long range potentials to the previous one by employing the fact that a singular long range potential can be represented as a sum of a singular short range potential and a continuous long range tail [Equation (\ref{Ch6_PartitionLongRangeSingularPotential})]. Such a reduction allows us to prove that the leading tunnelling trajectories are ``almost'' linear (Theorem \ref{Ch6_theorem2}). We note that partitioning (\ref{Ch6_PartitionLongRangeSingularPotential}) was put forth by \gls{PPT}, and it is widely used for obtaining the Coulomb corrections in strong filed ionization  (see References \cite{Popov2004, Popov2005,  Popruzhenko2008b, Popruzhenko2008a, Smirnova2008a, Popruzhenko2009} and references therein).

Let us introduce some notations. Hereinafter, the dimension of the space is assumed to be $n\geqslant 2$.  The interaction of an electron with a static electric field of the strength $F$ is of the form $Fx_n$ ($F>0$). $\partial A$ denotes the boundary of the region $A$. The map, ${\bm \min_{x_n}} : \mathbb{R}^n \supset A \to \mathbb{R}^n$, selects a point ${\bf x} = {\bm \min_{x_n}} A \in A$ that has the smallest $x_n$ component among all the other points from $A$, assuming that $A$ has such a unique point. The projection $P{\bf x}$ of the point ${\bf x} = (x_1, x_2, \ldots, x_n)$ is defined as $P{\bf x} = (x_1, \ldots, x_{n-1}, E/F)$.

\begin{theorem}\label{Ch6_theorem1}
We study single electron tunnelling ($-\infty< E<0$, $F>0$) in the potential 
\begin{eqnarray}\label{Ch6_TotalPotentialComplectSupport}
U({\bf x}) = \sum_{j=1}^K V_j(\| {\bf x} - {\bf R}_j \|) + Fx_n.
\end{eqnarray}
Let us assume that 
\begin{enumerate}
\item $V_j : (0, R_j) \to (-\infty, 0)$ and $V_j : (R_j, +\infty) \to \{ 0 \}$, $R_j > 0$, $j=1,\ldots,K$, are differentiable on $(0, R_j)$ and strictly increasing functions, such that $V_j(0) = -\infty$ and $V_j$ may have a jump discontinuity at the point $R_j$. 
\item $\supp V_j  = \left\{ {\bf x} \in  \mathbb{R}^n \, | \, V_j (\| {\bf x} - {\bf R}_j \|) \neq 0 \right\}$ is the support of the potential $V_j (\| {\bf x} - {\bf R}_j \|)$, $\supp V_k \cap \supp V_j = \emptyset$, $\forall k \neq j$ and $\supp V_j \cap \left\{  {\bf x} \in  \mathbb{R}^n \, | \,  x_n \leqslant E/F \right\} = \emptyset$, $j = 1,\ldots,K$.
\item Introduce ${\bf q}_j = {\bm \min_{x_n}} \partial \supp V_j$, ${\bf p}_j =  {\bm \min_{x_n}} S_E^-(j)$, $S_E^-(j)$ is defined in Equation (\ref{Ch6_SEminusJ_Deff}). If there exists $N$, such that 
\begin{eqnarray}\label{Ch6_EuclidianAssumption}
\| {\bf p}_N - P{\bf R}_N \| < \| {\bf q}_j - P{\bf R}_j \|, \quad \forall j \neq N,
\end{eqnarray}
\end{enumerate}
Then, the leading tunnelling trajectory is unique and linear, and it starts at the point ${\bf p}_N$ and ends at $P{\bf R}_N$, $\rho_E( S_E^-, S_E^+) = \rho_E({\bf p}_N,  P{\bf R}_N)$.
\end{theorem}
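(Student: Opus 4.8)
The plan is to exploit the fact that, away from the wells, the potential $U$ depends only on the field coordinate $x_n$, so the Agmon metric there is \emph{stratified} (its conformal factor depends on $x_n$ alone), and to reduce everything to a one-dimensional descent in $x_n$. First I would fix the geometry dictated by the hypotheses. Since $F>0$ and $E<0$, the hyperplane $\{x_n=E/F\}$ carries no well (Assumption 2) and there $U=Fx_n=E$; beyond it ($x_n<E/F$) one has $U<E$, so this hyperplane is exactly the outer turning surface $S_E^+$. The inner turning surface is the disjoint union $S_E^-=\bigcup_j S_E^-(j)$ of the pocket boundaries inside the wells, and the classically forbidden region $\{U>E\}$ lies between them. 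A short Lagrange-multiplier computation using the rotational symmetry of $U$ about the vertical axis through $\mathbf{R}_N$ shows that the extremal-$x_n$ points of the level sets are on that axis: $\nabla U=V_N'(r)(\mathbf{x}-\mathbf{R}_N)/r+Fe_n$ is parallel to $e_n$ only when $\mathbf{x}-\mathbf{R}_N\parallel e_n$. Hence $\mathbf{p}_N=\mathbf{R}_N-\ell_N e_n$ and $\mathbf{q}_N=\mathbf{R}_N-R_N e_n$ lie directly above $P\mathbf{R}_N$, so the segment $\mathbf{p}_N\to P\mathbf{R}_N$ is genuinely vertical, and $\|\mathbf{p}_N-P\mathbf{R}_N\|$ and $\|\mathbf{q}_j-P\mathbf{R}_j\|$ are simply the heights of these points above $S_E^+$.

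The central lemma I would prove is a stratified/Snell estimate. For any path $\boldsymbol\gamma$ lying outside all wells and ending on $S_E^+$, writing $\phi(y)=\sqrt{2m}\,(Fy-E)^{1/2}$, one has
\[
L_A(\boldsymbol\gamma)=\int\phi(x_n)\,ds\ \ge\ \int\phi(x_n)\,|dx_n|\ \ge\ \Lambda(h):=\sqrt{2m}\int_{E/F}^{E/F+h}\sqrt{Fy-E}\,dy,
\]
where $h$ is the starting height, because $ds\ge|dx_n|$ and the path must sweep every height in $[E/F,\,E/F+h]$ at least once. Equality forces $ds=|dx_n|$ and monotone height, i.e.\ the \emph{unique} vertical segment; equivalently, the Snell invariant $\phi\sin\theta=\text{const}$ must vanish since $\phi\to0$ on $S_E^+$, forcing $\theta\equiv0$. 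Thus $\Lambda$ is strictly increasing, and the Agmon distance from any point whose downward vertical avoids the wells equals $\Lambda(\text{height})$, realized only by the vertical drop.

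Next I would establish the upper bound and the single-well reduction. The vertical segment $\mathbf{p}_N\to P\mathbf{R}_N$ is an admissible competitor, and since it spends its upper part inside well $N$ (from $\mathbf{p}_N$ to $\mathbf{q}_N$) where $U=V_N+Fx_n<Fx_n$ lowers the integrand, it satisfies $\rho_E(S_E^-,S_E^+)\le\rho_E(\mathbf{p}_N,P\mathbf{R}_N)<\Lambda(\|\mathbf{p}_N-P\mathbf{R}_N\|)$. Within well $N$ alone, the rotational symmetry together with the strict monotonicity of $V_N$ (Assumption 1) makes the vertical geodesic through $\mathbf{R}_N$, emanating from the ``south pole'' $\mathbf{p}_N$, the unique minimal forbidden-region geodesic reaching $S_E^+$; I would obtain this by refracting at the sphere $\partial\supp V_N$ and then applying the lemma, noting that any off-axis or higher start strictly increases both the in-well descent and the subsequent vertical drop.

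The crux, and the step I expect to be the main obstacle, is ruling out shorter tunnels originating in the other wells; here the metric being \emph{smaller} inside wells means a path could in principle ``cheat'' by routing through several disjoint wells, and controlling such wandering paths is delicate. The hypothesis $\|\mathbf{p}_N-P\mathbf{R}_N\|<\|\mathbf{q}_j-P\mathbf{R}_j\|$ for all $j\ne N$ is precisely what tames this: it says the critical height band $(E/F,\,E/F+\|\mathbf{p}_N-P\mathbf{R}_N\|)$ beneath $\mathbf{p}_N$ meets \emph{no} well except the lower cap of well $N$. Every path from a competing pocket $S_E^-(j)$ must cross this band, and I would split into two cases: if it avoids well $N$ there, its band contribution alone is $\ge\Lambda(\|\mathbf{p}_N-P\mathbf{R}_N\|)>\rho_E(\mathbf{p}_N,P\mathbf{R}_N)$; if instead it enters well $N$'s cap, it first traverses barrier region from well $j$ down to well $N$ (the supports are disjoint, so this adds strictly positive Agmon length) and is then bounded below by the single-well minimum $\rho_E(\mathbf{p}_N,P\mathbf{R}_N)$. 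In both cases the inequality is strict, so no competing well can tie or beat well $N$. Combining the strict single-well optimality with this inter-well domination yields $\rho_E(S_E^-,S_E^+)=\rho_E(\mathbf{p}_N,P\mathbf{R}_N)$ with a unique minimizing geodesic, namely the linear segment from $\mathbf{p}_N$ to $P\mathbf{R}_N$, which is the assertion of the theorem.
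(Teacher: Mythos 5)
Your proof follows the same skeleton as the paper's (Lagrange multipliers to put $\mathbf{p}_N,\mathbf{q}_N$ on the symmetry axis, a monotone-descent bound in the field-only region, a symmetry/monotonicity comparison inside the well, and hypothesis (\ref{Ch6_EuclidianAssumption}) to isolate centre $N$), but the inter-well domination step — which you yourself call the crux — has a genuine hole. Your dichotomy is not exhaustive in a way that your two bounds can cover: there is a third kind of competitor, a path from $S_E^-(j)$, $j\neq N$, that enters the classically \emph{forbidden shell} of well $N$ (the set $\supp V_N\cap\{U>E\}$ between $S_E^-(N)$ and $\partial\supp V_N$) without ever touching the pocket boundary $S_E^-(N)$. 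For such a path your first bound fails, because inside the shell $U=V_N+Fx_n<Fx_n$, so the Agmon integrand is strictly below $\phi(x_n)$ and the band contribution need not reach $\Lambda(\|\mathbf{p}_N-P\mathbf{R}_N\|)$. Your second bound fails as well: the portion after entering $\supp V_N$ is \emph{not} bounded below by the single-well minimum $\rho_E(\mathbf{p}_N,P\mathbf{R}_N)$ — a path slipping into the shell near $\mathbf{q}_N$ and dropping to $S_E^+$ costs only about $\Lambda(\|\mathbf{q}_N-P\mathbf{R}_N\|)$, which is strictly smaller than $\rho_E(\mathbf{p}_N,P\mathbf{R}_N)$ (the latter equals the shell crossing $\mathbf{p}_N\to\mathbf{q}_N$ \emph{plus} that drop). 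Adding an unquantified "strictly positive" pre-entry cost does not restore the strict inequality you need.

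The fix requires exactly the observation your case analysis is missing, and it is in essence what the paper's backward-propagation/single-centre reduction accomplishes: inside the critical band $E/F\leqslant x_n\leqslant[\mathbf{p}_N]_n$ the full potential coincides with the single-well potential $U_N=V_N(\|\mathbf{x}-\mathbf{R}_N\|)+Fx_n$ (no other support meets the band, by (\ref{Ch6_EuclidianAssumption})), and on each horizontal plane at height $y$ the minimum of $U_N$ is attained on the axis through $\mathbf{R}_N$, with value $u(y)=V_N(|y-[\mathbf{R}_N]_n|)+Fy$, by monotonicity of $V_N$. Hence \emph{every} path's band contribution is at least $\sqrt{2m}\int_{E/F}^{[\mathbf{p}_N]_n}\sqrt{(u(y)-E)_+}\,dy$, which is precisely the Agmon length of the vertical segment from $\mathbf{p}_N$ to $P\mathbf{R}_N$ — no case split on entering or avoiding $\supp V_N$ is needed. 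Any competitor from $S_E^-(j)$, $j\neq N$, is then strictly longer, since it must in addition pay the positive cost of exiting well $j$'s shell before it can reach the top of the band; and equality in the band bound forces the on-axis vertical segment, which yields the uniqueness claim. With that replacement your argument closes, and the remaining ingredients (the stratified/Snell lemma, which plays the role of the paper's planes of constant Agmon distance, and the in-well comparison, which parallels the paper's sphere-by-sphere discretization) are sound.
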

\begin{proof}
 The boundary of the classically forbidden region is defined by the equation $U({\bf x}) = E$. Consider two cases: 
 
 First, if ${\bf x} \notin \bigcup_{j=1}^K \supp V_j $ then according to assumption 2, the above equation simply reads $Fx_n = E$, and thus its solution defines the outer turning surface 
 $$
 S_E^+ = \left\{  {\bf x} \in  \mathbb{R}^n \, | \,  x_n = E/F \right\}.
 $$ 
 One can see now that the projector operator $P$ projects a point onto $S_E^+$. 
 
 Second, if ${\bf x} \in \supp V_j$ and $V_j$ is continuous at the point $R_j$, then the equation reads $V_j (\| {\bf x} - {\bf R}_j \|) + Fx_n = E$. To proof that the set 
 \begin{eqnarray}\label{Ch6_SEminusJ_Deff}
 S_E^-(j) = \left\{ {\bf x} \in \supp V_j \, | \,  V_j (\| {\bf x} - {\bf R}_j \|) + Fx_n = E \right\}
 \end{eqnarray}
 is not empty, we construct the function $f_j( {\bf x} ) = V_j (\| {\bf x} - {\bf R}_j \|) + Fx_n -E$. Since $f_j({\bf R}_j) = -\infty$, we can find a set $A_j \subset \supp V_j$ located close to ${\bf R}_j$, such that $f_j({\bf x}) < 0$ for all ${\bf x} \in A_j$; correspondingly, since according to assumption 2, $x_n > E/F$, there exists the set $B_j \subset \supp V_j$ of points close to the boundary of $\supp V_j$ for which $f_j$ is positive. In fact, $A_j$ and $B_j$ can be constructed such that $\| {\bf x} - {\bf R}_j \| < \| {\bf y} - {\bf R}_j \|$, $\forall {\bf x} \in A_j$ and $\forall {\bf y} \in B_j$. Therefore, the intermediate value theorem guarantees that  $S_E^-(j) \neq \emptyset$ and it ``lies between'' $A_j$ and $B_j$. Furthermore, the inner turning surface is $S_E^- = \bigcup_{j=1}^K S_E^-(j)$, and $S_E^-(j) \cap S_E^-(k) = \emptyset$, $\forall j\neq k$. (Note that the strict monotonicity of $V_j(x)$ assures that the set $S_E^-(j)$ is connected.) Whence, 
\begin{eqnarray}\label{Ch6_ReductionManyCenterCaseToOneCenter}
\rho_E( S_E^-, S_E^+) = \min_j \left\{ \rho_E (S_E^-(j), S_E^+) \right\}.
\end{eqnarray}
Equation (\ref{Ch6_ReductionManyCenterCaseToOneCenter}) means the reduction of the many centre case  to the singe centre case under the assumptions made. Needles to mention that such a reduction tremendously simplifies the analysis. 

The same conclusions are valid if the jump of the function $V_j$ at $R_j$ is not too large, so that the equation $V_j (\| {\bf x} - {\bf R}_j \|) + Fx_n = E$ has solutions for $ {\bf x} \in \supp V_j$. However, if the jump is too large, i.e., this equation does not have solutions from the support of the potential, then it is natural to set $S_E^-(j) = \partial \supp V_j$.

Consider the single centre case -- single electron tunnelling in the potential $U_j ({\bf x}) = V_j (\| {\bf x} - {\bf R}_j \|) + Fx_n$. We shall show that this potential is axially symmetric. If ${\bf x}= (x_1, \ldots, x_n)$, then we introduce $\Pi{\bf x} \equiv (x_1, \ldots, x_{n-1})$. We can then symbolically write ${\bf x} = (\Pi{\bf x} , x_n)$. Using this new notation, we obtain 
\begin{eqnarray}\label{Ch6_Rewritten_Potential_Uj}
U_j ({\bf x}) = V_j \left( \sqrt{\| \Pi{\bf x} - \Pi{\bf R}_j\|^2 + \left( x_n - \left[{\bf R}_j\right]_n \right)^2} \right) + F x_n,
\end{eqnarray}
where $\left[ {\bf a} \right]_n$ denotes the $n^{\rm th}$ component of the vector ${\bf a}$. It is readily seen from Equation (\ref{Ch6_Rewritten_Potential_Uj}) that the potential $U_j ({\bf x})$ is invariant under transformations that do not change $x_n$ and arbitrary $(n-1)$ dimensional (proper and improper) rotations of the vector $\Pi{\bf x}$ about the point $\Pi{\bf R}_j$. The only invariant subspace of $\mathbb{R}^n$ under such transformations is the line $\{ (\Pi{\bf R}_j, x_n) \, | \, x_n \in \mathbb{R}\}$.

Since both regions $S_E^-(j)$ and $S_E^+$ are shape invariant under the axial symmetry transformations, we may expect that the shortest geodesic connecting these regions ought to be shape invariant as well. 
Thus, one readily concludes that the leading tunnelling trajectory should be linear and should connect the points ${\bf p}_j$ and  $P{\bf R}_j$
\begin{eqnarray}\label{Ch6_S_minus_j_equality}
\rho_E ({\bf p}_j, P{\bf R}_j ) = \rho_E (S_E^-(j), S_E^+),
\end{eqnarray}
since no other geodesic that connects $S_E^-(j)$ and $S_E^+$ is shape invariant with respect to the axial symmetry transformations. Below we shall present a formal version of this derivation.

Foremost, we demonstrate that the operation ${\bm \min_{x_n}}$ is defined on the set $S_j^-(j)$, viz., that there is a unique point of $S_j^-(j)$ that has the smallest component $x_n$. Employing the method of Lagrange multipliers and taking into account the symmetry of the potential, we construct the function 
\begin{eqnarray}
 \mathrsfs{L}_1 (x_n, c, \lambda) = x_n + \lambda \left[ V_j\left( \sqrt{ c^2 + \left(x_n - \left[{\bf R}_j\right]_n\right)^2 }\right)  + Fx_n -E\right].
\end{eqnarray}
The condition $\partial \mathrsfs{L}_1 /\partial c = 0$ leads to $c=0$. Therefore, 
${\bf p}_j =  {\bm \min_{x_n}} S_E^-(j) = (\Pi{\bf R}_j, y)$, where $y$ being the minimal solution of the equation 
\begin{eqnarray}\label{Ch6_EquationForY}
V_j\left(\left| y - \left[{\bf R}_j\right]_n\right|\right) + Fy = E. 
\end{eqnarray}
Moreover, $P{\bf p}_j \equiv P{\bf R}_j \equiv P{\bf q}_j$.

Equation (\ref{Ch6_EquationForY}) must have two distinct solutions $y_{1,2}$ ($y_1 < y_2$). $y_1$ ($y_2$) corresponds to the point from $S_E^-(j)$ with the minimum (maximum) $x_n$. Additionally, since $E -F y_1 > E-Fy_2$ $\Rightarrow$ $V_j\left(\left| y_1 - \left[{\bf R}_j\right]_n\right|\right) > V_j\left(\left| y_2 - \left[{\bf R}_j\right]_n\right|\right)$, we obtain 
\begin{eqnarray}\label{Ch6_InequalitiesForY}
\eta_j \equiv \left| y_1 - \left[{\bf R}_j\right]_n\right| > \left| y_2 - \left[{\bf R}_j\right]_n\right|.
\end{eqnarray}

To find the maximum of the function $\| {\bf x} - {\bf R}_j \|$ on the set $S_E^-(j)$ within the Lagrange multipliers method, we introduce the function
\begin{eqnarray}\label{Ch6_LagrangeMultipliers2}  
 \mathrsfs{L}_2 (x_n, c, \lambda) = \sqrt{c^2 + \left(x_n - \left[{\bf R}_j\right]_n\right)^2}  + \lambda \left[ V_j\left( \sqrt{ c^2 + \left(x_n - \left[{\bf R}_j\right]_n\right)^2 }\right)  + Fx_n -E\right]. 
\end{eqnarray}
Taking into account inequality (\ref{Ch6_InequalitiesForY}) and the fact that $\partial \mathrsfs{L}_2 /\partial c = 0$ $\Rightarrow$ $c=0$, we conclude that the maximum of the function $\| {\bf x} - {\bf R}_j \|$ on $S_E^-(j)$ is reached at the point ${\bf p}_j$.

Let $S_j( z )$ denote a sphere of the radius $z$ centred at ${\bf R}_j$, $S_j(z) = \{ {\bf x} \in \mathbb{R}^n \, | \, \| {\bf x} - {\bf R}_j \| = z \}$. Consider a sequence of spheres
$
\left\{ S_j\left(\eta_j + k[R_j-\eta_j]/W \right) \right\}_{k=0}^W, 
$
where $S_j (R_j) = \partial \supp V_j$ and $\eta_j$ was introduced in Equation (\ref{Ch6_InequalitiesForY}). Now pick a sequence of points, $\{ {\bm \gamma}( k/W ) \}_{k=0}^W$, such that, ${\bm \gamma}( k/W ) \in  S_j\left(\eta_j + k[R_j-\eta_j]/W \right)$, $k=0,
\ldots,W$. We assume that this sequence is a discretization of some  differentiable path ${\bm \gamma} : [0,1] \to \mathbb{R}^n$. According to Equation (\ref{Ch6_AgmonLengthOfCurve}), the sums,
\begin{eqnarray}\label{Ch6_SigmaWDeff}
\Sigma_W({\bm \gamma}) &=& \sqrt{2m} \sum_{k=0}^W \sqrt{ U_j( {\bm \gamma}(k/W)) - E} \,\, \| {\bm \gamma}( [k+1]/W ) - {\bm \gamma}(k/W) \|,
\end{eqnarray}
where we set ${\bm \gamma}( 1+1/W ) \equiv {\bm \gamma}( 1 )$, obeys the property $\lim_{W\to\infty} \Sigma_W ({\bm \gamma}) = L_A({\bm \gamma})$. Introduce a path:
\begin{eqnarray}\label{Ch6_PathGDeff}
{\bf g}(t) = {\bf p}_j+ t\left[ {\bf q}_j - {\bf p}_j \right].
\end{eqnarray}
Since $\forall k$, ${\bf g}(k/W) \in  S_j\left(\eta_j + k[R_j-\eta_j]/W \right)$, $ \left[{\bm \gamma}(k/W)\right]_n  \geqslant \left[{\bf g}(k/W)\right]_n$ and
$V_j (\| {\bf g}(k/W) - {\bf R}_j \| ) = V_j (\| {\bm \gamma}(k/W) - {\bf R}_j \|) $ $\Rightarrow$ $U_j ({\bm \gamma}(k/W)) \geqslant U_j({\bf g}(k/W))$. Moreover, $ \| {\bm \gamma}( [k+1]/W ) - {\bm \gamma}(k/W) \| \geqslant  \| {\bf g}( [k+1]/W ) - {\bf g}(k/W) \|$. Therefore, 
\begin{eqnarray}
\Sigma_W({\bm \gamma}) \geqslant \Sigma_W({\bf g}) \Rightarrow 
L_A ({\bm \gamma}) \geqslant L_A({\bf g}).
\end{eqnarray}
Since $\Sigma_W({\bm \gamma}) = \Sigma_W({\bf g}) \Leftrightarrow {\bm \gamma}(k/W) = {\bf g}(k/W)$, $k = 0,\ldots,W-1$, $\forall W$, we conclude that path (\ref{Ch6_PathGDeff}) is indeed the shortest geodesic that connects the regions $S_E^-(j)$ and $\partial\supp V_j$. By the same token, the geodesic connecting $\partial\supp V_j$ and $S_E^+$ must be a straight line that starts at ${\bf q}_j$ and ends at $P{\bf q}_j$ because the potential between these two regions is merely $V({\bf x}) =  Fx_n$.

To finalize the proof, we shall {\it ``backward propagate''} the leading tunnelling trajectory starting from the outer turning surface $S_E^+$. Let $\tilde{\rho}({\bf x}, {\bf y})$ denote the Agmon distance between two points for the potential $V({\bf x}) =  Fx_n$. Then, it is easy to demonstrate that 
\begin{eqnarray}\label{Ch6_AgmonDIstanceInConstantField}
\tilde{\rho}_E({\bf x}, P{\bf x}) = (2/3)\sqrt{2mF} \| {\bf x} - P{\bf x} \|^{3/2}.
\end{eqnarray}
The plane $T(c) = \{ {\bf x} \in \mathbb{R}^n \, | \, x_n = c \}$ is a surface of constant Agmon distance [Equation (\ref{Ch6_AgmonDIstanceInConstantField})], such that $\tilde{\rho}_E(T_{E/F}, S_E^+) = 0$ and $\tilde{\rho}_E(T(c), S_E^+)$ is a strictly increasing function of $c$. Since $\| {\bf q}_N - P{\bf R}_N \| = \| {\bf p}_N - P{\bf R}_N \| - \| {\bf p}_N - {\bf q}_N \| < \| {\bf q}_j - P{\bf R}_j \|$,  $\forall j\neq N$, condition (\ref{Ch6_EuclidianAssumption}) guarantees that increasing $c$ the plane $T(c)$ will ``hit'' the boundary of $\supp V_N$ at the point ${\bf q}_N$. (Note that $\tilde{\rho}_E\left(T(c), S_E^+\right) \equiv \rho_E\left(T(c), S_E^+\right)$, $E/F < c  < \left[ {\bf q}_N \right]_n$.) Moreover,  the following follows from Equation (\ref{Ch6_EuclidianAssumption})
$$
\{ {\bf x} \in \mathbb{R}^n \, | \,\left[ {\bf q}_N \right]_n \leqslant x_n \leqslant \left[ {\bf p}_N \right]_n \}\cap\supp V_j = \emptyset, \quad \forall j \neq N,
$$
which means that the $N^{\rm th}$ centre is isolated from all the other. Therefore, the shortest geodesic must connect the point ${\bf q}_N$ to the point ${\bf p}_N$.
\end{proof}

\begin{corollary}\label{Ch6_corollary1}
Consider a single electron tunnelling in the potential (\ref{Ch6_TotalPotentialComplectSupport}), such that assumptions 1 and 2 of Theorem \ref{Ch6_theorem1} are satisfied, then the leading trajectory is linear (but may not be unique).
\end{corollary}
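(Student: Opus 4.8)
The plan is to reuse the proof of Theorem \ref{Ch6_theorem1} almost verbatim, observing that assumption 3 [condition (\ref{Ch6_EuclidianAssumption})] is invoked nowhere except in the concluding ``backward propagation'' paragraph, and that there it serves only to single out one center and to certify that its geodesic is isolated. Concretely, the determination of the outer turning surface $S_E^+ = \{ {\bf x} \mid x_n = E/F \}$, the construction of the inner turning surfaces $S_E^-(j)$ via the intermediate value theorem, the disjointness $S_E^-(j) \cap S_E^-(k) = \emptyset$, and hence the reduction (\ref{Ch6_ReductionManyCenterCaseToOneCenter}), $\rho_E(S_E^-, S_E^+) = \min_j \rho_E(S_E^-(j), S_E^+)$, all follow from assumptions 1 and 2 alone. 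Likewise, the single-centre axial-symmetry analysis --- the Lagrange-multiplier computation fixing ${\bf p}_j = {\bm \min_{x_n}} S_E^-(j) = (\Pi {\bf R}_j, y_1)$ and the discretized-length argument with $\Sigma_W$ --- uses only assumptions 1 and 2 and yields, for every $j$, that the straight segment (\ref{Ch6_PathGDeff}) from ${\bf p}_j$ to $P{\bf R}_j$ realizes $\rho_E(S_E^-(j), S_E^+)$, cf.\ Equation (\ref{Ch6_S_minus_j_equality}).

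Given this, since there are finitely many centres the minimum in (\ref{Ch6_ReductionManyCenterCaseToOneCenter}) is attained at some index $N$. The corresponding straight segment from ${\bf p}_N$ to $P{\bf R}_N$ is then a leading tunnelling trajectory and it is linear, which is precisely the assertion of the corollary. If the minimum is attained at a single $N$, this trajectory is unique; if two or more centres tie for the minimum, each contributes its own linear leading trajectory, which is exactly the loss of uniqueness permitted by the statement. Thus the only content removed together with assumption 3 is the uniqueness clause, not the linearity.

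The delicate point --- the place where assumption 3 was doing hidden work --- is the identification of the full-potential distance $\rho_E(S_E^-(j), S_E^+)$ with the single-centre value computed from $U_j({\bf x}) = V_j(\| {\bf x} - {\bf R}_j \|) + Fx_n$. In Theorem \ref{Ch6_theorem1} the descent of the level planes $T(c)$ together with (\ref{Ch6_EuclidianAssumption}) guaranteed that the minimizing geodesic below centre $N$ meets no other support, so that the segment travels only through the pure-field region $U = Fx_n$ below $\partial \supp V_N$. Without assumption 3 a minimizing geodesic could in principle shorten itself by detouring into a neighbouring well $\supp V_k$ (where $V_k < 0$ lowers the integrand $\sqrt{2m(U-E)_+}$), and such a path need not be linear. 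The main work of the corollary is therefore to rule this out: one must verify that for the minimizing centre $N$ the straight segment (\ref{Ch6_PathGDeff}) avoids every other support. I would argue this by the same planar-descent reasoning applied to $N$ in isolation --- the vertical segment below ${\bf R}_N$ lies in $\{ {\bf x} \mid \Pi {\bf x} = \Pi {\bf R}_N \}$, so it crosses $\supp V_k$ only if $\supp V_k$ sits directly beneath $\supp V_N$; excluding this non-generic vertical stacking (consistent with the disjointness imposed in assumption 2) the detour cannot occur and linearity is preserved. This verification, rather than the linearity claim itself, is where essentially all the remaining effort lies.
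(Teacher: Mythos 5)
Your main scaffolding is exactly what the paper intends: the paper's entire proof of this corollary is the single sentence ``This corollary follows from the straightforward generalization of the idea of backward propagation,'' and your reading of it --- keep the reduction (\ref{Ch6_ReductionManyCenterCaseToOneCenter}) and the per-centre straight candidates from ${\bf p}_j$ through ${\bf q}_j$ to $P{\bf R}_j$, minimize over the finitely many centres, and let ties account for the loss of uniqueness --- is that argument spelled out. Where your proposal goes wrong is the final paragraph, in which you try to close the gap you yourself correctly identified. What you propose to verify, namely that the straight segment (\ref{Ch6_PathGDeff}) below the minimizing centre avoids every other support, is not the statement that needs proving. The threat you raised is that some \emph{non-straight} competitor could undercut every straight candidate by refracting through a neighbouring annulus where $V_k<0$ lowers the integrand $\sqrt{2m(U-E)_+}$; whether the straight segment itself meets $\supp V_k$ is irrelevant to that. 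Indeed, if the straight segment does cross $\supp V_k$, nothing bad happens: either it crosses the classically allowed region of centre $k$, in which case a strictly shorter sub-path from $S_E^-(k)$ exists and the minimum in (\ref{Ch6_ReductionManyCenterCaseToOneCenter}) simply passes to the lower centre (whose candidate is again straight), or it crosses only the annulus, which only makes the straight candidate cheaper while leaving it straight. Moreover, ``excluding this non-generic vertical stacking'' both smuggles in a hypothesis the corollary does not grant --- assumptions 1 and 2 permit stacking --- and excludes the harmless configuration rather than the dangerous one: the configuration that threatens linearity is a well sitting \emph{below and to the side} of the minimizing centre, which is completely generic and survives your exclusion untouched.

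What a complete argument must show is that a detouring path never realizes the global minimum: any path from $S_E^-(N)$ that profits from the annulus of a neighbour $k$ is already at least as long as centre $k$'s own straight candidate, so it cannot be a leading trajectory. One workable comparison is by height slicing: every path from $S_E^-(N)$ pays the tunnel cost $\rho_E({\bf p}_N,{\bf q}_N)$ inside $\supp V_N$, and in addition, for each height $x_n$ below $[{\bf q}_N]_n$, at least the minimum of $\sqrt{2m(U-E)_+}$ over that height slice; since that slice minimum is attained on the axis of whichever well the slice meets, a path from $N$ reaching down through centre $k$'s annulus accumulates at least centre $k$'s own axis cost plus the strictly positive $\rho_E({\bf p}_N,{\bf q}_N)$, and therefore loses to centre $k$'s straight candidate. (Configurations in which the two wells overlap in height need a finer version of the same comparison.) To be fair, the paper does not supply any of this either --- it asserts the generalization is ``straightforward'' --- but your proposal, by raising the refraction issue explicitly and then resolving it with a verification that neither addresses it nor is available under the stated hypotheses, contains a genuine logical gap rather than mere terseness.
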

\begin{proof} 
This corollary follows from the straightforward generalization of the idea of backward propagation. 
\end{proof}

\begin{theorem}\label{Ch6_theorem2}
We shall study single electron tunnelling ($-\infty< E<0$, $F>0$) in the potential 
\begin{eqnarray}\label{Ch6_TotalPotentialLongRangeCase}
U({\bf x}) = \sum_{j=1}^K \mathrsfs{V}_j(\| {\bf x} - {\bf R}_j \|) + Fx_n.
\end{eqnarray}
Assume that
\begin{enumerate}
\item $\mathrsfs{V}_j : (0, +\infty) \to (-\infty, 0)$ are differentiable on $(0, +\infty)$ and strictly increasing functions, such that $\mathrsfs{V}_j(0)=-\infty$ and $\mathrsfs{V}_j(+\infty)=0$.
\item The boundary of the classically forbidden region consists of two disjoin pieces -- the internal turning surface $S_E^-$ and  the outer one $S_E^+$. Furthermore, $S_E^- = \bigcup_{j=1}^K S_E^-(j)$, $S_E^-(j) \cap S_E^-(k) = \emptyset$, $\forall j\neq k$, where each $S_E^-(j)$ encircles ${\bf R}_j$\footnote{
The verb ``encircle'' should be understood in the following sense. A piece of the inner turning surface, $S_E^-(j)=\partial CA(j)$, is a boundary of the classically allowed region, $CA(j)$, associated with centre $j$, such that ${\bf R}_j \in CA(j)$.
}. 
\item $B(j) \cap B(k) = \emptyset$, $\forall j\neq k$, and $B(j) \cap S_E^+ = \emptyset$, $\forall j$, where $B(j) = \left\{ {\bf x} \in  \mathbb{R}^n \, |\,  \| {\bf x} - {\bf R}_j \| \leqslant r_j \right\}$ being the ball of radius $r_j$ centered at ${\bf R}_j$. Here $r_j = \max\left\{ \| {\bf x} - {\bf R}_j \| \, | \, {\bf x} \in S_E^-(j) \right\}$ is the ``radius'' of $S_E^-(j)$\footnote{
The parameter $r_j$ can be calculated by means of the method of Lagrange multipliers as it was shown in the proof of Theorem \ref{Ch6_theorem1} [see Equation (\ref{Ch6_LagrangeMultipliers2})].
}.
\end{enumerate}
Then, the leading tunnelling trajectory (may not be unique) is linear up to a term of $O(\lambda)$ as $\lambda\to 0$, where $\lambda = \max_j \left\{ |\mathrsfs{V}_j(\Delta_j) |\right\}$ and 
\begin{eqnarray}\label{Ch6_DeltaDeff}
\Delta_j = \min\left( \frac{r_j}2 + \frac 12 \min_{k, \, k \neq j} \left\{ \| {\bf R}_j - {\bf R}_k \| - r_k \right\}, d_j \right).
\end{eqnarray}
Here, $d_j = \min\left\{ \| {\bf x} - {\bf R}_j \| \, | \, {\bf x} \in S_E^+ \right\}$ is the Euclidean distance from ${\bf R}_j$ to $S_E^+$.
\end{theorem}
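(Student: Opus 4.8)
The plan is to reduce Theorem \ref{Ch6_theorem2} to Theorem \ref{Ch6_theorem1} by the \gls{PPT}-type splitting (\ref{Ch6_PartitionLongRangeSingularPotential}) and then to absorb the error of this reduction into a perturbative estimate on the Agmon metric. Concretely, for each centre I would split $\mathrsfs{V}_j = V_j + W_j$, where the short-range singular part is
\begin{eqnarray}
V_j(r) = \left\{ \begin{array}{cl} \mathrsfs{V}_j(r) - \mathrsfs{V}_j(\Delta_j) & r \leqslant \Delta_j, \\ 0 & r > \Delta_j, \end{array} \right. \nonumber
\end{eqnarray}
and $W_j(r) = \mathrsfs{V}_j(r) - V_j(r)$ is the continuous long-range tail. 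By assumption 1, each $V_j$ is singular at the origin, strictly increasing, and vanishes continuously at $r=\Delta_j$, so it is admissible for Theorem \ref{Ch6_theorem1} (finite range $\Delta_j$, no jump). The tail obeys $|W_j(r)| \leqslant |\mathrsfs{V}_j(\Delta_j)| \leqslant \lambda$ for all $r$, and inside the ball $\{ \| {\bf x} - {\bf R}_j \| \leqslant \Delta_j \}$ it is the constant $\mathrsfs{V}_j(\Delta_j)$.

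Next I would verify that the short-range Hamiltonian $U^{sr}({\bf x}) = \sum_j V_j(\| {\bf x} - {\bf R}_j \|) + Fx_n$ satisfies the hypotheses of Theorem \ref{Ch6_theorem1} (or at least Corollary \ref{Ch6_corollary1}). The choice (\ref{Ch6_DeltaDeff}) of $\Delta_j$ is engineered precisely for this: the first argument of the minimum in (\ref{Ch6_DeltaDeff}) forces $\Delta_j + \Delta_k \leqslant \| {\bf R}_j - {\bf R}_k \|$ for all $k \neq j$, so the supports (balls of radius $\Delta_j$) are pairwise disjoint, while $\Delta_j \leqslant d_j$ keeps each support away from $S_E^+$. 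Assumption 3 simultaneously gives $r_j \leqslant \Delta_j$, so the inner turning surface $S_E^-(j)$ sits inside $\supp V_j$ and is unaffected by the truncation. Corollary \ref{Ch6_corollary1} then yields that the leading trajectory ${\bf g}_0$ for $U^{sr}$ is the straight segment of Theorem \ref{Ch6_theorem1}.

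It remains to show that restoring the tails $W_j$ deflects the minimising geodesic by at most $O(\lambda)$. The Agmon metric (\ref{Ch6_AgmonMetricDeff}) is conformal to the Euclidean one with conformal factor $2m(U-E)_+$, and $U - U^{sr} = \sum_j W_j$ with $\| U - U^{sr} \|_{\infty} \leqslant K\lambda$. In any region where $U^{sr} - E \geqslant \delta > 0$,
\begin{eqnarray}
[U-E]_+^{1/2} - [U^{sr}-E]_+^{1/2} = \frac{U - U^{sr}}{[U-E]_+^{1/2} + [U^{sr}-E]_+^{1/2}} = O(\lambda), \nonumber
\end{eqnarray}
so the conformal factor is perturbed by $O(\lambda)$ away from the turning surfaces. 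The constancy of $W_j$ inside $\supp V_j$ and the identity $Fx_n + \sum_j W_j = E$ on the outer surface show that both $S_E^-$ and $S_E^+$, and hence the geodesic endpoints, move only by $O(\lambda)$. Since the unperturbed minimiser ${\bf g}_0$ is the straight line forced by the exact axial symmetry of each single-centre problem and is a strict minimiser by the monotonicity argument behind (\ref{Ch6_S_minus_j_equality}), a first-order perturbation of the geodesic — equivalently, coercivity of the second variation of the $U^{sr}$-length about ${\bf g}_0$ — converts the $O(\lambda)$ change of the metric into an $O(\lambda)$ change of the minimiser, which is the asserted ``almost'' linearity. Any non-uniqueness is inherited from Corollary \ref{Ch6_corollary1} and leaves the estimate intact.

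The hard part is this last step near the turning surfaces, where the conformal factor $U-E$ degenerates to zero and the naive pointwise bound on $[U-E]_+^{1/2} - [U^{sr}-E]_+^{1/2}$ is only $O(\lambda^{1/2})$. I would control this by noting that the integrand $\| \dot{{\bf g}} \|_{\bf x}$ itself vanishes there, so the degenerate layers contribute a vanishing amount to both $L_A$ and its variation, and by establishing that ${\bf g}_0$ is an isolated, non-degenerate minimiser of the $U^{sr}$-length whose stability constant stays bounded as the endpoints approach $S_E^{\pm}$. Only then does the $O(\lambda)$ metric perturbation translate into an $O(\lambda)$ — rather than $O(\lambda^{1/2})$ — deviation of the leading tunnelling trajectory from the straight line.
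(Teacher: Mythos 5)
Your proposal follows essentially the same route as the paper's proof: the same PPT-type splitting of each $\mathrsfs{V}_j$ at radius $\Delta_j$ into a singular short-range part plus a tail uniformly bounded by $\lambda$, the same use of assumption 3 and of definition (\ref{Ch6_DeltaDeff}) to check that the truncated potential satisfies the hypotheses of Theorem \ref{Ch6_theorem1} and Corollary \ref{Ch6_corollary1}, and the same absorption of the tail as an $O(\lambda)$ perturbation of the Agmon length. The two deviations are minor and both to your credit: you make the truncation continuous by subtracting the constant $\mathrsfs{V}_j(\Delta_j)$ (the paper keeps a jump at $\Delta_j$, which Theorem \ref{Ch6_theorem1} tolerates), and you explicitly identify and treat the degeneracy of $[U-E]_+^{1/2}$ near the turning surfaces, a point the paper passes over silently with the one-line expansion $\sqrt{U_{sh}({\bf x})-E+O(\lambda)}=\sqrt{U_{sh}({\bf x})-E}+O(\lambda)$.
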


\begin{proof}
We introduce two auxiliary functions
\begin{eqnarray}
V_{sh}^{(j)}(x) = \left\{ 
	\begin{array}{lll}
	\mathrsfs{V}_j(x) & : & 0 \leqslant x < \Delta_j, \\
	0 & : & x \geqslant \Delta_j,
	\end{array} 
\right. &\quad& 
V_{lg}^{(j)}(x) = \left\{ 
	\begin{array}{lll}
	0 & : & 0 \leqslant x < \Delta_j, \\
	\mathrsfs{V}_j(x)  & : & x \geqslant \Delta_j.
	\end{array} 
\right. \nonumber
\end{eqnarray}
One evidently notices that
\begin{eqnarray}\label{Ch6_PartitionLongRangeSingularPotential}
\mathrsfs{V}_j(x) = V_{lg}^{(j)}(x) + V_{sh}^{(j)}(x),
\end{eqnarray}
where $V_{sh}^{(j)}(x)$ is a singular short range potential and $V_{lg}^{(j)}(x)$ being a long range tail. The purpose of such a partition is to make $V_{sh}^{(j)}(x)$ satisfy assumption 1 of Theorem \ref{Ch6_theorem1} and produce $V_{lg}^{(j)}(x)$ that obeys the following upper bound:
$$ 
|V_{lg}^{(j)}(x)| \leqslant  |\mathrsfs{V}_j(\Delta_j) | \leqslant \lambda, \qquad \forall x.
$$ 

We analyze the length of a curve in the Agmon metric [Equation (\ref{Ch6_AgmonLengthOfCurve})]. Since 
\begin{eqnarray}
\sqrt{ U({\bf x}) - E } = \sqrt{ \sum_{j=1}^K V_{sh}^{(j)}( \| {\bf x} - {\bf R}_j \| ) + Fx_n -E + O(\lambda)} \nonumber\\
= \sqrt{ \sum_{j=1}^K V_{sh}^{(j)}( \| {\bf x} - {\bf R}_j \| ) + Fx_n -E} + O(\lambda), \nonumber
\end{eqnarray}
under the assumption that $\lambda\to 0$, we have reduced the initial situation to the case of single electron tunnelling in the potential 
\begin{eqnarray}
U_{sh}({\bf x}) = \sum_{j=1}^K V_{sh}^{(j)}(\| {\bf x} - {\bf R}_j \|) + Fx_n.
\end{eqnarray}

Let us now utilize assumption 3 to show that
\begin{eqnarray}\label{Ch6_DeltaJBiggerRJ}
\Delta_j > r_j.
\end{eqnarray}
Indeed, on the one hand, $B(j)\cap S_E^+ = \emptyset$ $\Rightarrow$ $d_j > r_j$; on the other hand, $B(j) \cap B(k) = \emptyset$, $\forall j\neq k$, $\Rightarrow$ $\| {\bf R}_j - {\bf R}_k \| - r_k > r_j$. 

Furthermore, we shall demonstrate that the definition of  $\Delta_j$ [Equation (\ref{Ch6_DeltaDeff})] assures that assumption 2 of  Theorem \ref{Ch6_theorem1} for the functions $V_{sh}^{(j)}(x)$ holds. According to Equation (\ref{Ch6_DeltaDeff}), 
$$
\Delta_j  \leqslant \left(  \| {\bf R}_j - {\bf R}_k \| - r_k + r_j \right)/2, \qquad 
j \neq k;
$$
hence, $\Delta_j + \Delta_k \leqslant \| {\bf R}_j - {\bf R}_k \|$ $\Rightarrow$ $\supp V_{sh}^{(j)} \cap \supp V_{sh}^{(k)} = \emptyset$. From Equation (\ref{Ch6_DeltaDeff}), we also obtain $\Delta_j \leqslant d_j$ $\Rightarrow$ $\supp V_{sh}^{(j)} \cap S_E^+ = \emptyset$; thus, the outer turning surface for the potential $U_{sh}({\bf x})$ should be $\left\{  {\bf x} \in  \mathbb{R}^n \, | \,  x_n = E/F \right\}$.

Finally, we have proven the theorem because the potential $U_{sh}({\bf x})$ satisfies all the assumptions of Corollary \ref{Ch6_corollary1}. 
\end{proof}

Physical clarifications of Theorems \ref{Ch6_theorem1} and \ref{Ch6_theorem2} are due. Assumption 1 of Theorem \ref{Ch6_theorem1} physically implies that $V_j$ are attractive, singular, spherically symmetric short range potentials.  Assumption 2 of the same theorem requires that the potentials do not merge, i.e., their ranges do not overlap. This condition connotes that the classically allowed regions associated with the centres ${\bf R}_j$ [their boundaries are $S_E^-(j)$] do not overlap as well. The latter statement is proven in Theorem \ref{Ch6_theorem1}. The statement of Corollary \ref{Ch6_corollary1} can be rephrased as follows: leading tunnelling trajectories for a system of non-overlapping, attractive, singular, short range potentials are linear. However, if the additional condition (\ref{Ch6_EuclidianAssumption}) is satisfied then Theorem \ref{Ch6_theorem1} not only guarantees the uniqueness of the leading tunnelling trajectory, but also provides the initial and final points of the trajectory. Assumption 1 of Theorem \ref{Ch6_theorem2}  means that $\mathrsfs{V}_j$ are attractive, singular, spherically symmetric long range  potentials that vanish at infinity.  Assumptions 2 and 3 of Theorem \ref{Ch6_theorem2} require the same non-overlapping condition for the classically allowed internal regions mentioned above. Physically, Theorem \ref{Ch6_theorem2} says that leading tunnelling trajectories for a system of several such potentials are ``almost'' linear, and a deviation from being strictly linear is caused by vanishing long tails of the potentials; thus, the larger the distance between the centres, the smaller the deviation. 

\section{The Application of Spherically Symmetric Potential Wells to Single-Electron Molecular Tunnelling}\label{Ch6_Sec4}

The simplest type of model molecular potentials that allows for full analytical treatment is of type (\ref{Ch6_TotalPotentialComplectSupport}) where
\begin{eqnarray}\label{Ch6_PotentialWell}
V_j(x) = \left\{
	\begin{array}{ccc}
		C_j & : & 0 < x < r_j, \\
		0 & : & x > r_j,
	\end{array}
\right.
\end{eqnarray}
such that $C_j < E$, $\forall j$, and it is assumed that $S_E^-(j) = \partial \supp V_j = \{ {\bf x} \in  \mathbb{R}^n \, | \, \| {\bf x} \| = r_j \}$.  (These potentials are not governed by Theorem \ref{Ch6_theorem1}.) Evidently, the leading tunnelling trajectories are linear, and moreover, the following equality is valid 
\begin{eqnarray}\label{Ch6_AgmonDistanceForWells}
\rho_E ( S_E^-, S_E^+) = \min_j \left\{ \tilde{\rho}_E \left({\bf q}_j, P{\bf q}_j \right) \right\},
\end{eqnarray}
where ${\bf q}_j = {\bm \min}_{x_n} S_E^-(j)$ and $\tilde{\rho}_E$ was defined in Equation (\ref{Ch6_AgmonDIstanceInConstantField}). Let us estimate the tunnelling rates within Equation (\ref{Ch6_AgmonTunnellingRate}) for the two dimensional system of two equivalent centres of type (\ref{Ch6_PotentialWell}) (see Figure \ref{Ch6_FigTwoCentres}).  A straightforward geometrical derivation, using Equations (\ref{Ch6_AgmonTunnellingRate}), (\ref{Ch6_AgmonDIstanceInConstantField}), and (\ref{Ch6_AgmonDistanceForWells}), shows that
\begin{eqnarray}\label{Ch6_TwoEqualPotentialWellsTunnelRates}
\Gamma \propto \exp\left\{ -\frac 2{3F} \left[ FR(1-| \cos\theta |) - 2E \right] ^{3/2} \right\},
\end{eqnarray} 
where $R$ is the distance between the potential wells (i.e., the bond length of a model molecule) and $\theta$ is the angle between the field and the molecular axis. The obtained angular dependent rates are plotted in Figure \ref{Ch6_FigRatesTwoCentres}. 

According to Equation (\ref{Ch6_AgmonTunnellingRate}), rates obtained within the geometrical approach does not account for an initial molecular orbital. This technique provides solely the contribution  from the shape of the barrier, hence, the name -- the ``geometrical approach.'' An advantage of such a method is that it reduces the calculation of tunnelling rates to a rather simple geometrical exercise.

\begin{figure}
\begin{center}
\includegraphics[scale=0.40]{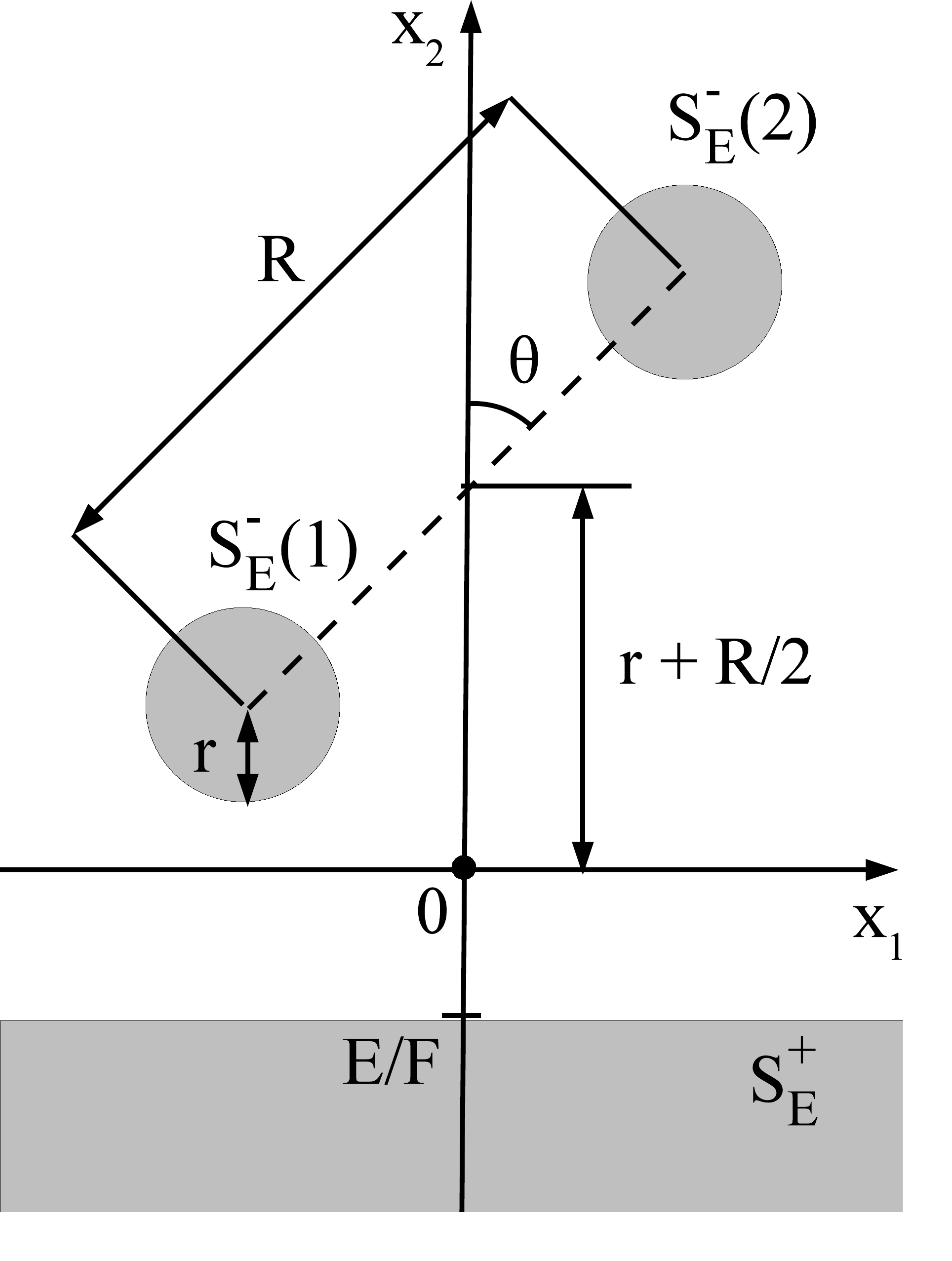}
\caption[The geometry of a two-centre model]{The geometry of a two-centre model employed to obtain Equation (\ref{Ch6_TwoEqualPotentialWellsTunnelRates}). Grey colour denotes the classically allowed region.}\label{Ch6_FigTwoCentres}
\end{center}
\end{figure}

\begin{figure}
\begin{center}
\includegraphics[scale=0.60]{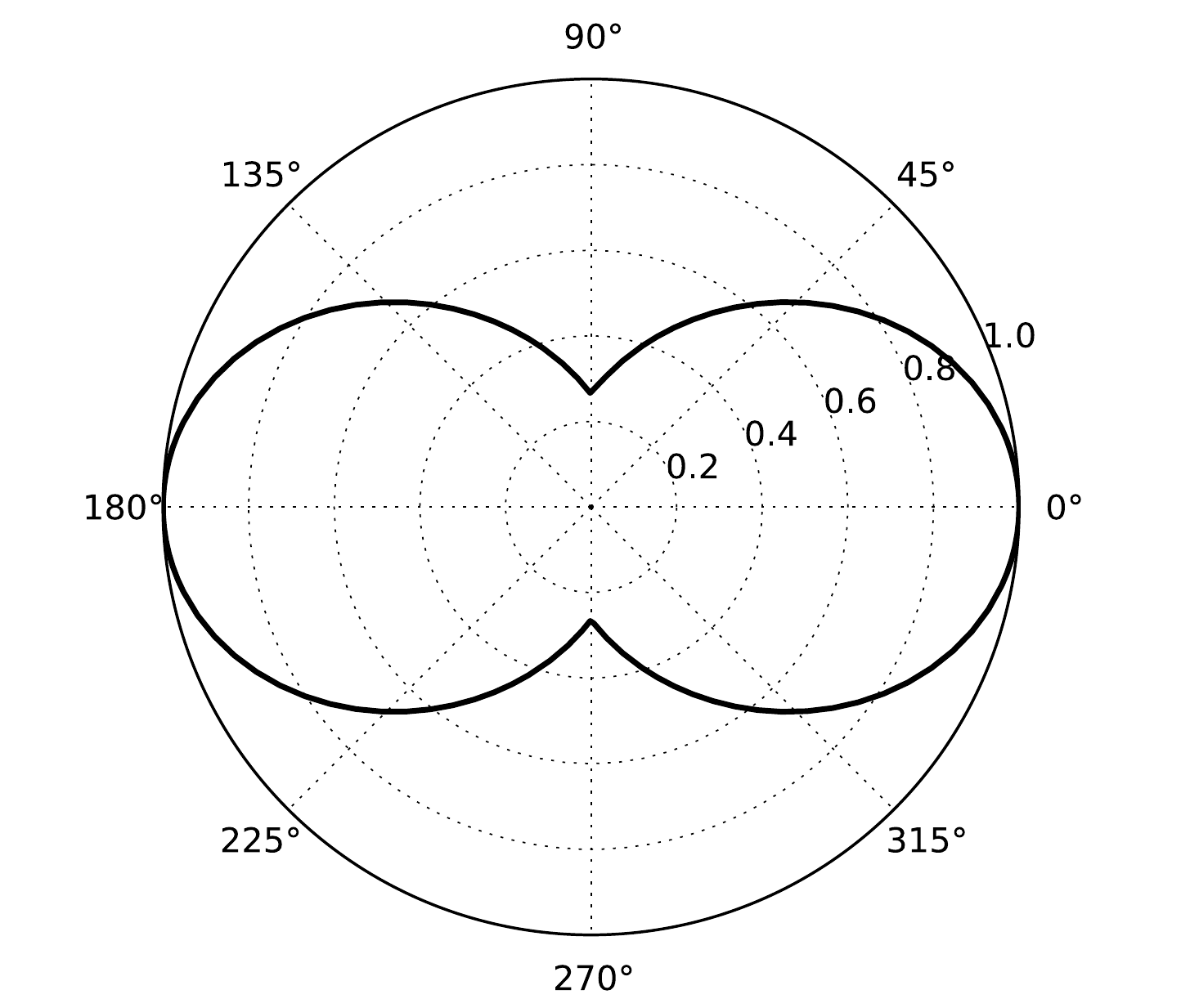}
\caption[The normalized tunnelling rates for the two-centre model]{The polar plot of the normalized tunnelling rates for the two centre model [Equation (\ref{Ch6_TwoEqualPotentialWellsTunnelRates})] as a function of the angle $\theta$. Chosen values of the parameters are $F=0.01$ (a.u.), $R=2$ (a.u.), and $E=-0.5$ (a.u.).}\label{Ch6_FigRatesTwoCentres}
\end{center}
\end{figure}

\section{Conclusions}

Having introduced the leading tunnelling trajectory as an instanton path that gives the highest tunnelling probability, we have proven that leading tunnelling trajectories for multi centre short range potentials are linear (Theorem \ref{Ch6_theorem1}) and ``almost'' linear for multi centre long range potentials (Theorem \ref{Ch6_theorem2}). In a nutshell, these results have been achieved because the multi centre (i.e., molecular) potential is represented as a sum of spherically symmetric potentials, and such conclusions regarding the shape of the trajectories in the single centre (i.e., atomic) case are quite expectable owing to the axial symmetry. An important peculiarity of the  theorems is that assumptions that they involve are satisfied in majority situations of current experimental interest. Nevertheless, the proven statements by no means exhaust all interesting cases; on the contrary, we have barley scratched the surface, and there is plenty of room for further generalizations and expansions. For example, we have not discussed the case when internal classically allowed regions associated with the centres merge.  One may anticipate that the leading tunnelling trajectories still should be linear under some additional assumptions (by the argument of the backward propagation of the leading tunnelling trajectory). The issue of the uniqueness of the trajectories was left nearly undiscussed. 

In any case, one can always employ the fast marching method, discussed in Section \ref{Ch6_Sec2}, to obtain numerically some information on the leading tunnelling trajectories.

The fact that the leading trajectories for long range potentials are not straight lines is of vital importance. As in the atomic case \cite{Perelomov_1966, Perelomov_1967_A, Perelomov_1967_B, Perelomov_1968, Popov2004, Popov2005, Popruzhenko2008a, Popruzhenko2008b, Popruzhenko2009}, this deviation is crucial for a quantitative treatment \cite{Tong2002, Fabrikant2009, Bin2010, Fabrikant2010, Murray2010}, and sometimes even for a qualitative analysis, because it leads to the correct pre-exponential factor of ionization rates that describes the influence of the Coulomb field of nuclei. However, Theorem \ref{Ch6_theorem2} suggests that the deviations can be accounted for by means of the perturbation theory where the zero order approximation being a field free trajectory. This is a part of the celebrated \gls{PPT} approach, widely employed in the literature for analytical calculations of atomic Coulomb corrections. Nevertheless, the \gls{PPT} method requires matching the quasiclassical wave function of an electron in the continuum with the bound (field free) atomic wave function. This step is a stumbling block for generalization of the \gls{PPT} approach to the molecular case (for the suggestion of a solution to such a problem see Reference \cite{Murray2010}). Theorem \ref{Ch6_theorem2} in fact offers a solution to the problem of matching. According to Theorem \ref{Ch6_theorem2}, matching should be done on spherical surfaces of radii $\Delta_j$ [Equation (\ref{Ch6_DeltaDeff})] centred at the nuclei. This is an alternative technique to the method developed in Reference \cite{Murray2010}.

It has been suggested in Reference \cite{Meckel2008a} that molecular photoionization in the tunnelling limit may act as a scanning tunnelling microscope (STM). Since rotating a molecule with respect to a field direction is analogous to moving the tip of an STM, then the observed angular-dependent ionization probability should provide information for a molecule similar to the position dependence of the tunnelling current in the STM. We point out that there is a resemblance between such a descriptive comparison and our results. As it has been shown in Theorem \ref{Ch6_theorem1} (by the backward propagation of the leading trajectory), the leading tunnelling trajectory starts at the atomic centre that is the closest to the barrier exit  (i.e., the outer turning surface); hence, the qualitative similarity of molecular tunnelling with the STM. 

\chapter{Outlook}\label{chapter7}

The adiabatic approximation is a convenient, yet intuitive, tool for studying processes induced by low frequency laser radiation. In the current work we have studied analytically  few electron processes such as single-electron ionization and \gls{NSDI}.  

A future direction, motivated by constantly expanding experimental data, could be a general study of many-electron processes. The complexity of the problem decreases effectiveness and perhaps even usefulness of any analytical method;   {\it ab initio} solutions of the Schr\"{o}dinger equation are becoming more and more prominent. Nevertheless, bearing a positive attitude towards analytical approaches, we shall discuss two possible directions of further applications of the adiabatic approximation. 

First, the adiabatic approximation allows one to reduce the original $N$-body problem to an effective, or quasi, single body problem within the \gls{SFA}. We shall demonstrate such a reduction on the example of the problem of simultaneous $N$ electron ionization \cite{Zon_1999}.  At the beginning all the $N$ electrons are bounded, i.e.,
\begin{equation}\label{Ch7_EiNeIoniz}
E_i^{(Ne)} = -\sum_{i=1}^N I_p^{(i)} \equiv -N\bar{I_p}.
\end{equation}
At the end, they are liberated and move under the influence of a linearly polarized laser field 
\begin{eqnarray}\label{Ch7_EfNIonz}
E_f^{(Ne)} (\varphi) = \frac 12 \sum_{i=1}^N \left[\K_i + \A(\varphi)\right]^2.
\end{eqnarray}
First and foremost, Equation (\ref{Ch7_EfNIonz}) is to be simplified. To accomplish that, let us use the following simple transformation. For a given set of numbers $x_1, x_2, \ldots, x_N$, the standard deviation $\sigma_x$ is defined by
$$
\sigma_x^2 = \frac 1N \sum_{i=1}^N (x_i - \bar{x})^2 = \frac 1N \left( \sum_{i=1}^N x_i^2 - N\bar{x}^2 \right),
$$
where $\bar{x} = \frac 1N \sum_{i=1}^N x_i$ denotes the mean; thus,
\begin{equation}\label{Ch7_Deviation}
\sum_{i=1}^N  x_i^2 = N\left(\sigma_x^2 +\bar{x}^2\right).  
\end{equation}
Applying Equation (\ref{Ch7_Deviation}) to Equation (\ref{Ch7_EfNIonz}), we obtain 
\begin{equation}\label{Ch7_EfN_mean_deviation}
E_f^{(Ne)}(\varphi) = \frac N2 \left\{ \left[ \bar{\kk} + A(\varphi) \right]^2 + \bar{\k}^2 + \sigma_{\kk}^2 + \sigma_{\k}^2 \right\}.
\end{equation}
Note that the initial (\ref{Ch7_EiNeIoniz}) and final (\ref{Ch7_EfN_mean_deviation}) energies resemble the initial and final energies for single-electron ionization given by Equation (\ref{Ch3_Ei_Ef}). Let us name the transition from Equation (\ref{Ch7_EfNIonz}) to Equation (\ref{Ch7_EfN_mean_deviation}) {\it the mean-deviation parametrization.} Having introduced the action for simultaneous $N$ electron ionization
$$
S_{Ne} = \frac 1{\omega} \int^{\varphi_0} \left[ E_f^{(Ne)} (\varphi) - E_i^{(Ne)} \right]d\varphi, 
$$
we can formulate the reduction of such an $N$ electron process to the single-electron ionization as
\begin{eqnarray}
S_{Ne} = N S \left( \bar{\kk}, \bar{\k}, \bar{I_p} + \frac 12 \left[ \sigma^2_{\kk} + \sigma^2_{\k}\right]\right),
\end{eqnarray}
where $S$ is the single-electron action defined in Equation (\ref{Ch3_SingleElectronIonizAction}). Therefore, the ionization rates (\ref{Ch3_GeneralRate}) are generalized to the case of simultaneous $N$ electron ionization.

Furthermore, the mean-deviation parametrization can be performed even in the case of a ``continuous'' ensemble of electrons,
\begin{eqnarray}
E_f^{(Ne)}(\varphi) = \frac 12 \int d^3 \P\, \rho(\P) [\P + \A(\varphi)]^2 =  \frac 12[\overline{\pp} + A(\varphi)]^2 + \frac 12 \left( \overline{\p}^2 + \sigma_{\parallel}^2 + \sigma_{\perp}^2 \right),
\end{eqnarray}
where $\sigma_{\parallel, \perp}$ are the standard deviations of the momenta and $\overline{p_{\parallel, \perp}}$ are the mean momenta given by
$$
\sigma_{\parallel, \perp}^2 = \overline{p_{\parallel, \perp}^2} - \overline{p}_{\parallel, \perp}^2, \qquad
\overline{p_{\parallel, \perp}^n} = \int d^3\P \, \rho(\P) p_{\parallel, \perp}^n, \quad (n=1,2)
$$
and $ \rho(\P)$ is the momentum distribution function of the ensemble.

The mean-deviation reduction can be evidently used in many-electron non-sequential (correlated) processes and other processes in external fields. In fact, we have already applied this trick to the two-electron \gls{SEE} process studied in Chapter \ref{chapter5} [see the interpretation of Equation (\ref{Ch5_SingleElectronIonizationAndSEE})].

However, the role of Coulomb interactions remains the most challenging question in many-electron phenomena. The presented mean-deviation parametrization can be most naturally introduced only in the \gls{SFA}, and currently it is not obvious how to account for the Coulomb interactions within this formalism. Perhaps, one can develop the \rm{SF-EVA}-like approach to the $N$ electron problem;  an alternative path could be to use  the kinetic theory \cite{Liboff1998}. 

The mean-deviation parametrization, besides being just a computational trick, can be used to visualize and process experimental data obtained from many-particle coincidence measurements (recently quadruple coincidence measurements have became feasible \cite{Yamazaki2009}). According to the mean-deviation parametrization, the averages and standard deviations of momenta contain as much information as all the components of the momenta; hence, we have a reduction of a many-dimensional data set, which cannot be plotted directly, to merely a two-dimensional plot. 

As far as molecular single-electron ionization is concerned, the combination of the Hislop-Sigal geometrical approach to tunnelling and the fast marching method (both have been presented in Section \ref{Ch6_Sec2}) forms a powerful toolkit for studying strong filed phenomena in polyatomic molecules within the quasiclassical approximation. Furthermore, the demonstrated simplicity of the shapes of leading tunnelling
trajectories may encourage future developments of analytical
models of molecular ionization. However, we note that the geometrical approach has
a fundamental limitation -- it does not account for effects of molecular
orbitals, and there is no a priori way of including these effects. In
spite of that, one may always attempt to introduce such corrections in a
heuristic manner, e.g., multiplying the geometrical rates by a Dyson orbital.   

In Chapter \ref{chapter6}, we modelled a molecule by a single-electron multi centre potential, hence discarding effects of electron-electron correlations. Nevertheless, the geometrical approach to tunnelling reviewed in Section \ref{Ch6_Sec2} can account for these effects after an appropriate adaptation that is presented in Reference \cite{Sigal1988}. Intuitively speaking, according to such a method, the leading tunnelling trajectory of the system is selected such that the minimum number of electrons are displaced during tunnelling. More importantly, the fast marching method can be also utilized to obtain this leading tunnelling trajectory. Since correlation dynamics of electrons plays an important role in molecular ionization leading to interesting novel effects \cite{Walters2010}, applications of the geometrical ideas developed in Reference \cite{Sigal1988} to molecular ionization should be the aim of subsequent studies.

\appendix
\chapter*{APPENDICES}
\addcontentsline{toc}{chapter}{APPENDICES}
\chapter{The Phase of Ionization of the Recolliding Electron as a Function of the Phase of Recollision}\label{Appendix_1}

Employing the \gls{SFA}, we write the formula corresponding to the digram of \gls{NSDI} (Figure \ref{Ch4_Fig4}) 
\begin{eqnarray}\label{Ch4_SFA_S-matrix}
\ket{\Psi(t)} &\sim& \int_{t_i}^t dt_b \int_{t_b}^t dt_r \int d^3 \K \, \U(t, t_r) \frac 1{r_{12}} \ket{\K g^+}\bra{g^+ \K} \hat{V}_L(t_b) \ket{gg} \times \nonumber\\
&&\exp\left\{ -\frac i2 \int_{t_b}^{t_r} [\K + \A(\tau)]^2 d\tau +i|E_{g^+}|(t_r - t_b) + i|E_{gg}|(t_b-t_i) \right\},
\end{eqnarray}
where $\U(t, t_r)$ is the evolution operator of the studied system, $r_{12}$ is the distance between the electrons, $\hat{V}_L(t_b)$ is the interaction between the ionized electron and the laser field, and $E_{g^+}$ and $E_{gg}$ are energies of the states $\ket{g^+}$ and $\ket{gg}$, respectively.

We use the saddle point approximation to calculate the
integrals over $\K$ and $t_b$ in Equation (\ref{Ch4_SFA_S-matrix}).  The phase
of the integral over $\K$ has the following form:
$$
S_1(\K) = -\frac 12 \int_{t_b}^{t_r}  \left[ \K + \A(\tau) \right]^2 d\tau.
$$
The saddle point of this integral is given by
\begin{equation}\label{Ch4_Ksaddle}
\K^* = \frac{-1}{t_r - t_b} \int_{t_b}^{t_r} \A(\tau)d\tau,
\end{equation}
with the restriction $t_r \neq t_b$.  Note that generally
speaking, $\K^*$ can be complex since $t_b$,  as will be
clarified below, is complex for $\gamma \neq 0$.
 The phase of  the integral over $t_b$ in Equation (\ref{Ch4_SFA_S-matrix}) reads
$$
S_2 (t_b) = -\frac 12 \int_{t_b}^{t_r} [\K^* + \A(\tau)]^2 d\tau  +|E_{g^+}|(t_r - t_b) + |E_{gg}|(t_b-t_i).
$$
Hence, the saddle point $t_b(t_r)$ is a function of $t_r$ and given as a solution of the following equation
\begin{equation}\label{Ch4_EqTbofTr}
\frac{\cos(\omega t_r) - \cos[\omega t_b(t_r)]}{\omega[t_r-t_b(t_r)]} + \sin[\omega t_b(t_r) ] = \pm i\gamma,
\end{equation}
where $\gamma$ is the Keldysh parameter for the first electron,
\begin{equation}\label{Ch4_KeldyshParameter}
\gamma = \frac{\omega\sqrt{2(E_{g^+}-E_{gg})}}F = \frac{\omega}F \sqrt{2I_p^{(2)}}.
\end{equation}
Methods of computing the saddle points [Equations (\ref{Ch4_EqTbofTr}) and
(\ref{Ch4_DeltaE})] have been  widely discussed in the literature (see,
for example, References \cite{deMorissonFaria2004, deMorissonFaria2004B, deMorissonFaria2005} and references therein). We shall use a general and simple approach for identifying
correct saddle points between different solutions of the saddle-point equations in the complex plane (see Sec. \ref{Ch4_s3}).

It is convenient to introduce the following phases: $\phi_b =
\omega t_b$ and $\varphi_r = \omega t_r$. According to Equation
(\ref{Ch4_EqTbofTr}),  the saddle point $\phi_b$ is a complex
double-valued function of $\varphi_r$ which can be given by
$\phi_b(+\gamma; \varphi_r)$ and $\phi_b(-\gamma; \varphi_r)$,
where $\gamma$ is the Keldysh parameter (\ref{Ch4_KeldyshParameter}).
Here, the complex single-valued function $\phi_b(\gamma;
\varphi_r)$ is defined as a solution of the following
transcendental equation:
\begin{equation}\label{Ch4_EqPhiBpfPhiR}
\frac{\cos \varphi_r - \cos \phi_b (\gamma; \varphi_r) }{\varphi_r - \phi_b(\gamma; \varphi_r)} + \sin \phi_b(\gamma; \varphi_r) = i\gamma.
\end{equation}
No analytical solution of such an equation is available.
The special case of the function $\phi_b(\gamma; \varphi_r)$ for $\gamma=0$,
\begin{equation}\label{Ch4_EqPhiBpfPhiRG0}
\frac{\cos \varphi_r - \cos \varphi_b(\varphi_r) }{\varphi_r - \varphi_b(\varphi_r)} + \sin \varphi_b(\varphi_r) = 0,
\end{equation}
is very important because of the following two reasons.

First, the function $\varphi_b(\varphi_r)$ is a real valued
function for real $\varphi_r$ (and single-valued for any complex
$\varphi_r$); this allows one to interpret the motion of the first
electron in terms of classical trajectories. The
function $\varphi_b (\varphi_r)$ is defined on the interval $(\pi/2,
2\pi]$ because only during that interval can the free electron 
recollide with its parent ion. The function
$\varphi_b(\varphi_r)$ is bounded in the interval
$0\leqslant \varphi_b (\varphi_r)<\pi/2$. Second, the function $\varphi_b(\varphi_r)$ can be physically
understood as a tunneling limit ($\gamma \ll 1$) of
$\phi_b(\gamma; \varphi_r)$  (i.e., the low-frequency limit).

In terms of the laser phase, the
vector potential $\A(\varphi)$ is
$$
\A(\varphi) = - (\mathbf{F}/\omega) \sin\varphi.
$$

The difference between Equation (\ref{Ch4_EqPhiBpfPhiR}) and Equation
(\ref{Ch4_EqPhiBpfPhiRG0}),  which connect the phase of recollision
$\varphi_r$ with the phase of ionization $\varphi_b(\varphi_r)$ or
$\phi_b(\gamma; \varphi_r)$, is also important for the last step
 of \gls{NSDI} -- the release of the  two electrons following the recollision at $\varphi_r$.

Calculating the integral over $t_r$ in Equation (\ref{Ch4_SFA_S-matrix}) by the saddle point approximation, we need to obtain the transition point
$\varphi_r^0$ for negligible $\gamma$. It is the solution of the  saddle-point equation
\begin{equation}\label{Ch4_DeltaE}
\Delta E (\varphi_r^0) \equiv
\frac 12 \left[\K_1 + \A(\varphi_r^0)\right]^2 +  \frac 12 \left[\K_2 + \A(\varphi_r^0)\right]^2 - \frac 12 \left[\A(\varphi_r^0) - \A(\varphi_b( \varphi_r^0))\right]^2+I_p^{(2)}=0,
\end{equation}
such that
$$
\pi/2 < \Re \varphi_r^0 \leqslant 2\pi,
$$
where $I_p^{(2)} = |E_{g^+}|$ is the ionization potential of the second electron. For $\gamma \neq 0$, the equation is
\begin{equation}\label{Ch4_DeltaEGamma}
\Delta E (\gamma; \varphi_r^0) \equiv
\frac 12 \left[\K_1 + \A(\varphi_r^0)\right]^2 +
\frac 12 \left[\K_2 + \A(\varphi_r^0)\right]^2 - \frac 12 \left[\A(\varphi_r^0) - \A(\phi_b(\gamma; \varphi_r^0))\right]^2+I_p^{(2)}=0,
\end{equation}
where $\phi_b(\gamma; \varphi_r^0)$ now depends on $\gamma$. Note that Equations
(\ref{Ch4_DeltaE}) and (\ref{Ch4_DeltaEGamma}) are basically Equation
(\ref{Ch2_TransitionPoint_Summary}) written in slightly different notations.

If the solution of Equation (\ref{Ch4_DeltaE}) on the interval $(\pi/2,
2\pi]$ is real, then direct collisional ionization is possible.
However, we are interested in the deep quantum regime when the
following inequality is valid for the second electron:
$$
I_p^{(2)} > 3.17 U_p.
$$
By introducing the Keldysh parameter for the second electron
$
\gamma_2 = \sqrt{ I_p^{(2)}/(2U_p) },
$
we can write the last inequality as
\begin{equation}\label{Ch4_ConditionCMI}
\gamma_2 > 1.26.
\end{equation}

Equation (\ref{Ch4_ConditionCMI}) physically means that the returning
electron does not have enough energy to free the second electron.

When recollision energy is not sufficient for collisional
ionization, transition requires help from the laser field.
Mathematically, the arising integral is similar to those in the
adiabatic approximation (see Chapter \ref{chapter2}). The energy gap $\Delta
E(\varphi_r)$ in Equations (\ref{Ch4_DeltaE}) and (\ref{Ch4_DeltaEGamma})
plays the role of the transition energy for the non-adiabatic
transition [the term $E_i(t)-E_f(t)$ in Equation (\ref{Ch2_PropNonAdiabat_Summary})].
The peculiarity of $\Delta E(\gamma; \varphi_r)$ given by Equation
(\ref{Ch4_DeltaEGamma}) is that it need not be  real even for real
$\varphi_r$, since in the term $\left[
A(\varphi_r)-A(\phi_b(\gamma; \varphi_r))\right]^2$ the phase
$\phi_b$ is complex. This subtle aspect underscores the important
difference between using the solutions of Equation (\ref{Ch4_EqPhiBpfPhiR})
or Equation (\ref{Ch4_EqPhiBpfPhiRG0}) for the phase of ionization
$\phi_b$. For classical trajectories, where $\varphi_b(\varphi_r)$
[Equation (\ref{Ch4_EqPhiBpfPhiRG0})] is real for real $\varphi_r$, $\Delta
E(\varphi_r)$ [Equation (\ref{Ch4_DeltaE})] is also real for real
$\varphi_r$. This is the standard assumption for the adiabatic approximation. When the complex phase of ionization $\phi_b(\gamma;
\varphi_r)$ [Equation (\ref{Ch4_EqPhiBpfPhiR})] is used, i.e., when
``quantum'' trajectories for recollision are used, $\Delta
E(\gamma; \varphi_r)$ [Equation (\ref{Ch4_DeltaEGamma})]  need not be real
for real $\varphi_r$.  In References \cite{Moyer_2001}, the
Dykhne method (i.e., the adiabatic approximation for a system with a discreet spectra) has been generalized for this case, provided
that the complex function $\Delta E(\gamma; \varphi)$ satisfies
the Schwarz reflection principle (recently, this result was
confirmed and further generalized in Reference \cite{Schilling_2006}).

The function $\Delta E(\varphi)$ (\ref{Ch4_DeltaE}) obeys the Schwarz reflection
principle, i.e., $ \Delta E^*(\varphi^*) = \Delta E(\varphi).$
Hence, we can conclude that if $\varphi_r^0$ is the solution that
lies in the lower half-plane,  then $\left(\varphi_r^0\right)^*$ is
the solution that lies in the upper half-plane. Furthermore, it
can be easily proven that the following equation takes place for
any function $\Delta E(\varphi)$ which satisfies the Schwarz
reflection principle and any complex number $\varphi_r^0$:
$$
\Im\int_{\Re\varphi_r^0}^{\left(\varphi_r^0\right)^*} \Delta E(\varphi)d\varphi = -\Im\int_{\Re\varphi_r^0}^{\varphi_r^0}\Delta E(\varphi) d\varphi.
$$
From the previous equation, we can see that the transition points
that lie in the lower half-plane lead  to exponentially large
probabilities, which are unphysical.
Hereafter, let $\varphi_r^0$ denote  the  solution of the equation
$\Delta E(\varphi_r^0)=0$, which is the closest to the real axis
and lies in the upper-half plane.

Before continuing our discussion, let us point out the
following simple equalities, which follow from Equation
(\ref{Ch4_EqPhiBpfPhiR}):  $\Re[ \phi_b(+\gamma;
\varphi_r)]=\Re[\phi_b(-\gamma; \varphi_r)]$ and $\Im[
\phi_b(+\gamma; \varphi_r)]=-\Im[\phi_b(-\gamma; \varphi_r)]$ for
real $\varphi_r$. Furthermore, we obtain
\begin{equation}
 \label{Ch4_GenSchwartPrin}
\phi_b^*(-\gamma; \varphi_r^*) = \phi_b(\gamma; \varphi_r), \quad
\Im\int_{\Re\varphi_r^0}^{\left(\varphi_r^0\right)^*} \Delta
E(\gamma;\varphi)d\varphi =
-\Im\int_{\Re\varphi_r^0}^{\varphi_r^0}\Delta E(-\gamma;\varphi)
d\varphi,
\end{equation}
where $E(\gamma;\varphi)$ is defined in Equation (\ref{Ch4_DeltaEGamma}).

Bearing in mind that formula (\ref{Ch4_SFA_S-matrix}) must give an
exponentially small result (which implies that the transition
point must be located in the upper-half plane) and taking into
account Equation (\ref{Ch4_GenSchwartPrin}), we define
 the phase of ionization in the case of $\gamma\neq 0$ as
\begin{equation}\label{Ch4_QuantPhaseBirth}
\Phi (\gamma; \varphi_r) = \left\{
\begin{array}{ll}
\phi_b(-\gamma; \varphi_r) & \mbox{if} \quad \Im\left( \varphi_r \right)>0, \\
\Re\left[\phi_b(\gamma;\varphi_r)\right]  &
\mbox{if}\quad \Im\left(\varphi_r\right)=0,\\
\phi_b(+\gamma; \varphi_r)  & \mbox{if} \quad \Im\left( \varphi_r \right)<0.
\end{array}
\right.
\end{equation}

Equation (\ref{Ch4_QuantPhaseBirth}) is the most consistent definition of
the {\it quantum-mechanical} phase of ionization
 of the first electron
as a function of the phase of return. Generally speaking, there
was an ambiguity in selecting the value of $\Phi(\gamma;
\varphi_r)$ for real $\varphi_r$. However, we have chosen it in
such a way due to the following reason. The function
$\Im\left[\Phi(\gamma; \varphi_r)\right]$ has a jump discontinuity
on the real axis, but the function $\Re\left[\Phi(\gamma;
\varphi_r)\right]$ has a removable discontinuity that can be
removed by employing the equality
$$
\Re\left[\phi_b(\gamma;\varphi_r)\right] \equiv \frac 12 \left[\phi_b(\gamma; \varphi_r+i0) + \phi_b(-\gamma; \varphi_r-i0)\right]
\quad (\mbox{for real } \varphi_r ).
$$
Furthermore, the function $\Phi(\gamma; \varphi_r)$ obeys the
Schwarz reflection principle [$\Phi^*(\gamma; \varphi_r^*) =\Phi(\gamma; \varphi_r) $], and the following equality takes place:
$$
\Phi(0; \varphi_r) = \varphi_b( \varphi_r).
$$
It is essential that according to definition
(\ref{Ch4_QuantPhaseBirth}), the function $\Phi(\gamma; \varphi_r)$ is
real-valued on the real axis and thus allows the identical
interpretation in terms of the classical trajectories as for
$\varphi_b( \varphi_r)$. Therefore, the
definition of $\varphi_r^0$ and inequality (\ref{Ch4_ConditionCMI})
are unchanged in the case of $\gamma\neq  0$.

\chapter{Coulomb Corrections in \glsentryname{NSDI} within the Strong-Field Eikonal-Volkov Approach}\label{Appendix_2}

The key step in dealing with the singularities of the Coulomb
potentials during the recollision is to partition the
electron-electron and electron-ion interactions in the
two-electron Hamiltonian as follows:
\begin{eqnarray}\label{Ch4_PotentialsIntro}
V_{ee} &\equiv& V_{ee} - V_{ee,lng} + V_{ee,lng} = V_{ee,lng} + \Delta V_{ee,shr}, \nonumber\\
V_{en} &\equiv& V_{en} - V_{en,lng} + V_{en,lng} = V_{en,lng} + \Delta V_{en, shr}.
\end{eqnarray}
The potential $V_{ee,lng}$ has a long-range behavior identical to $V_{ee}$, but no
singularity at the origin, and $\Delta V_{ee, shr}$ is singular but short-range potential.
The same applies to $V_{en,lng}$ and $\Delta V_{en,shr}$. We choose
\begin{equation}\label{Ch4_Potentials}
\Delta V_{en,shr}(r) = V_{en}(r)\exp(-r/r_0), \quad \Delta
V_{ee,shr}(r_{12}) = V_{ee}(r_{12})\exp\left[ -
r_{12}/r_{12}^{(0)}\right],
\end{equation}
where $r_0$ and $r_{12}^{(0)}$ will be defined later.  Note that
the partitioning (\ref{Ch4_PotentialsIntro}) and (\ref{Ch4_Potentials}) has
been employed originally in the \gls{PPT} approach
 for the problem of single-electron ionization.

Now, we can write the Hamiltonian as
$$
\h(t) = \h_s(t) + \Delta V_{shr},
$$
where $\Delta V_{shr} \equiv \Delta V_{ee,shr} + \Delta V_{en, shr}$ and $\h_s$ is the rest,
which includes smoothed Coulomb potentials for
electron-electron and electron-nuclear interactions, $V_{ee,lng}$ and $V_{en,lng}$.

To first order in $\Delta V_{shr}$, the amplitude to find two electrons with momenta $\K_1, \K_2$
at the detector at the time $t$ is
\begin{eqnarray}\label{Ch4_Ampl}
a(\K_1, \K_2) = -i\int_{t_i}^t dt_r \int d^3 \K \bra{ \K_1 \K_2} \U_s (t, t_r)\Delta V_{shr} \ket{g^+ \K}
\bra{\K g^+}\U_s (t_r, t_i) \ket{gg}.
\end{eqnarray}
Approximations in Equation (\ref{Ch4_Ampl}) are first order in $\Delta V_{shr}$ and the assumption that
at the moment of recollision the ion is in its ground state. Both are well justified.

The next step is to approximate the two parts of the evolution:
before $t_r$ and after $t_r$. The key component for correlated
spectra is the second part -- after $t_r$. The main aspect of the
first part of the evolution -- prior to $t_r$ -- is to supply
an active electron with the required energy.

To simplify the amplitude $\bra{ \K_1 \K_2} \U_s (t, t_r)\Delta V_{shr} \ket{g^+ \K}$,
we insert the decomposition of unity,
\begin{eqnarray}\label{Ch4_AmplB}
 b(\K_1, \K_2, \K, t_r) &=& \bra{ \K_1 \K_2} \U_s (t, t_r)\Delta V_{shr} \ket{g^+ \K}  \nonumber\\
&=& \int\int d^3 \R_1 d^3 \R_2 \bra{\K_1 \K_2} \U_s(t,t_r) \ket{\R_1 \R_2}\bra{\R_1 \R_2} \Delta V_{shr} \ket{g^+ \K} \nonumber\\
&\approx&  \int\int d^3 \R_1 d^3 \R_2  \bra{\K_1 + \A(t_r) ,\, \K_2 + \A(t_r)} \R_1 \R_2 \rangle \bra{\R_1 \R_2} \Delta V_{shr} \ket{g^+ \K}
\nonumber\\
&& \times\exp\left[ -i\int_{t_r}^t \left\{ \frac 12 [\K_1 + \A(\tau)]^2 + \frac 12 [\K_2 + \A(\tau)]^2 +\right.\right. \nonumber\\
&& \qquad +V_{ee,lng}(\R_{12}(\tau))+ V_{en,lng}(\R_{1}(\tau))+V_{en,lng}(\R_{2}(\tau)) \Big\}d\tau\Big].
\end{eqnarray}
Here we have applied the \gls{SF-EVA} method \cite{Smirnova_2008}.  The
integral from the nonsingular parts of the electron-electron and
electron-ion interactions are calculated along the trajectories in
the laser field only. The trajectories
\begin{equation}\label{Ch4_Trajectories}
\R_{1,2}(t) = \R_{1,2} + \int_{t_r}^{t} [ \K_{1,2} + \A(\tau)] d\tau
\end{equation}
and $\R_{12}(t) = \R_1(t)-\R_2(t)$ begin at the positions $\R_1,
\R_2$ at instant $t_r$.  The bra-vectors $\bra{\K_{1,2} +
\A(t_r)}$ are plane waves. Their distortion by the
electron-electron and electron-core interactions appears in the
($\R_1, \R_2$)-dependent exponential phase factors in Equation
(\ref{Ch4_AmplB}).

Since $\Delta V_{shr}$ is a short-range potential and $\ket{g^+}$
is limited within a  characteristic ionic radius, the term
$\bra{\R_1 \R_2}\Delta V_{shr}\ket{g^+ \K}$  allows us to fix the
initial values of $\R_1$ and $\R_2$. The characteristic radius for
the partitioning of the Coulomb potential into the short-range and
long-range parts is set as  $r_0=r_{12}^{(0)} = 1/\left|
E_{g^+}\right|$, where $E_{g^+}$ denotes the energy level of the second (bound)
electron. Therefore, we pull the exponential factor out of
the integral in Equation (\ref{Ch4_AmplB}) with $r_0=r_{12}^{(0)} =
1/\left| E_{g^+}\right|$ and $r_1 = r_2 =0$,
\begin{eqnarray}\label{Ch4_AmplB2}
&& b(\K_1, \K_2, \K, t_r) \approx \bra{\K_1 + \A(t_r) \, \K_2 + \A(t_r)} \Delta V_{shr} \ket{g^+ \K}\exp\left[ -i\int_{t_r}^t \left\{ \frac 12 [\K_1 + \A(\tau)]^2 +\right.\right. \nonumber\\
&& \qquad  \left.\left. +\frac 12 [\K_2 + \A(\tau)]^2 +V_{ee,lng}(\R_{12}(\tau))+ V_{en,lng}(\R_{1}(\tau))+V_{en,lng}(\R_{2}(\tau)) \right\}d\tau\right].
\end{eqnarray}
Effects of the long-range tails of $V_{ee}$ and $V_{en}$ appear in
the exponent  while the collisional transition is govered by the
short-range interaction \cite{Smirnova_2008}. The states
$\ket{\K}$, so far, represent any convenient basis set of
continuum sates in the laser field.

To simplify the amplitude
$$
c(\K, t_r) = \bra{\K g^+} \U_s(t_r, t_i)\ket{gg},
$$
we note that the second electron is bound during the whole
evolution,  and hence we can simplify $c(\K, t_r)$ using single
active electron approximation. In this approximation, $\U_s(t_r,
t_i)$ describes one-electron dynamics in the self-consistent
potential of the ionic core,
$$
V_{sc} (\R_1) = \bra{g^+} V_{ee,lng}(\R_{12})+ V_{en,lng}(\R_{1})+V_{en,lng}(\R_{2})\ket{g^+}.
$$
The effective Hamiltonian for evolution between $t_i$ and $t_r$ is
$$
\h_{sc} (\R_1, t) = \hat{K}_1 + V_{sc}(\R_1) + V_L (\R_1, t),
$$
where $\hat{K}_1$ is the kinetic energy operator and $V_L(\R_1,
t)$ is  the interaction with the laser field. Now the amplitude
$c(\K, t_r)$ becomes
\begin{equation}\label{Ch4_Ampl2}
c(\K, t_r) = -i \int_{t_i}^{t_r} dt_b \bra{\K} \U_{sc} (t_r, t_b) V_L(\R_1, t_b) \ket{g_D}\exp\left[i\left|E_{gg}\right| (t_b-t_i)\right],
\end{equation}
where $\ket{g_D} =\langle g^+_2 \ket{gg}$ is proportional to the Dyson orbital between the ground states of the neutral and ion and $E_{gg}$ is the energy of the ground state $\ket{gg}$.

The ionic potential contributes to the propagator in Equation (\ref{Ch4_Ampl2}) twice: when the electron leaves
the atom near $t_b$ and when it returns to the ionic core near $t_r$. The contribution ``on the way out''
introduces standard Coulomb correction
\cite{Perelomov_1966, Perelomov_1967_A, Perelomov_1967_B, Perelomov_1968, Popruzhenko2008b, Popruzhenko2008a}
to the ionization amplitude and hence affects the overall height of the final two-electron distribution.
The contribution of $V_{sc}$  ``on the way in'' affects the spatial structure of the recolliding wave packet.
As shown in References \cite{Smirnova_2007, Smirnova_2008}, for short collision times the Coulomb-laser coupling
is small and $V_{sc}$ ``on the way in'' can be included in the adiabatic approximation,
\begin{eqnarray}\label{Ch4_Ampl3}
c(\K, t_r)\ket{\K_{ev} g^+} \approx -i R_C \int_{t_i}^{t_r} dt_b \ket{\K_{ev} g^+}\bra{\K + \A(t_b) -\A(t_r)} V_L(t_b) \ket{g_D}\times \nonumber\\
 \exp\left[ -\frac i2 \int_{t_b}^{t_r} [\K + \A(\tau)-\A(t_r)]^2d\tau +i\left|E_{g^+}\right|(t_r - t_b) + i\left| E_{gg} \right|(t_b - t_i)\right]
\end{eqnarray}
Here $\ket{\K_{ev}}$ is the field-free continuum wave function in
the  eikonal approximation, which includes distortions of the
incoming plane wave with asymptotic momentum $\K$, $\bra{ \K +
\A(t_b) - \A(t_r)}$ is a plane wave, and $R_C$  is the Coulomb
correction to the ionization amplitude which compensates for
approximating $\bra{ \K + \A(t_b) - \A(t_r)}$ with a plane wave in
the matrix element $\bra{\K + \A(t_b) -\A(t_r)} V_L(t_b)
\ket{g_D}$.

Now, putting together Equations (\ref{Ch4_Ampl3}) and (\ref{Ch4_AmplB2}) and
changing the integration variable $\K \to \K + \A(t_r)$, we arrive
at
\begin{eqnarray}\label{Ch4_GenGamma}
 a(\K_1, \K_2) &\approx& -\int_{t_i}^t dt_r \int_{t_i}^{t_r} dt_b \int d^3 \K \int d^3 \R_1 d^3 \R_2 \,   \nonumber\\
&& \times\exp\left[ -\frac i2 \int_{t_b}^{t_r} [\K + \A(\tau)]^2 d\tau
+i\left|E_{g^+}\right|(t_r - t_b) + i\left| E_{gg} \right|(t_b - t_i)-\right.\nonumber\\
&& -i\int_{t_r}^t \left\{ \frac 12 [\K_1 + \A(\tau)]^2 + \frac 12 [\K_2 + \A(\tau)]^2 \right\}d\tau \nonumber\\
&& -i\int_{t_r}^t \left\{ V_{ee,lng}(\R_{12}(\tau))+ V_{en,lng}(\R_1( \tau))+ V_{en,lng}(\R_2( \tau)) \right\}d\tau \Bigg] \nonumber\\
&&\times \bra{\K_1 + \A(t_r), \, \K_2 + \A(t_r)} \R_1 \R_2 \rangle \nonumber\\
&&\times\bra{\R_1 \R_2}
\Delta V_{shr} \ket{g^+,\, \K_{ev}+\A(t_r)}R_C\bra{\K+\A(t_b)} V_L(t_b) \ket{g_D}.
\end{eqnarray}
Note that if the Coulomb corrections $V_{ee,lng}$ and $V_{en,lng}$ are ignored in the exponent of Equation (\ref{Ch4_GenGamma}), then Equation (\ref{Ch4_GenGamma}) coincides with Equation (\ref{Ch4_SFA_S-matrix})  within exponential accuracy.

\chapter{Upper Bounds for Matrix Elements and Transition Amplitudes}\label{Appendix_3}

We derive a multi-dimensional generalization of the Landau method of calculating quasiclassical matrix elements [Equation (\ref{Ch6_Appendix_A_LandauQuasiClassMatrElem})], and we also estimate perturbation theory transition amplitudes [Equations (\ref{Ch6_AppendixA_UpperBoundA2}) and (\ref{Ch6_AppendixA_UpperBoundA1})] in terms of the Agmon distance.
 
For the sake of simplicity, the argument ${\bf x}$ will be omitted in some equations below. Throughout this Appendix, we assume that the Agmon upper bounds \cite{Agmon1982} for bound states ($\psi_n$) are valid, i.e., $\forall \epsilon > 0$ $\exists c_n \equiv c_n(\epsilon)$, $0 < c_n < \infty$, such that 
\begin{eqnarray}\label{Ch6_AppendixA_UppBoundAssump}
 |\psi_n | \leqslant c_n e^{-(1-\epsilon) \rho_n},
\end{eqnarray}
where $\rho_n = \rho_{E_n}$.

Let us choose an arbitrary $\epsilon > 0$. Employing the Schwartz inequality and assumption (\ref{Ch6_AppendixA_UppBoundAssump}), we obtain
\begin{eqnarray}\label{Ch6_AppendixA_Ineq}
\left| \int \psi_p^* V \psi_q d {\bf x} \right|^2  
&=&  \left| \int e^{(1-\epsilon)(\rho_p + \rho_q)} \psi_p^* \psi_q e^{-(1-\epsilon)(\rho_p + \rho_q)} V d {\bf x}\right|^2  \nonumber\\
&\leqslant& B_{p, q}^2 \int \left| e^{(1-\epsilon)(\rho_p + \rho_q)} \psi_p^* \psi_q \right|^2  d {\bf x} \nonumber\\
&\leqslant& B_{p, q}^2 c_p^2(\epsilon') c_q^2(\epsilon')\int e^{-2(\epsilon -\epsilon')(\rho_p + \rho_q)} d{\bf x},
\end{eqnarray}
where $\epsilon > \epsilon' > 0$ and
\begin{eqnarray}
B_{p, q}^l = \int |V|^l e^{-l(1-\epsilon)(\rho_p + \rho_q)} d {\bf x}.
\end{eqnarray}
The integral $\int \exp[-2(\epsilon -\epsilon')(\rho_p + \rho_q)] d{\bf x}$ converges for all $p$ and $q$. Therefore, we have proven that $\forall \epsilon >0$ $\exists c = c(\epsilon)$, $0 < c < \infty$, such that
\begin{eqnarray}\label{Ch6_Appendix_A_LandauQuasiClassMatrElem}
\left| \int \psi_p^* V \psi_q d {\bf x} \right|^2 \leqslant c B_{p, q}^2,
\end{eqnarray}
which is the same as Equation (\ref{Ch6_Inequality_LandauQuasiclassicalMatrixElem}).

Now let us study the problem of estimating of transition amplitudes defined by means of the time dependent perturbation theory. Hereinafter, we assume that a quantum system under scrutiny has no continuum spectrum, and we shall manipulate with all the series and integrals heuristically --  assuming that they all converge, or alternatively, assuming that they are over finite range. We illustrate our idea by estimating the second order amplitude since generalization to higher orders will be evident. 

The second order transition amplitude within the time dependent perturbation theory reads
\begin{eqnarray}
A^{(2)} &=& - \int_{t_i}^{t_f} dt \int^{t_f}_t dt' \int d{\bf x} d{\bf x}' \psi_{fin}^*({\bf x}') e^{-iE_{fin}(t_f-t')}  \nonumber\\
&& \times V({\bf x}') K({\bf x}' t'|{\bf x} t)V({\bf x})\psi_{in}({\bf x})e^{-iE_{in}(t-t_i)},
\end{eqnarray}
where all the $\psi$'s are eigenstates of the system and $K$ is the propagator, which can be written as
\begin{eqnarray}\nonumber
K({\bf x}' t'|{\bf x} t) = \sum_n \psi_n({\bf x}') \psi_n^* ({\bf x}) e^{-iE_n(t' - t)};
\end{eqnarray}
whence, 
$$
|K({\bf x}' t'|{\bf x} t)| \leqslant \sum_n |\psi_n({\bf x}') \psi_n ({\bf x}) |.
$$
Using such a simple estimate as well as inequality (\ref{Ch6_AppendixA_UppBoundAssump}), we obtain
\begin{eqnarray}\label{Ch6_AppendixA_UpperBoundA2}
\frac{\left| A^{(2)}\right|}{ (t_f - t_i)^2 } &\leqslant& \frac{c_{in}c_{fin}}2 \sum_n c_n^2 B_{fin, n}^1 B_{n, in}^1
\leqslant M \sum_n B_{fin, n}^1 B_{n, in}^1,
\end{eqnarray}
where $M \equiv c_{in}c_{fin} \max_n \left\{ c_n^2 \right\} /2$, $0 < M < \infty$.

However, there is no need to confine ourself to the case when the initial and final states are eigenstates. The same idea applies to the general case of the initial ($\phi_{in}$) and final ($\phi_{fin}$) states being represented as linear expansions in the basis of the bound eigenstates, 
\begin{eqnarray}
\phi_{in} = \sum_n \langle \psi_n \ket{\phi_{in}} \psi_n, \quad
\phi_{fin} = \sum_n \langle \psi_n \ket{\phi_{fin}} \psi_n.
\end{eqnarray}
Let us found an upper bound for the first order transition amplitude, which is as follows
\begin{eqnarray}
A^{(1)} &=& -i\int_{t_i}^{t_f} dt \int d{\bf x} \sum_{n, n'} \langle \phi_{fin} \ket{\psi_n} \psi_n^*({\bf x}) e^{-iE_n(t_f-t)} \nonumber\\
&& \times V({\bf x}) \langle \psi_{n'} \ket{\phi_{in}} \psi_{n'}({\bf x}) e^{-iE_{n'}(t-t_i)}.
\end{eqnarray}
Whence, we readily obtain
\begin{eqnarray}\label{Ch6_AppendixA_UpperBoundA1}
\frac{\left|A^{(1)}\right|}{t_f-t_i} &\leqslant& \sum_{n, n'} c_n c_{n'} |\langle \phi_{fin} \ket{\psi_n} \langle \psi_{n'} \ket{\phi_{in}}| B_{n, n'}^1 
\leqslant M \sum_{n,n'} B_{n, n'}^1,
\end{eqnarray}
where $M \equiv \max_{n, n'} \left\{ c_n c_{n'} |\langle \phi_{fin} \ket{\psi_n} \langle \psi_{n'} \ket{\phi_{in}}| \right\}$, $0 < M < \infty$.



\chapter*{Permission's Page}
\addcontentsline{toc}{chapter}{Permission's Page}

Much of the authorÕs research related to this thesis has previously appeared in publications:
\begin{itemize}
\item {\bf Chapter \ref{chapter2}:} D. I. Bondar, W.-K. Liu, and G. L. Yudin. Adaptation of the modified adiabatic approximation to strong-field ionization.
{\it Phys. Rev. A}, 79:065401, 2009 
\ifthenelse{\boolean{ElectronicVersion}}
	{[Preprint \href{http://arxiv.org/abs/0906.1284}{arXiv:0906.1284}].}
	{[Preprint arXiv:0906.1284].} 
	Copyright (2009) by the American Physical Society.
\item {\bf Chapter \ref{chapter3}:} D. I. Bondar. Instantaneous multiphoton ionization rate and initial distribution of electron momentum. {\it Phys. Rev. A},  78:015405, 2008
\ifthenelse{\boolean{ElectronicVersion}}
	{[Preprint \href{http://arxiv.org/abs/0805.1890}{arXiv:0805.1890}].}
	{[Preprint arXiv:0805.1890].}
	Copyright (2008) by the American Physical Society.
\item {\bf Chapter \ref{chapter4}:} D. I. Bondar, W.-K. Liu, and M. Yu. Ivanov. Two-electron ionization in strong laser fields below intensity threshold: signatures of attosecond timing in correlated spectra. {\it Phys. Rev. A}, 79:023417, 2009
\ifthenelse{\boolean{ElectronicVersion}}
	{[Preprint \href{http://arxiv.org/abs/0809.2630}{arXiv:0809.2630}].}
	{[Preprint arXiv:0809.2630].}
	Copyright (2009) by the American Physical Society.
\item  {\bf Chapter \ref{chapter5}:} D. I. Bondar, G. L. Yudin, W.-K. Liu, M. Yu. Ivanov, and A.~D.~Bandrauk. Non-sequential double ionization below laser-intensity threshold: anticorrelation of electrons without excitation of parent ion. {\it Submitted to Phys. Rev. A}, 2010
\ifthenelse{\boolean{ElectronicVersion}}
	{[Preprint \href{http://arxiv.org/abs/1009.2072}{arXiv:1009.2072}].}
	{[Preprint arXiv:1009.2072].}
\item {\bf Chapter \ref{chapter6}:}  D. I. Bondar and W.-K. Liu. Shapes of leading tunnelling trajectories for single-electron molecular ionization. {\it Submitted to J. Phys. A}, 2010
\ifthenelse{\boolean{ElectronicVersion}}
	{[Preprint \href{http://arxiv.org/abs/1010.2668}{arXiv:1010.2668}].}
	{[Preprint arXiv:1010.2668].}
\end{itemize}
According to the Copyright FAQ of the American Physical Society, the author of this thesis has the right to use the listed above articles or portions of these articles in the thesis without requesting permission from the American Physical Society. 

\cleardoublepage


\renewcommand{\bibname}{References}
\addcontentsline{toc}{chapter}{\textbf{\bibname}}


\bibliographystyle{unsrt}
\bibliography{thesis}


\end{document}